\begin{document}

\title{A methodology to design distributed algorithms for mobile entities:  the pattern formation problem as case study%
\thanks{The work has been supported in part by the Italian National Group for Scientific Computation (GNCS-INdAM).}
}

\author{%
Serafino Cicerone\inst{1},  
Gabriele Di Stefano\inst{1}, 
Alfredo Navarra\inst{2}
}

\institute{Dipartimento di Ingegneria e Scienze dell'Informazione e Matematica,
        Università  degli Studi dell'Aquila, I-67100 
        L'Aquila, Italy.
\email{serafino.cicerone@univaq.it},
\email{gabriele.distefano@univaq.it}
\and 
Dipartimento di Matematica e Informatica,
        Università degli Studi di Perugia I-06123 
        Perugia, Italy.
\email{alfredo.navarra@unipg.it} 
}

\maketitle

\begin{abstract}
Following the wide investigation in distributed computing issues by mobile entities of the last two decades, we consider the need of a structured methodology to tackle the arisen problems. The aim is to simplify both the design of the resolution algorithms and the writing of the required correctness proofs. We would encourage the usage of a common framework in order to help both algorithm designer and reviewers in the intricate work of analyzing the proposed resolution strategies. In order to better understand the potentials of our methodology, we consider the Pattern Formation (PF) problem approached in [Fujinaga et al. \emph{SIAM J. Comput.}, 2015] as case study. Since the proposed resolution algorithm has turned out to be inaccurate and also of difficult fixing, we design a new algorithm guided by the proposed methodology, hence fully characterizing the problem.
\end{abstract}

\keywords{Distributed Algorithms\and Mobile Entities\and Asynchrony\and Pattern Formation} 

\newcommand{\U}{\mathcal{U}} 
\newcommand{\A}{\mathbb{A}}  
\newcommand{\I}{\mathcal{I}} 
\newcommand{\IA}{\mathcal{I}_{\A}} 

\newcommand{\Ann}{\mathit{Ann}} 
\newcommand{\Rob}{\mathit{Rob}} 

\newcommand{\Sector}{\mathit{Sector}}

\newcommand{\CT}{C^T} 
\newcommand{\CB}{C^B} 

\newcommand{\GoToC}{\mathtt{GoToC^T}} 
\newcommand{\Distmin}{\mathtt{Distmin}} 
\newcommand{\Circle}{\mathtt{CircleForm}} 
\newcommand{\Gathering}{\mathtt{Gathering}} 
\newcommand{\Leader}{\mathtt{Leader}} 

\newcommand{\wait}{\texttt{Wait}\xspace}
\newcommand{\look}{\texttt{Look}\xspace}
\newcommand{\compute}{\texttt{Compute}\xspace}
\newcommand{\move}{\texttt{Move}\xspace}
\newcommand{\mult}{\mathit{mult}\xspace}

\newcommand{\AS}{\mathit{AS}\xspace}
\newcommand{\dist}{dist\xspace}

\newcommand{\safe}{\mathit{safe}\xspace}

\newcommand*{\myqed}{\hfill\ensuremath{\blacktriangleright}}%

\newcommand{\arc}[1]{{%
  \setbox9=\hbox{#1}%
  \ooalign{\resizebox{\wd9}{\height}{\texttoptiebar{\phantom{A}}}\cr#1}}
}


\newcommand{\F}{\mathcal{F}}
\newcommand{\Unew}{\mathcal{U}}
\newcommand{\C}{\mathcal{C}}
\newcommand{\R}{\mathcal{R}}
\newcommand{\M}{\mathcal{M}}
\newcommand{\T}{\mathcal{T}}

\newcommand{\Aut}{\mathit{Aut}}

\newcommand{\Proc}{\mathit{Proc}}
\newcommand{\nil}{\mathit{nil}}

\newcommand{\pre}{\mathtt{pre}}
\newcommand{\post}{\mathtt{post}}

\newcommand{\false}{\mathtt{false}}
\newcommand{\true}{\mathtt{true}}




\newcommand{\fsync}{{\sc FSync}\xspace}
\newcommand{\ssync}{{\sc SSync}\xspace}
\newcommand{\sasync}{{\sc SAsync}\xspace}
\newcommand{\async}{{\sc Async}\xspace}

\newcommand{\angolo}{\sphericalangle}
\newcommand{\argmin}{\arg\!\min}
\newcommand{\argmax}{\arg\!\max}
\newcommand{\ang}[2]{\langle #1, #2 \rangle}
\newcommand{\mathsc}[1]{\text{\textsc{#1}}}
\newcommand{\Reali}{ \mathbb{R} } 
\newcommand{\Ex}{\mathbb{E}}  
\newcommand{\Int}{\mathit{int}}  
\newcommand{\minview}{\textit{min\_view}}
\newcommand{\h}{\mathit{H}}

\newcommand{\halfline}{\mathit{hline}}
\newcommand{\Line}{\mathit{line}}


\newcommand{\pf}{\mathsc{PF}\xspace}
\newcommand{\gath}{\mathsc{Gath}\xspace}
\newcommand{\SB}{\mathsc{SB}\xspace}
\newcommand{\RS}{\mathsc{RS}\xspace}
\newcommand{\PPF}{\mathsc{PPF}\xspace}
\newcommand{\Fin}{\mathsc{Fin}\xspace}
\newcommand{\SC}{\mathsc{SC}\xspace}
\newcommand{\Term}{\mathsc{Term}\xspace}

\newcommand{\blocco}[1]{\medskip\noindent\textit{#1. }}
\newcommand{\com}[1]{{\color{gray}{\small #1}}}
\newcommand{\blue}[1]{{\color{blue}{\small #1}}}

\newcommand{\linecomment}[2]{ { \noindent \color{red}{\small [$\bullet$ \textsc{#1}: #2]}}}


\newcommand{\DecProb}[3]{%
\
\par\smallskip\noindent
\begin{tabularx}{\columnwidth}{@{\hspace{1mm}}l@{\hspace{1mm}}X@{\hspace{1mm}}}
\hline & \\[-4mm]
\multicolumn{2}{c}{\hspace{-2mm}\textsc{\small #1}\hspace*{-2mm}}\\[0.5mm]
\hline \ & \\[-3mm]
{\textsc{given}:} & #2 \\
{\textsc{problem}:~} & #3 \\
\hline
\end{tabularx}
\par\smallskip\noindent
\
}

\newcommand{\xg}{\mathtt{g}}
\newcommand{\xm}{\mathtt{m}}
\newcommand{\xa}{\mathtt{a}}
\newcommand{\xd}{\mathtt{d}}
\newcommand{\xf}{\mathtt{f}}
\newcommand{\xt}{\mathtt{t}}
\newcommand{\xu}{\mathtt{u}}
\newcommand{\xp}{\mathtt{p}}
\newcommand{\xw}{\mathtt{w}}
\newcommand{\xc}{\mathtt{c}}
\newcommand{\xduno}{\mathtt{d}_1}
\newcommand{\xddue}{\mathtt{d}_2}

\newcommand{\predue}{ \neg \xc }
\newcommand{\pretre}{ \xa \wedge \neg \xc }
\newcommand{\prequattro}{ \xa \wedge \neg \xc \wedge \xm }
\newcommand{\precinque}{ \neg \xc \wedge \xf }
\newcommand{\presei}{ \xa \wedge \neg \xc \wedge \xm \wedge \xt }
\newcommand{\presette}{ \xa \wedge \neg \xddue \wedge \neg \xu }
\newcommand{\preotto}{ \xa \wedge \neg \xduno \wedge \xu }
\newcommand{\prenove}{ \neg \xm \wedge \xp }
\newcommand{\predieci}{ \xg }
\newcommand{\preundici}{ \xw }


\section{Introduction}\label{sec:introduction}
In the last two decades there has been a rapid growth and development in the field of distributed computing by mobile entities. The aim is to study the computational and complexity issues arising in systems of decentralized entities required to accomplish global tasks. 
Depending on the entities' capabilities and the environment where the entities operate, one may ask which tasks can be performed, if not always under which conditions, and perhaps at what cost. However, one of the central questions and certainly the most investigated one, is to determine what are the minimal hypotheses that allow a given problem to be solved. 

Here we are interested in what is known in the literature as the \emph{Look}-\emph{Compute}-\emph{Move} model. In this model, entities from now on referred to as \emph{robots} operate in Look-Compute-Move (LCM) cycles.
In one cycle a robot takes a snapshot of the surrounding 
(Look). Accordingly, in the Compute phase it decides 
whether to move toward a specific target or not, and in the positive case 
it moves (Move). The accuracy or the information a robot acquires during the Look phase as well as its computing and moving skills depend on the assumed capabilities. 

A comprehensive survey about the state-of-art in this research area until 2012 can be found in~\cite{FPS12}. Very recently, a new book surveying on the advances under different settings has been released~\cite{FPS19}.

Although many high qualified researchers are involved and more and more sophisticated resolution strategies have been devised to face the arisen problems, still a structured methodology that could help in designing resolution algorithms is missing. 
The need of a methodology comes from three main observations: 1) the distributed environment might be very `hostile' in the sense that sometimes it is difficult to be sure one is considering all possible situations/events that may occur; 2) it is certainly desirable to have a list of bullets that guides and helps the design of a resolution algorithm along with the corresponding correctness proof; 3) actually there are in the literature several cases of claimed results that turned out to be only partially true or basically incorrect.
 
 To support 3) it is worth citing the detailed analysis reported in~\cite{CDN19} where well-established results like [Fujinaga et al., \emph{SIAM J. Comp.} 44(3), 2015]~\cite{FYOKY15}, more recent approaches like [Bramas et al., \emph{SSS} and \emph{PODC}, 2016]~\cite{BT16b,BT16} and ‘unofficial' results like [Dieudonn{\'{e}} et al., \emph{arXiv:0902.2851}]~\cite{DPVarx09}, that is the extended version of [Dieudonn{\'{e}} et al., \emph{DISC}, 2010]~\cite{DPV10}, revealed to require major technical revisions. Further examples can be found in [Doan et al., \emph{OPODIS}, 2017]~\cite{DBO17} where by means of model checking approaches  some imperfections or missing cases in [D'Angelo et al., \emph{Dist. Comp.} 27(4), 2014]~\cite{DDN14} have been shown; whereas [D'Emidio et al., \emph{Inf. Comput.} 263, 2018]~\cite{DDFN18} highlights some flaws arising from [Das et al. \emph{Theor. Comput. Sci.} 609, 2016]~\cite{DFPSY16}. Very recently, we also came across [Pattanayak et al. \emph{J. Parallel Distrib. Comput.} 123, 2019]~\cite{PMRM19} where the authors completely neglect to handle possible families of input symmetric configurations.\footnote{To provide some evidence of our assertion, we point out the reader to the discussion in~\cite{PMRM19} right after Theorem 3. It comes out that for instance configurations admitting more than one axis of symmetry are not considered, as well as configurations where the center of the Smallest Enclosing Circle of the robots is occupied by one robot are said to allow the election of a leader only if the total number of robots is even.}

The three motivations exposed above encouraged us to investigate on and to recommend the usage of a common framework in order to help both algorithm designer and reviewers in the intricate work of analyzing the devised resolution strategies. 
To this respect we propose a new methodology 
that highlights fundamental properties required to approach problems arising in distributed environments, helping in the design of new algorithms as well as on proving their correctness. It might also be useful to revise previous algorithms in order to better check their validity. 

This paper comes after a couple of attempts~\cite{CDN19,CDN18c} to provide formal and structured arguments to support the proposed resolution algorithms and their proofs, designed for specific problems. 
For instance, based on some arguments that here we revisit and extend, we could fix in~\cite{CDN18c} the algorithm first sketched in~\cite{CDN16}. 
We believe our investigation on a generalized and formal methodology is now mature to be proposed. In order to fully understand the potentials of our new methodology, we consider the Pattern Formation problem approached in~\cite{FYOKY15} as case study. 
As already outlined, the algorithm proposed in~\cite{FYOKY15} has turned out to be inaccurate. An attempt to provide a patch by the same authors can be found in~\cite{FYOKY17}.
However, by personnel communication the authors confirmed us that the algorithm cannot be easily fixed and that they give up with further attempts. 

Guided by the proposed methodology, we design a new algorithm that fully characterizes the considered Pattern Formation problem.


\subsection{Outline}
In the next section, we start by introducing the basic notions required to approach the distributed computing environment of  mobile robots we refer to. Section~\ref{sec:methodology} is the core of the paper as we present our detailed methodology to approach problems within the specified environment. Section~\ref{sec:notation} considers the Pattern Formation problem approached in~\cite{FYOKY15} as case study for our methodology. It introduces all additional assumptions and notation required by the definition of the specific problem and by our resolution algorithm. The formal definition of the resolution algorithm is then provided in Section~\ref{sec:algorithm}. According to the methodology, the algorithm is designed to solve various sub-problems whose composition leads to the resolution of PF. Actually we fully characterize the approached PF problem, and this is a main result on its own. An explanatory and extended example about the application of the algorithm in order to better highlight all the peculiarities of our methodology and of our new strategy is given in Section~\ref{sec:example}. Section~\ref{sec:correctness} contains the correctness proof of the proposed algorithm obtained by following the guidelines dictated by the methodology. Finally, Section~\ref{sec:concl} provides conclusive remarks, posing ideas for future investigation.

\section{Preliminaries}
Before starting presenting the methodology, we need to formalize some of the concepts already introduced and to specify some of the robots capabilities.
For instance, from now on we focus on robots moving in the Euclidean plane. Clearly all arguments we present to define our methodology can be easily extended to higher dimensions or to the case of robots moving in graphs.
Other assumptions are instead dictated by the request of the weakest hypothesis under which problems remain solvable.
As first set of weak assumptions, we consider robots to be:
\begin{itemize}
\item{Autonomous}: no centralized control;
\item{Dimensionless}: modeled as geometric points in the plane;
\item{Anonymous}: no unique identifiers;
\item{Oblivious}: no memory of past events;
\item{Homogeneous}: they all execute the same \emph{deterministic} algorithm;
\item{Silent}: no means of direct communication;
\item{Disoriented}: no common knowledge of any orientation (coordinate system, handedness, etc.);
\item{Non-rigid}: robots are not guaranteed to reach a destination within one move;
\end{itemize}

Further assumptions will be specified soon.

\subsection{LCM model}\label{sec:LCMmodel}
Each robot in the system has sensory capabilities allowing it to determine the location of other robots in the plane, relative to its own location.
 Each robot refers in fact to a \emph{Local Coordinate System} (LCS) that might be different from robot to robot. 
 The robots also have computational capabilities which allow them to compute the location where to move along with the whole trajectory to trace. Each robot follows an identical algorithm that is preprogrammed into the robot. 
This
algorithm may also provide some additional data that can be exploited during the computations. 
The behavior of each robot can be described according to the sequence of four states: \wait, \look, \compute, and \move. Such states form a computational cycle (or briefly a cycle) of a robot.
 The operations performed by each robot $r$ in each state will be now described in more details.
\begin{enumerate}
\item \wait. The robot is idle. A robot cannot stay indefinitely idle. 
\item  \look. The robot observes the world by activating its sensors which will return a snapshot of the positions of all other robots with respect to its LCS. Each robot is viewed as a point. Hence, the result of the snapshot (i.e., of the observation) is just a set of coordinates in its LCS.
\item  \compute. The robot performs a local computation according to a deterministic algorithm $\A$ (we also say that the robot executes $\A$). The algorithm is the same for all robots, and the result of the \compute phase is a destination point along with a trajectory to reach it.
\item  \move. If the destination point is the current location of $r$, $r$ performs a $\nil$ movement (i.e., it does not move); otherwise it moves toward the computed destination along the computed trajectory. 
\end{enumerate}

When a robot is in \wait we say it is \emph{inactive}, otherwise it is \emph{active}. In the literature, the computational cycle is simply referred to as the \look-\compute-\move (LCM) cycle, as during the \wait phase a robot is inactive. 

Initially robots are inactive, but once the execution of an algorithm $\A$ starts - unless differently specified - there is no instruction to stop it, i.e., to prevent robots to enter their LCM cycles. Then, the \emph{termination} property for $\A$ can be stated as follows: once robots have reached the required goal by means of $\A$, from there on robots can perform only the $\nil$ movement. 
Sometimes termination is not even required as robots might be asked to execute infinite computations, e.g., perpetual exploration~\cite{BDP17,GKMNZ08}, patrolling~\cite{CDGJNRS19,CGKKKT17,KK15}.

Note that the LCS of a robot may change within different LCM cycles.

During the \look phase, robots can perceive \emph{multiplicities}, that is whether a same point
is occupied by more than one robot. The multiplicity detection capability might be \emph{local} or \emph{global}, depending whether the multiplicity is detected only by robots composing the multiplicity or by any robot performing the \look phase, respectively. Moreover, the multiplicity detection can be \emph{weak} or \emph{strong}, depending whether a robot can detect only the presence of a multiplicity or if it perceives the exact number of robots composing the multiplicity, respectively.

About movements, a strong assumption is about the so-called \emph{rigid} movements where robots are always guaranteed to reach the destination within one LCM cycle.
A weaker assumption is what we consider, that is about \emph{non-rigid} movements: the distance traveled within a move is neither infinite nor infinitesimally small. More precisely, we can assume an adversary that has the power to stop a moving robot before it  reaches its destination. However, there exists an unknown constant $\nu > 0$ such that if the destination point is closer than $\nu$, the robot will reach it, otherwise the robot will be closer to it of at least $\nu$. Note that, without this restriction on $\nu$, an adversary would make it impossible for any robot to ever reach its destination. 

We assume that cycles are performed according to the weakest Asynchronous scheduler (\async): the robots are activated independently, and the duration of each phase is finite but unpredictable (the activation of each robot can be thought as decided by the adversary). As a result, robots do not have a common notion of time. Moreover, according to the definition of the \look phase, a robot does not perceive whether other robots are moving or not. Hence, robots may move based on outdated perceptions. In fact, due to asynchrony, by the time a robot takes a snapshot of the configuration, this might have drastically changed once the robot starts moving. The scheduler determining the cycles timing is assumed to be fair, that is, each robot becomes active and performs its cycle within finite time and infinitely often. Figure~\ref{fig:models-a} compares the \async scheduler with the other scheduler proposed in the literature. In the figure, the \wait state is implicitly represented by the time while a robot is inactive. In particular, it shows that in the Fully-synchronous (\fsync) scheduler all robots are always active, and the activation phase can be logically divided into global rounds:
 for all $i\geq 1$, all robots start the $i$-th LCM cycle simultaneously and synchronously execute each phase. 

The Semi-synchronous (\ssync) scheduler coincides with the \fsync model, with the only difference that 
some robots may not start the $i$-th LCM cycle for some $i$ (some of the robots might be in the \wait state), but all of those who have started the $i$-th cycle synchronously execute each phase.

The Semi-asynchronous (\sasync) still maintains a sort of synchronous behavior as each phase lasts the same amount of time, but robots can start their LCM cycles at different times. It follows that while a robot is performing a \look phase, other active robots might be performing the \compute or the \move phases. 

Clearly, the four synchronization schedulers induce the following hierarchy (see, e.g.~\cite{CDN18a,DFPSY16,DDFN18}): \fsync robots are more powerful (i.e. they can solve more tasks) than \ssync robots, that in turn are more powerful than \sasync  robots, that in turn are more powerful than \async robots. This simply follows by observing that  the adversary can control more parameters in \async than in \sasync, and it controls more parameters in \sasync than in \ssync and \fsync. In other words, protocols designed for \async robots also work for \sasync, \ssync and \fsync robots. Contrary, any impossibility result stated for \fsync robots also holds for \ssync, \sasync and \async robots.  

\begin{figure}
\begin{center}
\scalebox{0.6}{\input 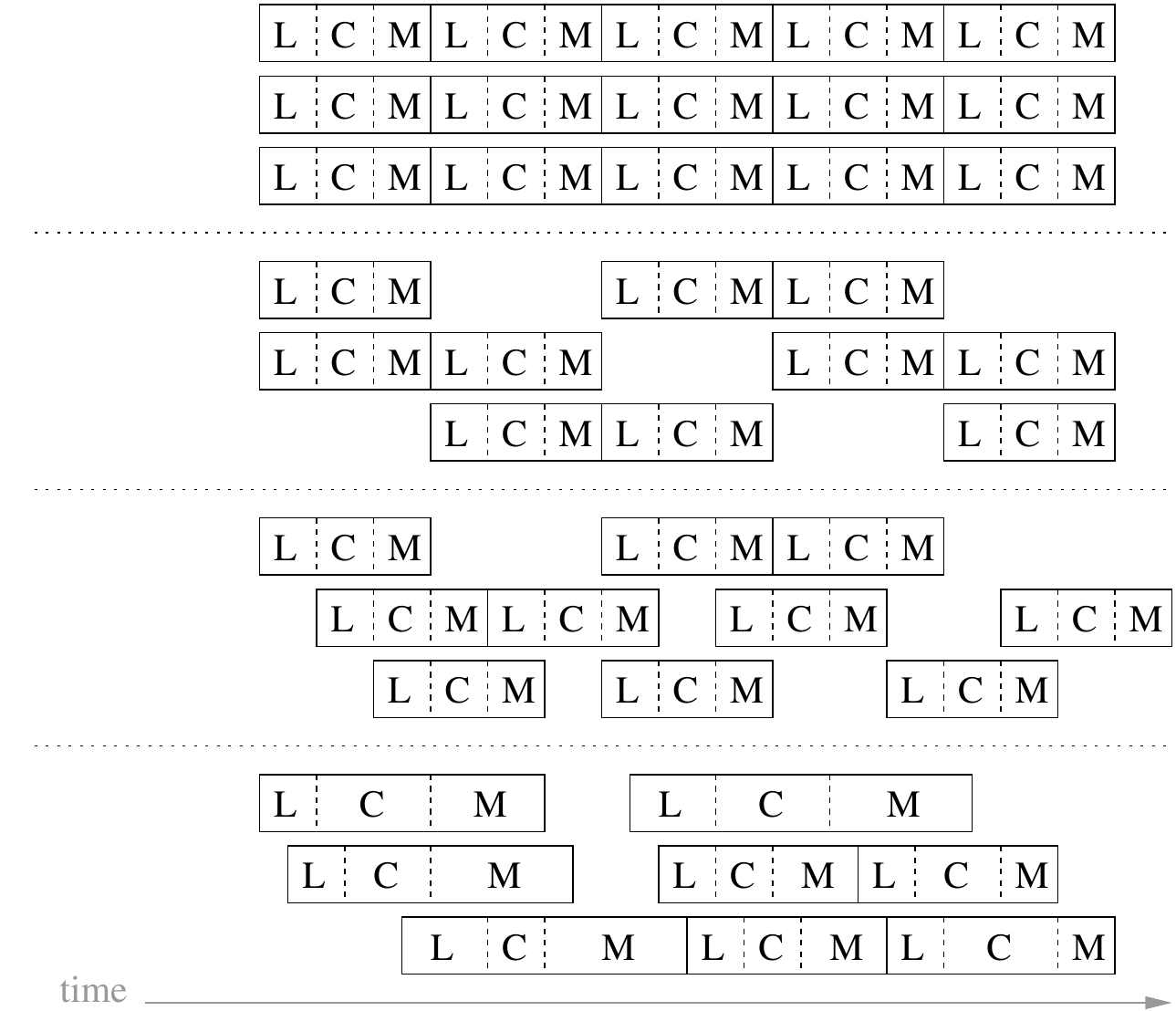_t }
\end{center}
\caption{The execution model of computational cycles for each of \fsync, \ssync, \sasync and \async\ robots. The \wait state is implicitly represented by empty time periods.}\label{fig:models-a}
\end{figure}

\subsection{Robot Model}\label{sec:intro-model}
The robot model is mainly borrowed from~\cite{CDN19,CFPS12,FYOKY15}. We consider a system composed by a set of $n$ mobile \textit{robots}. 
Let $\Reali$ be the set of real numbers, at any time the multiset $R=\{r_1,r_2,\ldots,r_n\}$, with $r_i\in \mathbb{R}^2$,  contains the \textit{positions} of all the robots. By abusing notation, we often refer to $r\in R$ as a \emph{robot} instead of a robot position.

We arbitrarily fix an $x$-$y$ coordinate system $Z_0$ and call it the \emph{global coordinate system}. A robot, however, does not have access to it. It is used only for the purpose of description, including for specifying the input. All actions taken by a robot are done in terms of its local (and current) $x$-$y$ coordinate system, whose origin always indicates its current position. Let $r_i(t)\in \Reali^2$ be the location of robot $r_i$ (in $Z_0$) at time $t$. Then a multiset $R(t) = \{r_1(t), r_2(t), \ldots, r_n(t)\}$ is called the \emph{configuration} of $R$ at time $t$ (and we simply write $R$ instead of $R(t)$ when we are not interested in any specific time). 

Each robot $r_i$ has a LCS $Z_i$, where the origin always coincides with its current location.  
Let $Z_i(p)$ be the coordinates of a point $p\in \Reali^2$ in $Z_i$. If $r_i$ takes a time interval $[t_0,t_1]$ for performing the \look phase, then it obtains a multiset $Z_i(R(t)) = \{Z_i(r_1(t)),Z_i(r_2(t)),...,Z_i(r_n(t))\}$ for some $t\in [t_0, t_1]$, where $Z_i(r_i(t)) = (0, 0)$. That is, $r_i$ has the global-strong multiplicity detection ability.
\footnote{Although our methodology might be easily extended to  weaker capabilities with respect to the multiplicity detection and most importantly concerning the visibility of the robots,  we prefer to maintain such assumptions for the easy of the discussion and because of the chosen case study.}


\subsubsection{Symmetric configurations}\label{ssec:symm-conf}
In the Euclidean plane, a map $ \varphi :\Reali^2 \rightarrow  \Reali^2$ is called \emph{isometry} or distance preserving if for any $a,b \in \Reali^2$ one has $d(\varphi(a),\varphi(b))=d(a,b)$, where $d()$ denotes the standard Euclidean distance function. 
Examples of isometries in the plane are \emph{translations}, \emph{rotations} and \emph{reflections}. An isometry $\varphi$ is a translation if there exists no point $x$ such that $\varphi(x)=x$;
it is a rotation if there exists a unique point $x$ such that $\varphi(x)=x$ (and $x$ is called \emph{center of rotation}); it is a reflection if there exists a line $\ell$ such that $\varphi(x)=x$ for each point $x\in \ell$ (and $\ell$ is called \emph{axis of symmetry}). 

Given an isometry $\varphi$ different from the identity, the \emph{cyclic subgroup} of order $p$ generated by $\varphi$ is given by $\{\varphi^0, \varphi^1=\varphi\circ\varphi^0, \varphi^2=\varphi \circ \varphi^1, \ldots, \varphi^{p-1} = \varphi\circ \varphi^{p-2}\}$, where $\varphi^0$ is the identity automorphism, $\varphi^i \neq \varphi^0$ for each $0<i<p$, and $\varphi^p = \varphi^0$.
A reflection always generates a cyclic subgroup  of order $p=2$. Whereas, the cyclic subgroup generated by a rotation can be of any finite order $p>1$.

An \emph{automorphism} of a configuration $R$ is an isometry in the plane that maps robots into robots (i.e., points of $R$ into $R$). The set of all automorphisms of $R$ forms a group with respect to the composition denoted by $\Aut(R)$ and called \emph{automorphism group} of $R$. In general (i.e., for robots completely disoriented), the isometries in $\Aut(R)$ are the identity, rotations, reflections and their compositions (translations are not possible as $R$ contains a finite number of elements). If $|\Aut(R)|=1$, that is $R$ admits only the identity automorphism, then $R$ is said to be \emph{asymmetric}, otherwise it is said to be \emph{symmetric} (i.e., $R$ admits rotations or reflections). 

If a configuration $R$ is symmetric due to an automorphism $\varphi$, two robots $r$, $r'\in R$ are \emph{equivalent} if $r'=\varphi(r)$. As a consequence, no algorithm can distinguish between two equivalent robots, and then it cannot avoid that the two \async robots start the computational cycle simultaneously. In such a case, there might be a so called \emph{pending move}, that is one of the two robots performs its entire computational cycle while the other has not started or not yet finished its \move phase, i.e. its move is pending. Clearly, any other robot is not aware whether there is a pending move, that is it cannot deduce such an information from the snapshot acquired in the \look phase. This fact greatly increases the difficulty to devise algorithms for symmetric configurations.

\subsubsection{Robots' view}\label{ssec:robot-view}
According to the capabilities of the robots, by opportunely elaborating the configuration perceived with respect to its own LCS, a robot obtains what will be later called the \emph{view} of a robot. Actually, sometimes a robot is asked to evaluate what would be the view of other robots, hence it is convenient that the view does not depend on the current LCS, as this might be completely different from cycle to cycle and from robot to robot. Hence, unless further knowledge is provided to the robots, the view should exploit only the information that all robots can equally perceive, like those concerning relative distances and angles among robots' positions.
It follows that in general, in a symmetric configuration there are robots with the same view. 
For instance, by considering a configuration with a multiplicity, then the view cannot discriminate among the robots composing the multiplicity, i.e. a configuration with a multiplicity is always perceived as symmetric.
Instead, in a symmetric configuration $R$ without multiplicities, in the stronger model with robots aware of $Z_0$, $R$ can be perceived as asymmetric by the robots as the view may exploit the coordinates of the robots to discriminate among all of them (as if they had unique identifiers).

\section{Methodology}\label{sec:methodology}
We now have all the ingredients necessary to present our new methodology. The main advantages will be to assist for
(1) designing a distributed algorithm $\A$ for solving a problem $\Pi$, and (2) proving that $\A$ is correct.

For the ease of discussion here we focus on the so-called formation problems where the goal to achieve is that of reaching a  disposal of the robots that satisfies a specified property.

Let $\R$ be the set of all the possible configurations and consider the following general robot-based computing problem: 

\begin{itemize}
\item
Let $\Pi$ be a problem that takes as input a configuration $R$ belonging to the set $\I\subseteq \R$ (the set of all initial configurations) and some static data $D$ (a description of the goal to be achieved along with other possible input data) and asks to transform $R$ into any configuration $F\in\F(D)\subseteq \R$, where $\F(D)$ is the set of final configurations for $\Pi$ with respect to $D$.%
\end{itemize}
%
%

\smallskip
As examples, consider the cases in which $\Pi$ corresponds to the Pattern Formation (\pf) problem~\cite{SY99}, or to the Gathering (\gath) problem~\cite{FPS12}. For both problems, the set $\I$ contains all configurations whose elements are distinct (i.e., no multiplicity occurs). 

The \pf problem can be defined as follows:
\begin{itemize}
\item
The set $D$ just contains a representation of the final configuration $F$ to be obtained. In particular, given a multiset $F$ of $n$ points in $\Reali^2$ expressed as $Z_0(F)$, we say that an algorithm $\A$ \emph{forms} $F$ from an initial configuration $R\in \I$ composed of $n$ robots if for each possible execution 
there exists a time instant $t>0$ in which $R(t)$ is similar%
\footnote{Let $P_1$ and $P_2$ be two multisets of points: if $P_2$ can be obtained from $P_1$ by uniform scaling, possibly with additional translation, rotation and reflection, then $P_2$ is \emph{similar} to $P_1$.}
  to $F$ 
 and $\A$ terminates (i.e., $R(t') = R(t)$ hold for each integer $t'\ge t$). The set $F$ is called the \emph{pattern.}
It follows that for the \pf problem $\F(D)\equiv \F(F)$ is the set containing all the configurations similar to the pattern $F$. 
\end{itemize}
The \gath problem is a special case of the \pf problem: it is characterized by a multiset $F$ containing one element with multiplicity $n$, thus consisting in making the robots to form a single point. Other possible formation problems might require to reach a configuration where some property holds, like for instance that no three robots are aligned, hence $\F(D)$ would be composed by the set containing all such configurations.

\smallskip
Depending on $\Pi$, there could exist a set of configurations $\Unew(D)$ whose elements represent unsolvable configurations (i.e.,  $\Pi$ is unsolvable when  $R \in \Unew(D)$). In such a case, any algorithm $\A$ able to solve $\Pi$ must transform any element of $\I\setminus \Unew(D)$ into $F \in \F(D)$. 
%
With respect to the \pf problem, the entire set $\Unew(D)$ has not been characterized so far. However, from~\cite{SY99} it is known that any initial configuration cannot admit symmetries that do not appear also in the final configuration, unless such symmetries can be broken;
in \gath, $\Unew(D)$ is any configuration with just two \async robots occupying different positions. 


\subsection{Problem decomposition into tasks}\label{ssec:methodology:decomposition}
A single robot has rather  weak capabilities with respect to the general problem it is asked to solve along with other robots (we recall that robots have no direct means of communication). For this reason, any resolution algorithm $\A$ for a problem $\Pi$ should be based on a preliminary decompositional approach: $\Pi$ should be divided into a set of sub-problems so that each sub-problem is enough simple to be thought as a ``task'' to be performed by (a subset of) robots. This subdivision could require several steps before obtaining the definition of such simple tasks, thus generating a sort of hierarchical structure. Our methodology recommends the following preliminary steps:

\begin{itemize}
\item
Define a (hierarchical) decomposition of $\Pi$ into sub-problems. Each sub-problem should be easy enough to be solved by assigning a \emph{task} $T$ to robots;
\item In order to define a task $T$ in a rigorous way, $T$ should correspond to a well-defined movement for (a subset of) robots. In particular, $T$ should be defined according to:
\begin{itemize}
\item a subset $R’\subseteq R$ of moving robots, 
\item a trajectory $\tau$ for each robot in $R'$ defined as a curve having as starting point the position of a robot in $R’$, and as final point a target position defined according to the strategy.
\end{itemize}
\end{itemize}
These preliminary steps imply the following additional considerations:

\begin{itemize}
\item According to the LCM model, during the \compute phase each robot should be able to recognize the task to be performed just according to the configuration perceived during the \look phase and the input data $D$. This recognition could be performed by providing $\A$ with a predicate $P_i$ for each task $T_i$: such a predicate, if evaluated true with respect to the perceived configuration $R$ and the input data $D$, reveals to robots that $T_i$ is the task to be performed.

\item At each time instant $t$, exactly one task $T_i$ must be recognized; hence, predicates must be mutually exclusive.

\item In order to accomplish the designed tasks, it is possible that a resolution algorithm $\A$ generates configurations that are not in $\I$. The set containing all configurations taken as input or generated by $\A$ is denoted as $\IA$. Note that by definition $\I \setminus \Unew(D) \subseteq \IA$. Moreover, for sake of correctness, $\IA\cap \Unew(D) = \emptyset$ must hold (i.e., no unsolvable configurations are generated by $\A$). 
\end{itemize}


Most of the concepts introduced in the above observations can be formalized according to the general computational schema reported in Algorithm~\ref{alg:compute}; this schema describes how the generic algorithm $\A$ works according to the proposed methodology.

\begin{algorithm}[h]
\caption{Compute}\label{alg:compute}
\begin{algorithmic}[1]
\begin{small}
\REQUIRE Configuration $R\in \IA$, Data $D$
\ENSURE  A trajectory $\tau$ for each moving robot
\IF{ $P_1(R,D)$} 
   \STATE Call $\Proc_1$ and return $m_1: R\to \tau_1$ 
\ENDIF
\IF{ $P_2(R,D)$} 
   \STATE Call $\Proc_2$ and return $m_2: R\to \tau_2$ 
\ENDIF
\STATE $\cdots$
\IF{ $P_k(R,D)$} 
   \STATE Call $\Proc_k$ and return $m_k: R\to \tau_k$ 
\ENDIF
\IF{ $P_F(R,D)$} 
   \STATE return $m_{k+1}:R \to\nil$ 
\ENDIF
\end{small}
\end{algorithmic} 
\end{algorithm}


Concerning the computational schema reported in Algorithm~\ref{alg:compute}%
, the following conditions apply:

\begin{itemize}
\item
every $P_i$ is a predicate computable on the input that identifies the corresponding procedure $\Proc_i$ to be computed;

\item
$P_F$ (also identified as $P_{k+1}$) is the predicate characterizing configurations in $\F(D)$;

\item
for every possible input pair $R$, $D$, with $R\in \IA$, there exists a true predicate $P_i(R,D)$; 

\item  
in order to allow robots to exactly recognize the task to be performed, it must hold $P_i(R,D)\wedge P_j(R,D) =\false$, for $i\neq j$.

\item
$m_i$ is the move computed by procedure $\Proc_i$: it associates to each robot a trajectory belonging to the set $\tau_i$. Notice that, in general, only a subset $R_i$ of $R$ is involved in the task $T_i$, hence $m_i(r)=\nil$ for each $r\in R\setminus R_i$; moreover, $m_F(r)$ (also identified as $m_{k+1}(r)$) is always the $nil$ movement, for each $r\in R$;


\item 
for each $r\in R_i$, $m_i(r)$ denotes the trajectory $\tau \in \tau_i$ that $r$ must trace; in general, $\tau$ is a curve in the plane starting from the position of $r$ and ending at a target position defined by the strategy.
\end{itemize}
%
\smallskip\noindent
In practice, Algorithm~\ref{alg:compute} can be used in a distributed algorithm as follows: 
\begin{quote}
\emph{if a robot $r$ executing algorithm $\A$ detects that predicate $P_i$ holds, then $r$ first computes the next move $m_i$ obtained by executing $\Proc_i$, and then traces $m_i(r)$.}
\end{quote}

\smallskip\noindent
By abusing notation, $T_i$ can be used not only to identify a task but also to identify the set of all configurations that satisfy predicate $P_i$:
\begin{itemize}
\item
we call such a set \emph{class} $T_i$;
\item
class $T_F$ contains all the \emph{final configurations} in $\F(D)$;
\item
since predicates are mutually exclusive, classes $T_i$, $1\leq i \leq k+1$, form a partition of $\IA$. \\
\end{itemize}
We assume that the definition of $\Pi$ implies a characterization of the initial and final configurations. As already observed, it is possible that there exist unsolvable configurations for $\Pi$. Consider now any algorithm $\A$ for solving $\Pi$. By definition, $\A$ must transform any element of $\I\setminus \Unew(D)$ into an element in $\F(D)$. We recall that in such a transformation it is possible that $\A$ generates intermediate configurations in $\R\setminus \I$.  

As an example, consider again the \gath problem. In such a case $\I$ contains all the configurations with $n$ distinct points, $\F(D)$ contains any configuration with one point having multiplicity $n$, and the algorithm may generate intermediate configurations which are both non-initial and non-final, like for instance those configurations occurring as soon as two or more robots compose a multiplicity.

This example implies that, in general, the set of configurations associated to any task $T_i$ does not contain only elements of $\I$ but also some elements of $\R\setminus \I$ (cf. Figure~\ref{fig:partition2}).

\begin{figure}
\begin{center}
\scalebox{0.90}{\input 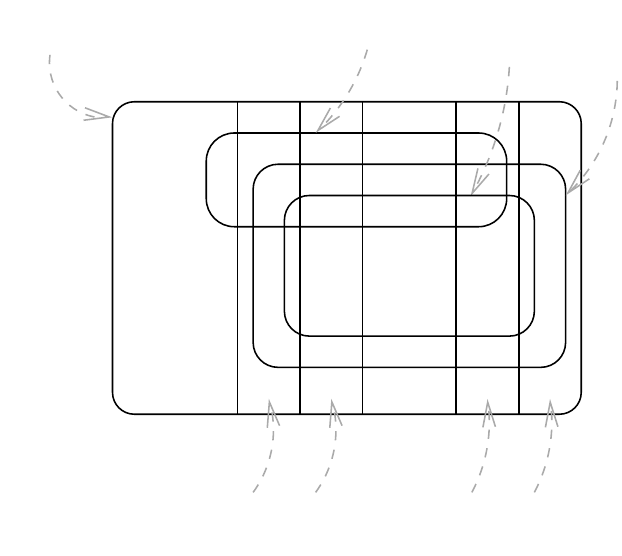_t }
\end{center}
\caption{A visualization of the possible relationships among the classes $T_1,T_2,\ldots,T_{k+1}$ and the sets $\I$, $\Unew(D)$, $\F(D)$ and $\IA$, for a generic algorithm $\A$. For sake of correctness, algorithm $\A$ must guarantee $\IA\cap \Unew(D) = \emptyset$. 
}
\label{fig:partition2}
\end{figure}

\subsection{On the definition of predicates $P_i$}\label{ssec:methodology:predicates}

We have assumed that any algorithm $\A$ solving $\Pi$ is based on a strategy that decomposes the problem into tasks $T_1,T_2,\ldots,T_k,T_{k+1}$. In most cases, each task can be accomplished only when some \emph{pre-conditions} are fulfilled, and such conditions must be verified by $\A$ according to the current input configuration. Hence, in order to define the predicates, we need:
\begin{itemize}
\item
\emph{basic variables} that capture metric/topological/numerical/ordinal aspects of the input configuration which are relevant for the used strategy and that can be evaluated by each robot on the basis of its view;
\item
\emph{composed variables} that express the pre-conditions of each task $T_i$.
\end{itemize}
If we assume that $\pre_i$ is the composed variable that represents the pre-conditions of $P_i$, for each $1\le i\le k+1$, then predicate $P_i$ can be defined as follow:
\begin{equation}\label{eq:predicates}
	P_i = \pre_i \wedge \neg ( \pre_{i+1} \vee \pre_{i+2} \vee \ldots \vee \pre_{k+1} )
\end{equation} 
where $\pre_{k+1}$ is the pre-condition of $P_F$ and, in particular, $P_F=\pre_{k+1}$.

\begin{remark}\label{remark:precondition}
	This way to define the predicates implies a linearization/ordering of the tasks that must be accomplished. In fact, with respect to a given input configuration, the first predicate to be checked is $P_F=\pre_{k+1}$; if it is false, then $P_k=\pre_{k} \wedge \neg \pre_{k+1} = \pre_{k}$ is checked. If all predicates until $P_3$ returned false, then $P_2=\pre_2 \wedge \neg ( \pre_{3} \vee \pre_{4} \vee \ldots \vee \pre_{k+1} ) = \pre_2$ is checked. If even $P_2$ is false then we need, according to the designed task ordering, that $T_1$ must be performed on the input configuration. By choosing $\pre_1$ being the $\true$ tautology, then  we are sure each configuration in $\IA$ is processed by the algorithm, and that differently from what is shown in Figure~\ref{fig:partition2}, $T_1$, $\ldots$, $T_{k+1}$ would make a partition of the whole set of configurations $\R$. Moreover, it easily follows that $P_i\wedge P_j = \false$, for each $i\neq j$. In fact, if we assume $P_i=\true$ (which implies $\pre_i=\true$) and w.l.o.g. $j>i$, then by definition $\pre_j =\false$ as it appears in the negative form in $P_i$ in conjunction with $\pre_i$. 
We conclude this remark by observing that different orderings may be defined and an ordering can be always decided according to the designed strategy.
\end{remark}

\subsection{On the concepts related to the execution of $\A$}\label{ssec:methodology:execution}

We start by introducing the concept of \emph{evolution} of an algorithm $\A$ expressed as an infinite sequence of configurations produced by $\A$ starting from an initial configuration $R\in \I$. By assuming \fsync, \ssync, or \sasync models, an evolution is a \emph{discrete} sequence of configurations, each one associated with a specific time instant generated by the common clock.

In the \async model, instead, an evolution becomes a \emph{continuous} sequence of configurations,
that is a curve $R(t)$ in $\Reali^{2n}$ as each configuration can be represented at each time $t$ by a vector of $n$ elements, each of them in $\Reali^2$. Basically each vector represents the $n$ positions of the $n$ robots in the plane.  
\begin{figure}[ht]
\begin{center}
\scalebox{0.6}{\input 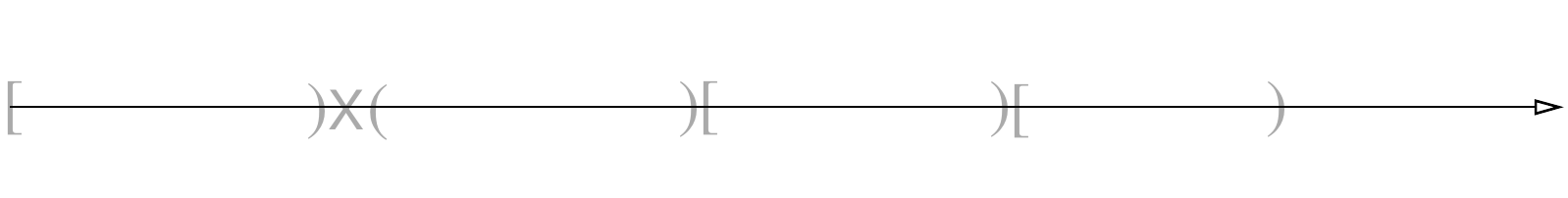_t }
\end{center}
\caption{A fragment of an evolution for \async. 
The first interval corresponds to the sequence of configurations where predicate $P_1$ holds (i.e., configurations belonging to class $T_1$). The second interval is composed of just one point where $P_2$ holds, and so forth.}
\label{fig:continuos-execution}
\end{figure}
As shown in Figure~\ref{fig:continuos-execution}, each segment representing the time during which a task is performed can be closed, open or half-open depending on the  definition of the predicates. E.g., if a predicate exclusively depends on the presence of a robot $r$  on a specific point $p$ then the corresponding segment is in fact a single point (representing $r$ passing on $p$) or a closed interval (representing $r$ lying on $p$ for a while - the time represented by the segment). 

\begin{definition}\cite{FYOKY15}\label{def:execution}
An execution of an algorithm $\A$ with respect to an initial configuration $R$ is an infinite discrete sequence $\Ex : R=R(t_0),R(t_1),\ldots$, where $\T= \{t_i : t_i < t_{i+1},~ i = 0,1,\ldots\}$ is the set of time instants at which at least one robot takes the snapshot $R(t_i)$ during its \look phase.%
\end{definition}

 If $R$ is composed of \fsync robots, then $\T$ contains all the time instants generated by the common clock and by definition at each $R(t_i)$, each robot is performing the \look phase. Basically, the execution of algorithm $\A$ coincides with its evolution.
 If $R$ is composed of \ssync robots, then $\T$ contains all the time instants generated by the common clock and by definition at each $R(t_i)$, each robot is either idle or performing the \look phase. 
 If $R$ is composed of \sasync robots, then $\T$ contains all the time instants generated by the common clock and by definition at each $R(t_i)$, each robot is either idle or starting any of its LCM phases. 
Finally, if $R$ is composed of \async robots, each $R(t)$ corresponds to a time instant picked along the continuous time line that represents the evolution of algorithm $\A$, in which there is at least one active robot taking the snapshot during its \look phase whereas any other robot might be idle or in any phase and time of its LCM cycle. In particular, robots can be seen while moving. 

A possible execution arising from the evolution of Figure~\ref{fig:continuos-execution} is shown in Figure~\ref{fig:discrete-execution}. By definition of task, each configuration of the execution satisfies one of the specified predicates. 

\begin{figure}[ht]
\begin{center}
\scalebox{0.6}{\input 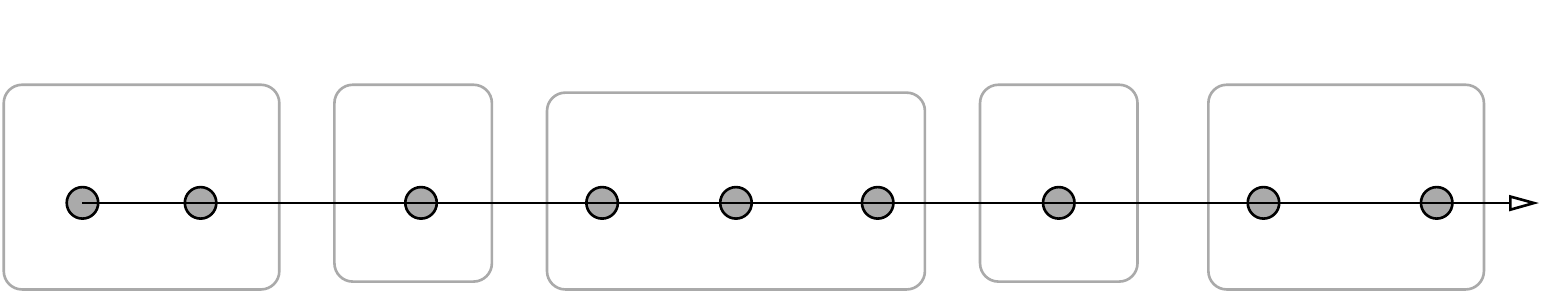_t }
\end{center}
\caption{ A fragment of an execution with nine configurations and their membership to classes.  
}
\label{fig:discrete-execution}
\end{figure}


\begin{definition}\label{def:transition}
Let $\Ex: R=R(t_0),R(t_1),\ldots$ be an execution of an algorithm $\A$. Two consecutive configurations $R(t_i)$ and $R(t_{i+1})$ in $\Ex$, with $R(t_i) \neq R(t_{i+1})$, give rise to a \emph{transition} $T_j\rightarrow T_k$ from the class $T_j$ of $R(t_i)$ to the class $T_k$ of $R(t_{i+1})$.   
%
Generalizing, we say that \emph{$\A$ generates a transition} between two classes $T_j$, $T_k$ (or within the same class if $j=k$) if there exists an input configuration $R$ and an execution $\Ex$ of $\A$ that generates such a transition. 
\end{definition}

The set of all transitions of $\A$ determines a directed graph $G$ called \emph{transition graph}. \\

\begin{definition}\label{def:transition-graph}
A transition graph $G=(V,E)$ for an algorithm $\A$ is defined as follows:
\begin{itemize}
\item $V=\{T_1,T_2,\cdots, T_{k+1}\}$;
\item there exists a directed edge $(T_i,T_j)$ in $E$ 
      if and only if $\A$ generates a transition from class $T_i$ 
      to class $T_j$, possibly $i=j$.
\end{itemize}
\end{definition}

\smallskip\noindent
 
\begin{remark}\label{remark:nosink}
According to Definitions~\ref{def:transition} and~\ref{def:transition-graph}, and since the move $m_F$ is $\nil$, then $T_F$ is a sink node in the transition graph $G$. 
Let $T_i$ be any node in $G$ different from $T_F$: if $T_i$ does not admit a self-loop, then each pair of consecutive configurations appearing in any execution and both belonging to $T_i$ must represent the same configuration. Moreover, since $T_i$ occurs in $G$ then it must admit at least a transition toward another node as the move designed for $T_i$ must lead to a new task. It follows that in $G$ there cannot exist \emph{sink} nodes except $T_F$.
\end{remark}
 
Figure~\ref{fig:partition} shows the transition graph $G$ corresponding to the fragment of execution of Figure~\ref{fig:discrete-execution}.
 
\begin{figure}[ht]
\begin{center}
\scalebox{0.6}{\input 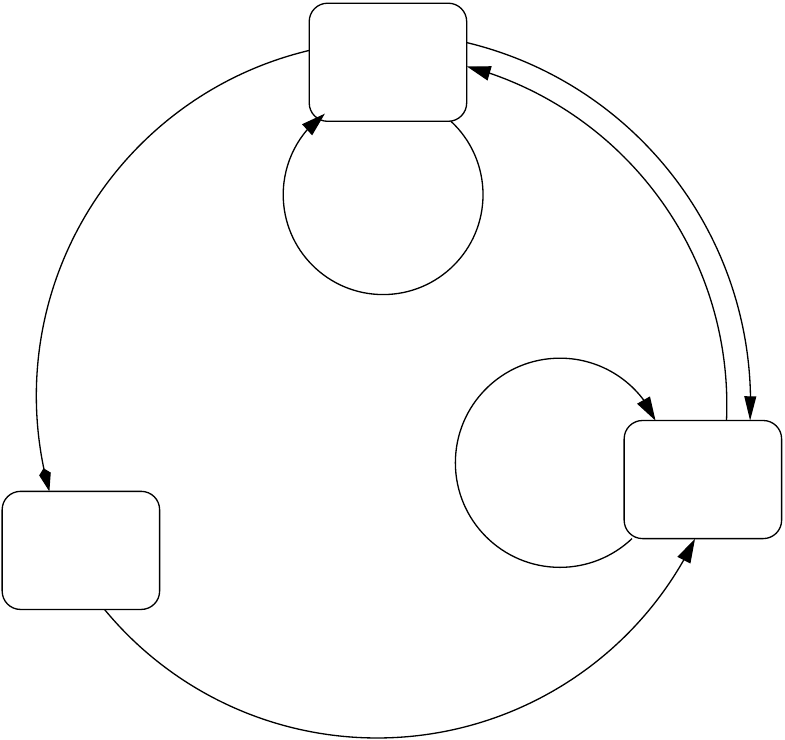_t }
\end{center}
\caption{The transition graph corresponding to the fragment of execution of \ref{fig:discrete-execution}.}
\label{fig:partition}
\end{figure}

\subsection{On the correctness of $\A$}\label{sec:methodology:correctness}
The next theorem summarizes the properties that must hold to guarantee the correctness of an algorithm according to the proposed methodology.

\begin{theorem}\label{claim:correctness}
According to the proposed methodology, the correctness of any algorithm $\A$ can be obtained by proving that all the following properties hold:
\begin{itemize}
\item[$\h_1$:]
for each configuration in $\IA$ at least one predicate $P_i$ is true and,
for each $i\neq j$, $T_i \cap T_j= \emptyset$;
\item[$\h_2$:]
configurations in $\Unew(D)$ are not generated by $\A$, i.e. $\IA\cap \Unew(D)=\emptyset$;
\item[$\h_3$:]\label{it:transitions}
for each class $T_i$, the classes reachable from $T_i$ by means of a transition are exactly those represented in the transition graph $G$ (i.e., the transition graph is correct);
\item[$\h_4$:]
    possible cycles in the transition graph $G$ (including self-loops) must be performed a finite number of times.
\end{itemize}
\end{theorem}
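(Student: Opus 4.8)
The plan is to unfold the definition of correctness stated in Section~\ref{sec:methodology}: $\A$ is correct when, for every initial configuration $R \in \I \setminus \Unew(D)$ and every execution $\Ex$ of $\A$, there is a time after which the configuration lies in $\F(D)$ and never leaves it. Since class $T_F$ consists exactly of the configurations in $\F(D)$, this reduces to proving two facts: (i) every execution eventually enters $T_F$ (\emph{reachability}), and (ii) once in $T_F$ it stays there (\emph{permanence}). I would prove these separately, feeding in the four hypotheses as needed.

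First I would dispatch the routine parts. Property $\h_1$ certifies that the schema of Algorithm~\ref{alg:compute} is well defined on $\IA$: at every configuration met along an execution exactly one predicate $P_i$ holds, so each robot recognizes a unique task and the assignment of configurations to classes is a genuine total function. Consequently every execution $\Ex$ determines a well-defined sequence of classes, and Definitions~\ref{def:transition} and~\ref{def:transition-graph} let me read $\Ex$ as an infinite walk in the finite transition graph $G$, where consecutive \emph{distinct} configurations correspond to edges of $G$. Permanence is then immediate: as soon as a configuration belongs to $T_F$ the computed move is $m_F = \nil$, so the configuration is constant thereafter and the execution never leaves $T_F$, giving termination. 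Property $\h_2$ is used to keep this picture consistent with solvability: since $\IA \cap \Unew(D) = \emptyset$, no execution is ever driven into a configuration of $\Unew(D)$ from which $\F(D)$ would be unreachable, so entering $T_F$ is never precluded.

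The core of the argument is reachability of $T_F$, and this is where I expect the real work. Suppose, for contradiction, an execution $\Ex$ starting in $\I \setminus \Unew(D)$ never enters $T_F$. By $\h_3$ the transitions realized by $\Ex$ are exactly the edges of $G$, so the induced walk stays inside $V \setminus \{T_F\}$. I would first exclude the degenerate case in which the walk performs only finitely many transitions: if the configuration were eventually constant at some $R^\star \notin T_F$, the task recognized at $R^\star$ would prescribe a non-$\nil$ move for at least one robot, because by Remark~\ref{remark:nosink} every non-$T_F$ node has an outgoing edge and hence a nontrivial move leading to a new task; then, by fairness of the {\sc Async}\xspace scheduler together with the non-rigidity constant $\nu$, that robot would be activated and make strictly positive progress, changing the configuration and contradicting constancy. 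Hence the walk performs infinitely many transitions within the finite set $V \setminus \{T_F\}$.

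It remains to contradict $\h_4$. An infinite walk in a finite graph revisits some vertex infinitely often; the segments between consecutive revisits are closed walks, each decomposable into simple cycles, and since $G$ has only finitely many simple cycles at least one of them is traversed infinitely often. This contradicts $\h_4$, which forces every cycle of $G$ (self-loops included) to be performed a finite number of times. Therefore no such execution exists: every execution enters $T_F$, i.e.\ reaches $\F(D)$, and by the permanence argument it terminates there, which establishes the correctness of $\A$. I expect the main obstacle to be precisely this reachability step: making rigorous the passage from an asynchronous, possibly stuttering execution (with equal consecutive snapshots and pending moves) to an honest infinite walk in $G$, and then invoking the finite-cycle graph argument; by comparison the verification of $\h_1$, the role of $\h_2$, and the permanence of $T_F$ are comparatively direct.
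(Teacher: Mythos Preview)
Your proposal is correct and follows essentially the same route as the paper's proof: both use $\h_1$ for well-definedness of the task assignment, $\h_2$ to stay within solvable configurations, then combine $\h_3$, $\h_4$, and Remark~\ref{remark:nosink} to force the walk in $G$ to reach $T_F$, finishing with the $\nil$ move for permanence. Your write-up is somewhat more explicit than the paper's (you separately handle the eventually-constant case via scheduler fairness and spell out the pigeonhole/cycle argument), but the underlying argument is the same.
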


\begin{proof}
What we are going to show is that if all properties $\h_1,\ldots,\h_4$ hold, then there exists a time $t$ such that $R(t)$ is in $T_F$ and $R(t')=R(t)$ for any time $t'\ge t$, that is $\A$ is correct. 

Assume that a non-final configuration $R\in \I\setminus \U(D)$ is provided as input to $\A$. According to $\h_1$ there exists a single task (say $T_i$) to be assigned to robots with respect to $R$. According to $\h_2$, any configuration generated from $T_i$ (say $R'$) is solvable. Moreover, by $\h_3$ and $\h_4$, we can consider $R'$ belonging to some class (say $T_j$) different from $T_i$. The transition from $T_i$ to $T_j$ is represented in the transition graph $G$ by an edge from $T_i$ to $T_j$. According to this analysis, we can say that $R'$ will evolve during the time by changing its membership from class to class according to $G$. Although by Definition~\ref{def:execution} the execution is infinite, it will certainly reach $T_F$ (i.e., ends up in a task from where no new configurations are generated) since property $\h_4$ assures that cycles (including self-loops of tasks different from $T_F$) are performed a finite number of times. This, along with Remark~\ref{remark:nosink} that excludes the occurrence of sink nodes in $G$ different from $T_F$, implies that the execution of $\A$ eventually produces a final configuration.
Moreover, as the only movements allowed in $T_F$ by the methodology are the $\nil$ ones, then the reached configuration will not change anymore.
\end{proof}

The following remark restricts the number of properties to be proven in order to guarantee the correctness of any algorithm $\A$.

\begin{remark}\label{rem:pre-i}
If each predicate is defined as shown in Equation~\ref{eq:predicates} then, according to Remark~\ref{remark:precondition}, property $\h_1$ holds. 
\end{remark}

\subsubsection{On detecting transitions among classes}\label{ssec:transitions}
Notice that the most difficult property to prove among $\h_1,\ldots,\h_4$ is certainly $\h_3$, that is, it is difficult to correctly detect the transitions among classes generated by $\A$. 
%
In particular, as already observed, in a generic configuration $R(t)$, $t>0$, of an execution of algorithm $\A$ designed for \async robots, it is possible there are robots that have already decided to move at time $t'<t$ according to some move $m$ but they have not yet completed or started their movement.
Move $m$ is then a pending move and its presence may heavily affect the correctness analysis of algorithm $\A$. In particular, if the creation of configuration $R(t)$ determines a transition from a task $T_j$ to another task $T_i$, then it results to be hard to analyze the behavior of $\A$ when $T_i$ is performed and hence difficult to correctly detect the transitions from $T_i$.
%

The following definitions of \emph{stationary/almost-stationary/robust} configurations and transitions allow us to face such difficulties.

\begin{definition}[Stationary robot]\label{def:stationary}
A robot is said to be \emph{stationary} in a configuration $R(t)$ if at time $t$ it is:
\begin{itemize}
\item inactive, or
\item active, and during its current LCM cycle:
    \begin{itemize}
    \item it has not taken the snapshot yet;
    \item it has taken snapshot $R(t')=R(t)$, $t' \leq t$;
    \item it has taken snapshot $R(t')$, $t' \leq t$, which leads to a nil movement.
\end{itemize}
\end{itemize}
\end{definition}

\noindent
It is worth to remark that Definition~\ref{def:stationary} is a refinement of the one provided in~\cite{FYOKY15}. 

\begin{definition}[Stationary configuration]
A configuration $R$ is said to be \emph{stationary} if all robots are stationary in $R$.
\end{definition}
A simplification of the definition of stationary robot can be obtained if assuming that the snapshot is always taken at the beginning of the \look phase. According to~\cite{FPSW08}, this is always possible. The rationale behind it is that the \look phase can be potentially thought as composed of three sub-phases: (i) activation of the sensors; (ii) instantaneous snapshot acquisition; (iii) processing data. Hence, by considering sub-phase (i) as part of the preceding inactivity phase, the assumption stands.
If assumed, then the case of an active robot that has not yet taken the snapshot can be removed from the definition of stationary robot.

Note that, according to Definition~\ref{def:stationary}, a robot $r$ is \emph{non-stationary} in a configuration $R(t)$, if at time $t$ robot $r$ is active, has taken a snapshot $R(t')\neq R(t)$, $t'<t$, and is planning to move or is moving with a non-nil trajectory (i.e., $r$ gives rise to a pending move).


\begin{definition}[Almost-stationary configuration]
A configuration $R$ is said to be \emph{almost-stationary} if each robot in $R$ is either stationary or non-stationary but in such a case the remaining part of the trajectory it has not yet traced is included into $\tau$, where $\tau$ is the trajectory that $r$ would compute from $R$.
\end{definition}
%
%

\begin{definition}[Robust configuration]
A configuration $R$ belonging to a task $T_i$ is said to be \emph{robust} if each robot $r$ in $R$ is either stationary or non-stationary but in such a case as long as $r$ has not terminated its current LCM cycle the configuration still belongs to $T_i$.
\end{definition}

From the above definitions it follows that each stationary configuration is also almost-stationary, and each almost-stationary configuration is also robust.

\begin{definition}[Types of transitions]
Let $T_j \to T_i$ be a transition. 
Then such a transition is stationary (almost-stationary, robust, resp.) 
if each $R\in T_i$ produced from any $R'\in T_j$ by applying move $m_j$ is 
stationary (almost-stationary, robust, resp.).
\end{definition}
Notice that the types of transition form a hierarchy: each stationary transition is also almost-stationary, and each almost-stationary transition is also robust.

\smallskip
Now, consider again the problem remarked above, namely the detection of transitions among classes generated by the algorithm. If we are able to show that all the  transitions leading to a class $T_i$ are stationary, then no pending moves must be considered during the analysis of the algorithm with respect to phase $T_i$, and this greatly simplifies the correctness proof. 
Similarly, if we prove that all the transitions leading to $T_i$ are almost-stationary there could be pending moves, but they can be analyzed as scheduled by the current move $m_i$. The robust case is more difficult to be detected and managed, but again, if proved, it simplifies the analysis of the algorithm since it ensures to resolve all pending moves due any task $T_j$ preceding the current task $T_i$ within $T_i$. Hence no pending moves generated in $T_j$ can propagate to any task succeeding $T_i$. 


\begin{remark}\label{rem:simplified-transitions}
Each time the creation of configuration $R(t)$, $t>0$, determines a transition from a task $T_j$ to task $T_i$ (possibly $i=j$) and such a transition is stationary, almost-stationary or robust, then the analysis of the behavior of the algorithm $\A$ during the execution of task $T_i$ is greatly simplified since possible movements due to past moves do not affect $\A$. In other words, when a transition is stationary/almost-stationary/robust, the complexity of the correctness analysis is somehow comparable to that occurring in case of \fsync/\ssync robots.
\end{remark}

According to this remark, our methodology suggests to reduce the complexity of proving property $\h_3$ by adding the following optional property:
\smallskip
\begin{itemize}
\item[$\h_{3'}$:]
\textit{each transition not leading to $T_F$ is stationary, almost-stationary, or robust, while each transition leading to $T_F$ is stationary. }
\end{itemize}

\medskip
It is worth to note that when designing an algorithm it is not so obvious that property $\h_{3'}$ can be ensured for all transitions. For the sake of completeness, we call any other possible type of transition as \emph{unclassified} transition.

Another phenomenon that could make it difficult to prove the correctness of any algorithm $\A$ is the presence of possible \emph{collisions} between robots. By collision we mean any kind of undesired multiplicity, such as those created by chance and not on purpose. To this respect, it is undesirable that the trajectories of two moving robots intersect. When $\A$ is not collision-free, not only $\A$ could fail to correctly terminate but also it may generate  more transitions and more configurations than those actually needed. By maintaining the  algorithm collision-free would then also confine the size of $\IA$. Accordingly, our methodology suggests to add the following optional property: 
\smallskip
\begin{itemize}
\item[$\h_{3''}$:]
\textit{the algorithm is collision-free.}
\end{itemize}

\subsubsection{Dealing with cycles of the transition graph $G$}\label{ssec:cycles}
Concerning property $\h_4$, here we provide a possible strategy by which it can be approached. 

A natural way to guarantee $\h_4$ is to consider each edge of $G$ 
and ensuring it can be traversed a finite number of times. This coincides with proving that each node of $G$ is entered a finite number of times.
This approach might be tedious and requiring a lot of effort. However it can be simplified as follows. Clearly, if a node does not admit the self-loop then it can be entered more than once only if it is part of a cycle in $G$. 
Then, consider all \emph{simple} cycles of $G$.\footnote{A simple cycle is any cycle where each node appears exactly once.} Referring to Figure~\ref{fig:partition}, simple cycles are the self-loops plus ($T_1,T_3$) and ($T_1,T_2,T_3$). One may detect a suitable subset of edges representing a hitting set of the edges involved in the simple cycles. Clearly self-loops must be all included. If one ensures such edges are traversed a finite number of times then it is guaranteed all cycles (the simple ones and compositions of them) are traversed a finite number of times as well. In order to prove that an edge is traversed (or a node is entered, resp.) a finite number of times, some property should be detected which provides a monotonic evolution of the execution with respect to such a property leading to the negation of the property itself within a finite number of edge traversals (node accesses, resp.). Once all self-loops are resolved this way, one may focus on nodes. Considering a suitable subset of nodes representing a hitting set of the nodes of the remaining simple cycles, it must be ensured that such nodes are entered a finite number of times. Clearly, the smaller is the size of the hitting set, the less is the number of nodes that must be considered. However, the minimality of the hitting set is not a requirement. One is free to choose among the nodes in favor of simple arguments required for the proofs.

\section{Case study, detailed notation and definitions}\label{sec:notation}
As discussed in Introduction, we aim to show the potentials of the proposed methodology by an extended case study. In particular, we consider the \pf variant approached in~\cite{FYOKY15}. 
In addition to the definition of \pf provided in Section~\ref{sec:methodology}, in such a variant robots are endowed with global strong multiplicity detection and with \emph{chirality}, that is they share a common handedness. This of course changes their perception during the \look phase, as now the view can also exploit the chirality. For instance, by looking at the leftmost configuration in Figure~\ref{fig:notation}, it is evident the only disposal of the robots induces a vertical axis of reflection passing through the five aligned robots. However, when chirality is assumed, the specular robots at the two sides of the axis can be associated with different views, as chirality discriminate among left and right. In particular, robots share a common clockwise direction. As a consequence, from now one we restrict the set $\Aut(R)$ of all automorphisms for any configuration $R$ to contain only the identity and possible rotations, as reflections are resolved by chirality.

Generalizing~\cite{FYOKY15},
we relax the requirement that the LCS specific of a single robot remains the same among different LCM cycles.

We now provide all the notation, definitions and properties that will be exploited later for designing our new resolution algorithm for \pf with chirality. 

%
%
%

\subsection{Notation}\label{ssec:notation}
Given two distinct points $u$ and $v$ in the Euclidean plane, 
let $\Line(u,v)$ denote the straight line passing through these points, 
and let $(u,v)$ ($[u,v]$, resp.) denote the open (closed, resp.) segment containing 
all points in $\Line(u,v)$ that lie between $u$ and $v$. The half-line starting at 
point $u$ (but excluding the point $u$) and passing through $v$ is denoted by $
\halfline(u,v)$. We denote by $\angolo(u,c,v)$ the angle centered in $c$ obtained by rotating clockwise $\halfline(c,u)$ until overlapping $\halfline(c,v)$. The angle $\angolo(u,c,v)$ is measured 
from $u$ to $v$ in clockwise direction and the measure is 
always meant as positive.

Given an arbitrary multiset $P$ of points in $\Reali^2$, $\mult(p,P)$ denotes the number of occurrences of $p$ in $P$, while $C(P)$ and $c(P)$ denote the smallest enclosing circle of $P$ and its center, respectively. 
Let $C$ be any circle concentric to $C(P)$. 
We say that a point $p\in P$ is \emph{on} $C$ if and only if $p$ is on the circumference of $C$; $\partial C$ denotes all the points of $P$ that are on $C$. We say that a point $p\in P$ is \emph{inside} $C$ if and only if $p$ is in the area enclosed by $C$ but not in $\partial C$; $\Int(C)$ denotes all the points inside $C$. The radius of $C$ is denoted by $\delta(C)$.  
The smallest enclosing circle $C(P)$ is unique and can be computed in linear time~\cite{M83}. A useful characterization of $C(P)$ is expressed by the following property.

\begin{property}\cite{Welz91}\label{prop1} $C(P)$ passes either through two of the points of $P$ that are on the same diameter (antipodal points), or through at least three points. $C(P)$ does not change by eliminating or adding points to $\Int(P)$. $C(P)$ does not change by adding points to $\partial C(P)$. However, it may be possible that $C(P)$ changes by either eliminating or changing positions of points in $\partial C(P)$.
\end{property}

Given a multiset $P$, we say that a point $p\in P$ is \emph{critical} if  $C(P) \neq C(P\setminus \{p\}$).%
\footnote{Note that in this work we use operations on multisets.}
It easily follows that if $p\in P$ is a critical point, then $p\in \partial C(P)$.

\begin{property}\cite{CieliebakP02}\label{prop2} If $|\partial C(P)|\ge 4$ then there exists at least one point in $\partial C(P)$ which is not critical.
\end{property}

Given a multiset $P$, consider all the concentric circles that are centered in $c(P)$ and with at least one point of $P$ on them: $C_{\uparrow}^{i}(P)$ denotes the $i$-th of such circles, and they are ordered so that by definition $C_{\uparrow}^{1}(P)$ is the first one (which coincides with $c(P)$ when $c(P)\in P$), $C(P)$ is the last one, and the radius of $C_{\uparrow}^{i}(P)$ is greater than the radius of $C_{\uparrow}^{j}(P)$ if and only if $i>j$.
Additionally, $C_{\downarrow}^{i}(P)$ denotes one of the same concentric circles, but now they are ordered in the opposite direction: $C_{\downarrow}^{1}(P) = C(P)$ is the first one, $c(P)$ is the last one when $c(P)\in P$, and the radius of $C_{\downarrow}^{i}(P)$ is greater than the radius of $C_{\downarrow}^{j}(P)$ if and only if $i<j$.

Finally, we provide some additional notation and terminology referred to a given configuration $R$ and a given pattern $F$. The following definitions assume that $C(R)\equiv C(F)$ (cf. Figure~\ref{fig:notation}):
\begin{figure}[ht]
\begin{center}
\scalebox{0.90}{\input 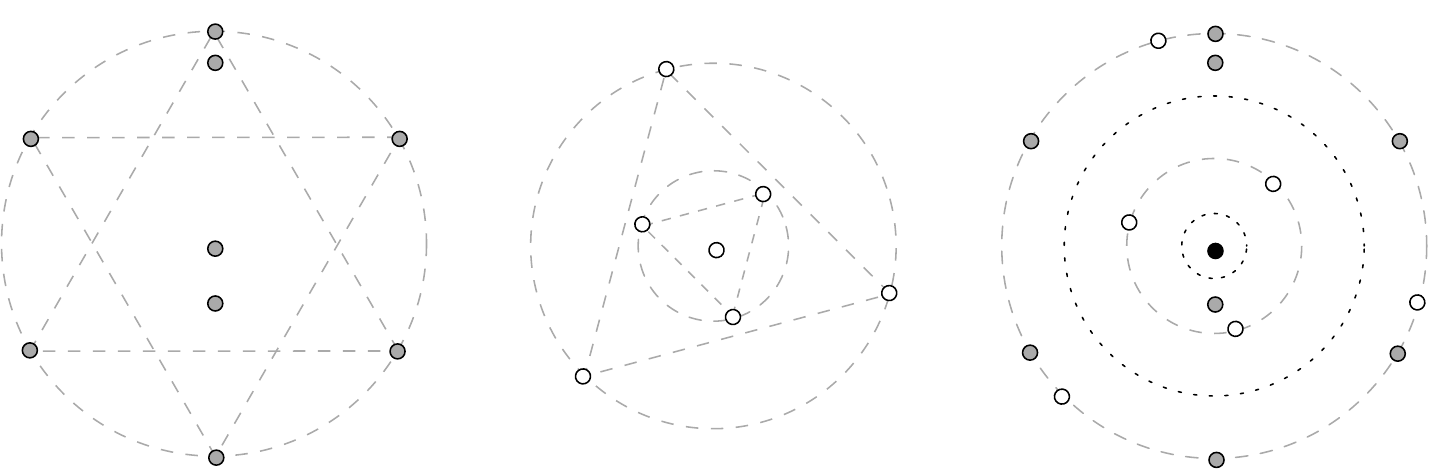_t }
\end{center}
\caption{An example of input for the \pf\ problem perceived by a generic robot according to its LCS, and related notation: on the left, an initial configuration $R$ composed of 9 robots; on the middle, the pattern $F$, numbers close to points refer to multiplicities; on the right, the embedding of $C(F)$ on $C(R)$ and the parking circles $\CT$ and $\CB$ (robots located in points of $F$ are represented as black points). Notice that in this example just one robot is located inside $\Ann$.
}
\label{fig:notation}
\end{figure}
\begin{itemize}
\item
$\CT$ the \emph{parking circle at top level}, that is the median circle between $C(F)$ and $C_{\downarrow}^{2}(F)$ if $\Int(C(F))\neq \emptyset$, otherwise the median circle between $C(F)$ and $c(F)$;
\item
$\CB$ the \emph{parking circle at bottom level}; it corresponds to the median circle between $c(R)$ and $\min \{ \delta(C_{\uparrow}^{2}(R)), \delta(C_{\uparrow}^{1}(F)) \}$ when $c(F)\not\in F$, or the median circle between $c(R)$ and $\min \{ \delta(C_{\uparrow}^{2}(R)), \delta(C_{\uparrow}^{2}(F)) \}$ when $c(F)\in F$;
\item
$\Ann$ denotes the interior of the \emph{annulus} comprised by $C(R)$ and $\CT$ (hence, both the boundary circles $C(R)$ and $\CT$ are excluded from $\Ann$); 
\item
given a robot $r\in \partial C(R)$, $\ell_r$ denotes the line segment $[c(R),r]$; $\ell_r$ is called \emph{robot-ray};
\item
given a point $f\in \partial C(F)$, the line segment $\ell_f=[c(R),f]$  is called \emph{pattern-ray};
\item
$\Rob(\cdot)$ is a function that takes a region of the plane (e.g., annulus, sector, ray, ...) as input and returns all robots lying in the given region (e.g., $\Rob(\Ann)$ contains all robots in the annulus).
\end{itemize}

\subsection{Symmetricity}\label{ssec:symmetricity}
The \pf with chirality problem was first introduced by Suzuki and Yamashita for the robots moving in the  Euclidean plane~\cite{SY99}. They characterized the class of formable patterns for \fsync robots endowed with chirality by using the following notion of \emph{symmetricity}. 
 
Consider a partition of $P$ into $k$ regular $m$-gons with common center $c(P)$, where $k = n/m$. Such a partition is called \emph{regular}. The symmetricity $\rho(P)$ of $P$ is the maximum $m$ such that there is a regular partition of $P$ into $k$ regular $m$-gons. Notice that $m$ points at $c(P)$ forms a regular $m$-gon,%
\footnote{A multiplicity of $m$ points, all at $c(P)$, is considered as a regular $m$-gon with radius zero.} %
any pair $\{p,q\}$ of points is a regular 2-gon with center the median point of the line segment $[p,q]$, and any point is a regular 1-gon with an arbitrary center.
Since any $P$ can be always partitioned into $n$ regular 1-gons, the symmetricity $\rho(P)$ is well defined. Examples of $\rho$ are depicted in Figure~\ref{fig:rho}.$(a)$-$(d)$. 
%
%
To this respect, notice the case in Figure~\ref{fig:rho}.$(c)$, where $\rho(P) = 1$ while $P$ appears to be symmetric. This particular case means that whenever $c(P)\in P$, the robot on $c(P)$ can transform $P$ into an asymmetric configuration $P'$ with $\rho(P') = 1$ by leaving $c(P)$.

\begin{figure}[t]
	\begin{center}
	 \resizebox{0.88\textwidth}{!}{\input{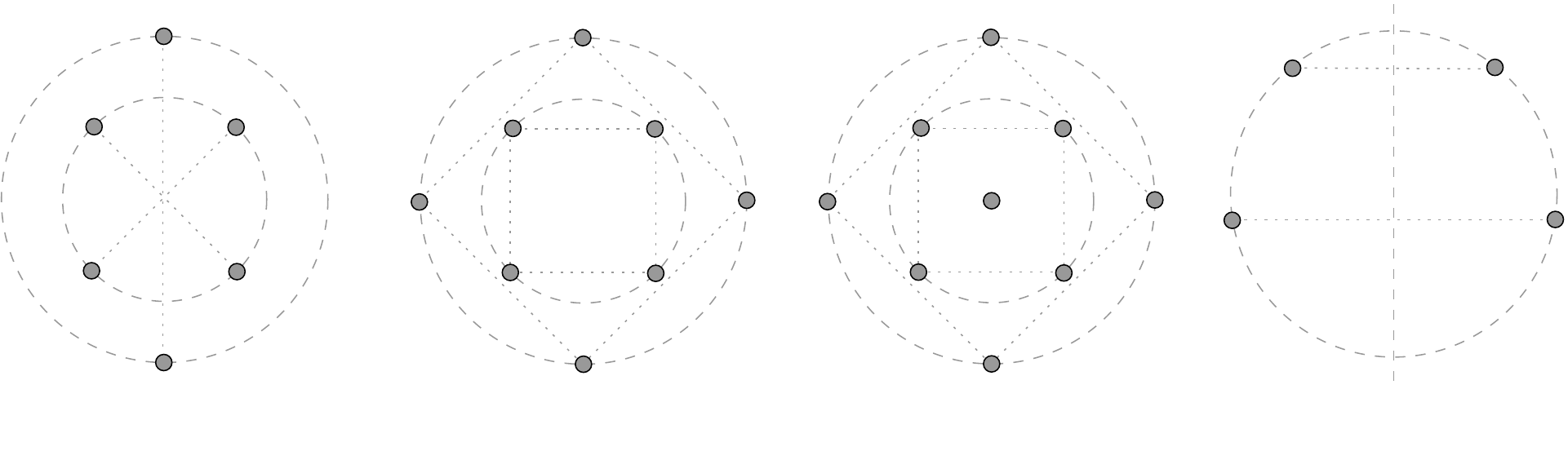_t}}
	\caption{\small Examples of symmetricity of a set of points $P$. 
			In $(a)$, $\rho(P)=2$; 
			in $(b)$, $\rho(P)=4$; 
			in $(c)$, $\rho(P)=1$; 
			in $(d)$, $\rho(P)=1$.}
	\label{fig:rho}
	\end{center}
\end{figure}

In Section~\ref{sec:methodology} we have formalized the Pattern Formation (\pf) problem. We can now recall the characterization about formable patterns according to the notion of symmetricity.

\begin{theorem} \emph{\cite{SY99}}\label{th:rho}
Let $R$ be an initial configuration and $F$ be a pattern. 
$F$ is formable from $R$ by \fsync robots with chirality if and only if $\rho(R)$ divides $\rho(F)$.
\end{theorem}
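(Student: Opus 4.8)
The plan is to prove the two implications separately, after first recasting symmetricity in group--theoretic terms. Since we assume chirality, every element of $\Aut(R)$ is a rotation about $c(R)$, so $\Aut(R)$ is cyclic, generated by a rotation $\varphi$ of some order $d$. I would first isolate the auxiliary fact that, for any multiset $P$ of $n$ points invariant under a rotation $\varphi$ of order $m$ about $c(P)$ with $m\mid n$, one has $m\mid\rho(P)$. Indeed, the only point fixed by a nontrivial power of $\varphi$ is $c(P)$, so the non--central points split into $\varphi$--orbits of size exactly $m$, i.e.\ regular $m$--gons, whence their number is a multiple of $m$; since $n$ is a multiple of $m$ too, the central multiplicity $\mult(c(P),P)$ is also a multiple of $m$ and forms regular $m$--gons of coincident points. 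As the achievable regular--partition sizes are closed under least common multiple (two invariant rotations generate a rotation of order their lcm, and the central condition is preserved), $\rho(P)$ is the maximum such size and is therefore divisible by $m$.

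For \emph{necessity} I would argue by the indistinguishability principle. Let $m=\rho(R)$ and let $\varphi\in\Aut(R)$ be a rotation of order $m$. Because the robots are anonymous, oblivious, homogeneous and disoriented, two $\varphi$--equivalent robots $r$ and $\varphi(r)$ perceive snapshots differing only by the orientation--preserving map $\varphi$; by chirality they thus have the \emph{same} view and, running the same deterministic algorithm, compute trajectories that are images of one another under $\varphi$. Under \fsync all robots act in lockstep, and a symmetry--preserving adversary can stop each pair of equivalent robots at $\varphi$--related points of their (non--rigid) trajectories. Hence there is an execution along which every configuration stays $\varphi$--invariant. Since formation must succeed against \emph{every} execution, the terminal configuration $R(t)$ similar to $F$ is $\varphi$--invariant; as $n$ is a multiple of $m$, the auxiliary fact yields $m\mid\rho(R(t))$, and since $\rho$ is a similarity invariant (a reflection conjugates a rotation of order $m$ into one of the same order), $\rho(R(t))=\rho(F)$. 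Thus $\rho(R)\mid\rho(F)$.

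For \emph{sufficiency} I would give a constructive algorithm, reducing the symmetric case to the asymmetric one. When $\rho(R)=1$ the configuration is asymmetric, so with chirality all robots have pairwise distinct views; the algorithm elects the robot with the smallest view as unique leader, fixes a common reference frame (center $c(R)$, the leader's ray, and the common clockwise orientation), and forms the arbitrary pattern $F$ by moving robots to their targets one group at a time, ordered by the reference frame so as to avoid collisions and to keep the configuration asymmetric until completion; here $\rho(R)=1\mid\rho(F)$ holds trivially, so every $F$ is formable. When $m=\rho(R)>1$, the hypothesis $m\mid\rho(F)$ means $F$ is itself invariant under a rotation of order $m$, hence the union of $m$ congruent copies of a quotient pattern $F/\langle\varphi\rangle$; partitioning $R$ by the same rotation into $m$ congruent sectors, whose quotient admits no further rotational symmetry precisely because $\rho(R)=m$, the algorithm runs the asymmetric procedure simultaneously and symmetrically in all $m$ sectors, electing $m$ equivalent leaders and forming the quotient pattern in each. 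Preserving the $m$--fold symmetry throughout, the $m$ sectorial formations assemble into a configuration similar to $F$.

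The main obstacle is the sufficiency direction. Necessity is essentially the clean indistinguishability/adversary argument above, whose only delicate points are the handling of the center (settled by the counting in the auxiliary fact) and the verification that chirality genuinely equates the views of rotation--equivalent robots. By contrast, turning the reduction sketch into a correct algorithm demands the real technical work: a collision--free, symmetry--respecting ordering of the moves; a robust leader election from views in the asymmetric quotient; and a careful treatment of the central robot(s), in particular matching the cases $c(R)\in R$ versus $c(F)\in F$ and ensuring that the central multiplicity, which the divisibility forces to be a multiple of $m$, is produced correctly and without spurious symmetry. This is exactly the component that the present paper rebuilds in full detail through its task--decomposition methodology, so I would expect to invoke that machinery rather than reprove termination and collision--freeness from scratch.
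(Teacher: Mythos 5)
The paper never proves this statement: Theorem~\ref{th:rho} is imported verbatim from~\cite{SY99} and used as a black box, so your attempt has to be judged on its own merits rather than against an internal proof. Your \emph{necessity} direction is essentially the standard, correct argument: the auxiliary orbit-counting fact is sound (non-central points lie in $\varphi$-orbits of size exactly $m$, the central multiplicity inherits divisibility by $m$ when $m\mid n$, and closure of admissible regular-partition sizes under least common multiples upgrades ``a regular partition into $m$-gons exists'' to $m\mid\rho(P)$), and the symmetry-preserving adversary under \fsync with non-rigid moves is exactly the right device. The only step you should make explicit is that the terminal configuration $R(t)$, being invariant under the nontrivial rotation $\varphi$, has $c(R(t))$ fixed by $\varphi$ and hence equal to the center of $\varphi$; this is what licenses applying the auxiliary fact to $R(t)$, whose smallest enclosing circle need not otherwise be that of $R$.

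The \emph{sufficiency} direction, however, contains a genuine error, not just a deferral of technical work. The claim ``when $\rho(R)=1$ the configuration is asymmetric, so with chirality all robots have pairwise distinct views'' is false, and it is false precisely in the case that makes the theorem delicate. By the definition of symmetricity, a single robot located at $c(R)$ forces $\rho(R)=1$ even when the configuration is geometrically symmetric: this is the situation of Figure~\ref{fig:rho}.$(c)$, a robot at the center of a square. There the four outer robots are pairwise equivalent under a rotation of order $4$ and have \emph{identical} views, so your asymmetric-case procedure is not well defined: the unique robot of minimum view is the central one, hence ``the leader's ray'' does not exist, and under \fsync any deterministic rule moves the four equivalent robots simultaneously and symmetrically, so no ``one robot at a time'' ordering can be extracted from the views. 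A correct proof must insert a symmetry-breaking phase in which the central robot first leaves $c(R)$ (this is exactly what the paper's algorithm isolates as sub-problem \SB, task $T_1$), after which the configuration really is asymmetric and a leader-based strategy can proceed; your sketch silently assumes this phase away by equating $\rho(R)=1$ with asymmetry. Your $\rho(R)=m>1$ case has the further (acknowledged) gaps of coordinating the $m$ sector copies and of handling points of $F$ at the shared center, but those are of the ``real technical work'' kind you flagged; the $\rho(R)=1$ misstatement is an outright error. Note finally that invoking the paper's \async algorithm (Theorem~\ref{th:correctness}) for sufficiency would be legitimate and non-circular---the paper uses only the necessity half of Theorem~\ref{th:rho} in proving it---but then your explicit case analysis should be replaced by that citation rather than stated in its current, incorrect form.
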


This result states that the pattern formation problem, even for \fsync robots, highly depends on the symmetricity of both $R$ and $F$; moreover, when robots have chirality the symmetricity is entirely represented by the parameter $\rho$. On the contrary, Figure~\ref{fig:rho}.$(d)$ shows that $\rho$ is not useful when robots have no chirality since it does not take into consideration reflection symmetries. An interesting characterization about the symmetricity of points in the 3-dimensional space can be found in~\cite{YUKY17}.

Notice that the above theorem implies that, for the \pf problem, the set of unsolvable configuration $\U(D)$, with $D=F$, contains at least all configurations $R$ such that $\rho(R)$ does not divide $\rho(F)$. Formally, $\U(F) \supseteq \{R: \rho(R)$ \mbox{ does not divide } $\rho(F)\}$. Actually, as we will prove in Section~\ref{sec:correctness} by means of Theorem~\ref{th:correctness}, $\U(F)\cap \I = \{R: \rho(R)$ \mbox{ does not divide } $\rho(F)\}$. 
Concerning unsolvable configurations that are not initial, $\U(F)$ certainly contains those with a multiplicity composed by a number of robots greater than the number of robots composing the biggest multiplicity of $F$, as the adversary can always prevent to break multiplicities (i.e. to break such kind of symmetries). 


\smallskip
Related to the symmetricity, we need to introduce one further parameter that will be exploited by our resolution algorithm. 
Let $C$ be any circle concentric to $C(R)$. $\M(C)$ denotes the set containing all the maximum cardinality subsets $M\subseteq \partial C$ such that all the following conditions hold: 
\begin{enumerate}
\item robots in $M$ form a regular $|M|$-gon;
\item $|M|$ divides $\rho(F)$;
\item $|M|>1$.
\end{enumerate} 
Then, let $\M'(C)=\bigcup_{M\in \M(C)} M$, i.e., $\M'(C)$ is set of robots belonging to elements of $\M(C)$.
By referring to Figure~\ref{fig:notation}, the initial configuration $R$ (on the left) has symmetricity $\rho(R)=1$ and the set $\M(C(R))$ contains two elements of three robots each, since the pattern $F$ (on the middle) has symmetricity $\rho(F)=3$.

\smallskip

The next lemma makes a relationship between $\rho(R)$ and the size of any element of $\M(C)$, being $C$ any circle centered in $c(R)$ and with robots in $\partial C$.

\begin{lemma}\label{lem:divM}
 Let $F$ be a pattern, $R$ be a configuration such that $\rho(R)$ divides $\rho(F)$, and $M \in \M(C)$ with  
$C=C_{\downarrow}^{i}(R)$, $i\ge 1$. Then $\rho(R)$ divides $|M|$.
\end{lemma}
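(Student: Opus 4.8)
The plan is to exploit the rotational symmetry that $\rho(R)$ forces on $R$ together with the maximality built into the definition of $\M(C)$. First I would observe that, by definition of symmetricity, $R$ admits a regular partition into regular $\rho(R)$-gons centered at $c(R)$; hence the rotation $\varphi$ about $c(R)$ by the angle $2\pi/\rho(R)$ belongs to $\Aut(R)$. Since $C=C_{\downarrow}^{i}(R)$ is concentric with $C(R)$, the rotation $\varphi$ maps $C$ onto itself and robots onto robots, so it maps $\partial C$ onto $\partial C$; in particular $\varphi^{j}(M)\subseteq \partial C$ for every $j$, and each $\varphi^{j}(M)$ is again a regular $|M|$-gon inscribed in $C$.

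The heart of the argument is to consider the union $M^{\star}=\bigcup_{j=0}^{\rho(R)-1}\varphi^{j}(M)$ of all rotated copies of $M$. A direct computation on angular coordinates shows that $M^{\star}$ is itself a regular polygon: placing one vertex of $M$ at angle $0$, the vertices of $M^{\star}$ sit at the angles $2\pi(j/\rho(R)+k/|M|)$, and as $j,k$ range over their residues these are exactly the multiples of $2\pi\gcd(\rho(R),|M|)/(\rho(R)|M|)$. Hence $M^{\star}$ is a regular $\ell$-gon with $\ell=\mathrm{lcm}(\rho(R),|M|)$ vertices, all lying on $C$, i.e. $M^{\star}\subseteq \partial C$.

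Next I would verify that $M^{\star}$ is an admissible candidate for $\M(C)$, namely that it satisfies the three defining conditions: it forms a regular $\ell$-gon (condition~1), it has size $\ell\ge |M|>1$ (condition~3), and $\ell$ divides $\rho(F)$ (condition~2). The last point is where the hypothesis enters: since $\rho(R)$ divides $\rho(F)$ by assumption and $|M|$ divides $\rho(F)$ by condition~2 applied to $M$, their least common multiple $\ell$ also divides $\rho(F)$.

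Finally, since every element of $\M(C)$ has the maximum possible cardinality among subsets of $\partial C$ satisfying conditions~1--3, and $M^{\star}$ is such a subset, we must have $|M^{\star}|\le |M|$, that is $\ell\le |M|$. But $\ell=\mathrm{lcm}(\rho(R),|M|)\ge |M|$ always holds, so $\ell=|M|$, which is equivalent to $\rho(R)$ dividing $|M|$, as required. I expect the only delicate step to be the angular/$\mathrm{lcm}$ computation for $M^{\star}$ (checking that the rotated copies union up to an equally-spaced polygon of size \emph{exactly} $\mathrm{lcm}(\rho(R),|M|)$, rather than collapsing to something smaller), together with the verification that this union genuinely meets divisibility condition~2, so that the maximality of $M$ may legitimately be invoked.
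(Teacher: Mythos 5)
Your proof is correct and follows essentially the same route as the paper's: both take the rotation $\varphi$ of order $\rho(R)$ generating the symmetricity, observe that the orbit of $M$ under $\varphi$ forms a regular $\mathrm{lcm}(\rho(R),|M|)$-gon on $C$ whose size still divides $\rho(F)$, and invoke the maximality of $|M|$ in the definition of $\M(C)$ to force $\mathrm{lcm}(\rho(R),|M|)=|M|$. The only cosmetic difference is that the paper argues by contradiction after first dispatching the case $\varphi(M)=M$, whereas you derive the equality directly via the explicit angular computation; the substance is identical.
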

\begin{proof}
 Let $r$ be a robot in $M$ and let $\varphi \in \Aut(R)$ such that $\varphi^i(r)$ are distinct robots belonging to $C$, for each $i=0$, $1$, $\ldots$, $\rho(R)-1$. If $\varphi(r)=r'$ belongs to $M$, then all the robots $\varphi^i(r)$ belong to $M$ and this implies the claim.
 
 We show by contradiction that the above case is the only possible one. In fact, if $r' \not \in M$, by the equivalence of $r'$ with $r$, also $r'$ and any other robot in $\{\varphi(r) ~|~ r \in M\}$ must be part of a regular $|M|$-gon $M'$, different from $M$. It comes out, in general, that $\{\varphi^i(r) ~|~ r \in M,~ i=0,1,\ldots, \rho(R)-1\}$ form a regular $\mbox{lcm}(\rho(R),|M|)$-gon, where $\mbox{lcm}(a,b)$ denotes the least common multiple of $a$ and $b$. Since by hypothesis $\rho(R)$ divides $\rho(F)$ and, by definition of $\M(C)$, also $|M|$ divides $\rho(F)$, then $\mbox{lcm}(\rho(R),|M|)$ divides $\rho(F)$ as well. If $\rho(R)$ does not divide $|M|$, then $\mbox{lcm}(\rho(R),|M|)>|M|$ but this contradicts the maximality of $|M|$.
\end{proof}


\subsection{View of robots}\label{ssec:view}
We now formalize the concept of \emph{view} of a point in the Euclidean plane according to our needs (cf. Section~\ref{ssec:robot-view}). 
%
Let $P$ be a generic multiset of points not including $c=c(P)$. For $p\in P$, we denote by $V(p)$ the view of $P$ computed from $p$. This is a sequence of couples (angle, distance) defined as follows: first $(0,d(c,p))$ then, in order from the farthest to the closest point to $c$, all couples $(0,d(c,p'))$ for any $p'\neq p$ in $\halfline(c,p)$, and successively all couples $(\angolo(p,c,p'), d(c,p'))$ arising from all other rays processed in clockwise order and points $p'$ from the farthest to the closest ones to $c$, foreach ray.
If $p=c(P)$ then $p$ is said the point in $P$ of minimum view, otherwise any $p = argmin\{V(p') : p'\in P\}$ is said of minimum view in $P$.

These definitions naturally extend to any configuration $R$ of robots and to a pattern $F$ as well. In particular, as we are dealing with robots endowed with chirality, the clockwise direction used in the definition of the view is  well-defined. 

As already observed in Section~\ref{ssec:robot-view}, if each robot can be associated with a unique view, then the configuration is perceived as asymmetric. For instance, in Figure~\ref{fig:rho}, configurations (a), (b) and (c) are all perceived as symmetric, whereas (d) is not as the clockwise direction produces different views to the potentially specular robots. In practice, the effect of assuming chirality results in breaking all reflection axes by means of the view. It comes out that if a robot views a configurations as symmetric, the only type of symmetry it can perceive is the rotation. In an asymmetric configuration, instead, each robot is associated with a different view and in particular there is only one robot associated with the minimum view. However, when there is a single robot $r$ occupying $c(R)$ as in Figure~\ref{fig:rho}.(c), then $r$ is the only robot of minimum view by definition. This property can be exploited to break a possible rotation, if required. It follows that when $\rho(R)=1$ then $R$ is either perceived as asymmetric or there is a single robot in $c(R)$. 
%

\section{The algorithm for \pf}\label{sec:algorithm}
In this section we present our algorithm for solving the \pf problem for \async robots endowed with chirality. This algorithm is designed according to the methodology provided in Section~\ref{sec:methodology}.  

Before presenting the algorithm, we recall that any input configuration $R$ does not contain multiplicities. Concerning the number of robots $n$, we assume $n\ge 3$, since for $n=1$ the \pf\ problem is trivial and for $n=2$, either \pf\ is trivial or unsolvable depending whether $F$ is composed of two or one point~\cite{CFPS12}, respectively. Concerning the pattern $F$ to form, it might contain multiplicities. Moreover, according to Theorem~\ref{th:rho}, we assume that $\rho(R)$ is a divisor of $\rho(F)$ (otherwise $R\in \U(F)$, that is $R$ is unsolvable).

In the remainder, we first provide a high-level description of our strategy for the decomposition of the \pf problem into tasks (cf. Section~\ref{ssec:algorithm:subdivision}), then we summarize all the defined tasks (cf. Section~\ref{ssec:summarized-tasks}), and finally we present all the details of our algorithm concerning tasks' predicates, moves, and transition graph (cf.  Section~\ref{sec:algorithm:tasks}). Notice that in  Section~\ref{sec:example} we provide an explanatory example about the behavior of the proposed algorithm, and there we provide some missing details about moves. 

\subsection{Subdivision into tasks}\label{ssec:algorithm:subdivision}

As suggested in Section~\ref{ssec:methodology:decomposition}, here we describe a hierarchical decomposition of \pf into sub-problems so that each sub-problem is simple enough to be formalized as a task realizable by (a subset of) robots.

The problem is initially divided into six sub-problems denoted as \emph{Symmetry Breaking} (\SB), \emph{Reference System} (\RS), \emph{Partial Pattern Formation} (\PPF), \emph{Finalization} (\Fin),  \emph{Special Cases} (\SC), and \emph{Termination} (\Term). Some of these sub-problems are further refined until the corresponding tasks can be easily formalized. These initial six sub-problems are described by assuming an initial configuration $R$ to be transformed into a pattern $F$. 
 
 \blocco{Symmetry Breaking \emph{(\SB)}} Consider the case in which the initial configuration admits a rotation due to an automorphism $\varphi$ whose order $p$ is  not a divisor of $\rho(F)$. In this situation, by~\cite{SY99}, $\rho(R)$ must be necessarily equal to one as otherwise the problem would be unsolvable. It follows that by the definition of symmetricity, there must be a robot occupying $c(R)$. It is mandatory for each solving algorithm 
 to break this symmetry. In fact, without breaking the symmetry, any pair of symmetric robots may perform the same kind of movements and this may prevent the formation of the desired pattern. 
 
In our strategy, a single task $T_1$ is used to address the problem \SB. This task requires to carefully move the robot away from the center until to obtain a stationary asymmetric configuration. 
The main difficulties for \SB are: (1) to avoid the formation of other symmetries that could prevent the pattern formation and (2) to correctly face the situation in which multiple steps are necessary to reach the target. In the latter case, the algorithm must detect whether there is a possible robot moving that has not yet reached a designed target.

Notice that we consider \SB as a task of the Reference System sub-problem that we are going to describe in the next paragraphs.  
 
\blocco{Reference System \emph{(\RS)} - (How to embed $F$ on $R$)} This sub-problem concerns one of the main difficulties arising when the pattern formation problem is addressed: the lack of a unique embedding of $F$ on $R$ that allows each robot to uniquely identify its target (the final destination point to form the pattern). In particular, $\RS$ can be described as the problem of moving or matching some (minimal number of) robots into specific positions such that they can be used by any other robot as a common reference system. Such a reference system should imply a unique mapping from robots to targets, and should be maintained along all the movements of robots. 

As preliminary embedding of $F$ on $R$, it is assumed $C(F)$ matches with $C(R)$.
Then, $\RS$ is solved by leaving on (or moving to) $C(R)$ a number 
$m\ge 2$ of robots so that $m$ divides $\rho(F)$.\footnote{
Our strategy requires to solve $\RS$ only when $\rho(F)>1$ and $\delta(F)>0$.  This will be explained at the end of Section~\ref{ssec:summarized-tasks}.} 
Successively, if required, the $m$ robots left on $C(R)$ are rotated so as to form a regular $m$-gon. In doing so, the full embedding of $F$ on $R$ can be easily determined by matching the $m$ robots on $C(R)$ with $m$ points on $C(F)$: if there are exactly $m$ points in $\partial C(F)$ the embedding is unique, if there are $k\cdot m$ points, with $k\ge 2$, the $m$ robots on $C(R)$ are matched with the $m$ points in  $\partial C(F)$ having minimum view. As long as no further robots are moved to $C(R)$ and the $m$ robots on $C(R)$ are not moved, the embedding of $F$ on $R$ remains well-defined. Finally, in order to guarantee stationarity before changing task, we require not only the formation of the regular $m$-gon but also that $\Ann$  - i.e., the annulus between $C(R)$ and $\CT$ - does not contain robots.

Since $\RS$ is a complex problem, it is further divided into six sub-problems. As already pointed out, the first sub-problem is \SB, then we need to specify $\RS_1$,$\RS_2\ldots$, $\RS_5$. They are detailed as follows:

\begin{itemize}
\item $\RS_1$ is responsible for opportunely moving toward $\CT$ all robots in $\Ann$, that is robots residing in the area between $C(R)$ and $\CT$ - this problem is associated to task $T_2$.

\item $\RS_2$ is responsible for removing robots from $C(R)$ when too many robots reside there. Since such a removal can be performed in two different ways, this problem is further subdivided: 
\begin{itemize}
\item 
$\RS_{2.1}$ considers configurations where $\M(C(R)) \neq \emptyset$, that is configurations having regular $m$-gons on $C(R)$ such that $m>1$ and $m$ divides $\rho(F)$. This task removes robots from $C(R)$ until exactly one maximal regular $m$-gon of $\M(C(R))$ remains - this problem is associated to task $T_3$; 
\item
$\RS_{2.2}$ considers configurations where $\M(C(R)) = \emptyset$, that is configurations without regular $m$-gons on $C(R)$ such that $m>1$ and $m$ divides $\rho(F)$. Since such configurations are asymmetric, this task removes one non-critical robot at a time from $C(R)$ until exactly $m$ robots remain, with $m$ being the minimal prime factor of $\rho(F)$ or $m=3$ (and subsequently two antipodal robots must be created by task $T_6$ in order to remove a non-critical robot from $C(R)$) - this problem is associated to task $T_4$.
\end{itemize} 
             
\item 
$\RS_3$ is responsible for moving robots to $C(R)$ when there are too few robots on $C(R)$ with respect to $\rho(F)$. In particular, this task is responsible for moving robots from the interior toward $C(R)$ so as to obtain on $C(R)$ a number $m$ of robots equal to the minimal prime factor of $\rho(F)$ - this problem is associated to task $T_5$.

\item $\RS_4$ is responsible for creating two antipodal robots on $C(R)$; it could be necessary as a next task of $T_4$ when three robots are on $C(R)$ but three is not a divisor of $\rho(F)$  - this problem is associated to task $T_6$.

\item $\RS_5$ is responsible for forming a uniform circle on $C(R)$ when the number $m$ of robots on it is equal to the minimal prime factor of $\rho(F)$ - this problem is associated to task $T_7$.
\end{itemize}

\blocco{Partial Pattern Formation \emph{(\PPF)}} The main difficulties in this task are to preserve the reference system and to avoid collisions during the movements.
The task concerns moving all robots inside $C(R)$ so as to form a preliminary pattern $F'$ defined from $F$ as follows. Pattern $F'$ differs from $F$ only for those possible points on $C(F)$ different from the $m$ ones already matched by the resolution of problem $\RS$ - notice that $\PPF$ is addressed only once $\RS$ is solved. Such points, if any, are instead radially projected to $\CT$ in $F'$. In our strategy, task $T_8$ is designed to solve this problem. 
For addressing this task we consider the area delimited by $C(R)$ as divided into $m$ sectors. Within each sector we can guarantee that at most one robot per time is chosen to be moved toward its target: it is the one not on a target, closest to an unoccupied target, and of minimum view in case of tie. We are ensured that always one single robot $r$ per sector will be selected since the maximum symmetricity that the configuration can assume is $m$ (we recall that, due to the solution provided for the $\RS$ problem, the robots on $\partial C(R)$ form a regular $m$-gon). 
For each sector, the selected robot is then moved toward one of the closest targets until it reaches such a point if it resides inside the same sector, or it reaches the successive (clockwise) sector. 
All moves must be performed so as to avoid the occurrence of collisions; hence, it follows that sometimes the movements are not straightforward toward the target point. To this end we exploit a kind of Manhattan distance (called here \emph{Sectorial distance}) where moving between two points in the area delimited by $C(R)$ is constrained by rotating along concentric circles centered at $c(R)$ and moving along rays starting from $c(R)$. 

In order to solve $\PPF$, we make use of a procedure called $\Distmin()$ designed ad-hoc for computing the required trajectories according to the Sectorial distance. Once $F'$ is formed, either $F'$ coincides with $F$ or it only remains to radially move robots from $\CT$ to $C(R)$. To this aim problem $\Fin$ is addressed.

\blocco{Finalization \emph{(\Fin)}} It refers to the so-called finalization task. It occurs when the only robots not well positioned according to $F$ are those on $\CT$. By guaranteeing radial movements of such robots toward $C(R)$, the formation of pattern $F$ is completed. In our strategy, task $T_9$ is designed to solve this problem. It is worth to mention that while moving robots from $\CT$ to $C(R)$, the common reference system might be loss. However, we are able to guarantee that robots can always detect they are solving $\Fin$.  

\blocco{Special Cases \emph{(\SC)}} 
This concerns the resolution of some easily identifiable sub-cases that have been already solved in the literature and hence can be treated apart by known algorithms.
For the sake of convenience, in our strategy the resolution of the special case in which $F$ is composed of one point with multiplicity $|R|$ (a.k.a. \gath) is delegated to~\cite{CFPS12}. Similarly, when $\rho(F)=1$ then~\cite{CDN19} is applied as a subroutine. In both cases, the identification of the sub-problem is determined simply by looking at $F$, that is it does not depend on the robot movements. For such cases, our strategy considers a specific task $T_{10}$.
 
\blocco{Termination \emph{(\Term)}} It refers to the requirement of letting robots recognize the pattern has been formed, hence no more movements are required. In our strategy, a task $T_{11}$ is designed to address this problem. Clearly, only $\nil$ movements are allowed, hence if the task is started from a stationary configuration, then it won't be possible to switch to any other task.


\subsection{The designed tasks}\label{ssec:summarized-tasks}
By summarizing the above analysis and according to the proposed methodology,  we can say that our strategy partitions the \pf problem into the following eleven tasks $T_1$, $T_2$, $\ldots$, $T_{11}$:
\begin{itemize}
\item[-] \RS: Create a common reference system. 
              General sub-problem further divided into \SB, $\RS_1$, $\RS_2$, $\ldots$, $\RS_5$:
   \begin{itemize}
   \item[-] \SB\ - Ensure $c(R)$ empty: \textit{task $T_1$}.
   \item[-] $\RS_1$ - Make $\Ann$ empty to ensure stationarity: \textit{task $T_2$}.
   \item[-] $\RS_2$: Sub-problem concerning the removal of robots from $C(R)$ 
         until $|\partial C(R)|$ divides $\rho(F)$. It is further divided 
         into two tasks according to the cardinality of $\M(C(R))$:
       \begin{itemize}
       \item[-] $\RS_{2.1}$ - Case $\M(C(R) \neq \emptyset$:  
              remove robots from $C(R)$ until exactly one maximal regular 
              $m$-gon of $\M(C(R))$ remains: \textit{task  $T_3$};
       \item[-] $\RS_{2.2}$ -  Case $\M(C(R)) = \emptyset$: remove robots 
                from $C(R)$ until exactly $m$ robots remain, with $m$ 
                being either the minimal prime factor of $\rho(F)$, 
                or $m=3$: \textit{task $T_4$}.
       \end{itemize}
   \item[-] $\RS_3$ - Bring robots to $C(R)$ until $|\partial C(R)|$ 
         divides $\rho(F)$:  \textit{task $T_5$}.
   \item[-] $\RS_4$ - Create two antipodal robots on $C(R)$: \textit{task $T_6$}.
   \item[-]  $\RS_5$ - Create a regular $m$-gon on $C(R)$: \textit{task $T_7$}.
   \end{itemize}
   \item[-] \PPF\ - Make a partial pattern formation: \textit{task $T_8$}.
   \item[-] \Fin\ - Finalize the pattern formation: \textit{task $T_9$}.
   \item[-] \SC\ -  Solve \pf by means of other algorithms when $F$ is composed 
            of one point with multiplicity $|R|$ or
            $\rho(F)=1$:  \textit{task $T_{10}$}.
   \item[-] \Term\ - Identify that $F$ is formed and hence maintain 
         each robot without moving: \textit{task $T_{11}$}.
\end{itemize}

\smallskip\noindent
We remark that task $T_{10}$ uses known algorithms to address the cases in which (1) $\rho(F) = 1$ or (2) $F$ is composed of one point with multiplicity $|R|$ (that is, $\delta(C(F)) = 0$). As a consequence, in each task different from $T_{10}$ our strategy can assume the following conditions:  $\rho(F) > 1$ and $\delta(C(F)) > 0$. 

Summarizing, our strategy will be based on the next properties maintained valid in each task different from $T_{10}$: 
\begin{itemize}
\item
points in $\partial C(F)$ form regular $m$-gons with $m\ge 2$;
\item
$C(F)\equiv C(R)$;
\item
robots movements never change the radius and the center of $C(R)$.
\end{itemize}
\subsection{Tasks' predicates and moves}\label{sec:algorithm:tasks}

Here we provide all the details about tasks' predicates, moves, and the transition graph for our algorithm, as suggested by the methodology in Sections~\ref{ssec:methodology:predicates} and~\ref{ssec:methodology:execution}. In particular, Table~\ref{tab:basic-variables} shows
the basic variables that capture all the metric/topological/numerical/ordinal aspects that are relevant for our strategy. Notice that most of them capture the relationships between the number of robots on $C(R)$ and $\rho(F)$, as required by the tasks associated to the sub-problem $\RS$.

\begin{table*}
\small
\caption{ The basic Boolean variables used to define all the tasks' preconditions. }
\label{tab:basic-variables}
\bgroup
\def\arraystretch{1.4}
\setlength{\tabcolsep}{6pt}
\begin{center}
  \begin{tabular}{ | c | p{0.75 \textwidth} | }
    \hline
    \textit{var}  &  \textit{definition} \\ \hline \hline
    $\xduno$ & $|\partial C(R)|$ is not a divisor of $\rho(F)$  
    \\ \hline
    $\xddue$ & $|\partial C(R)|$ is not the minimal prime factor of $\rho(F)$ 
               \\ \hline
    $\xf$ & $|\partial C(R)|$ is smaller than the minimal prime factor of $\rho(F)$         
            \\ \hline
    $\xt$ & $|\partial C(R)|=3$ and 2 is a divisor of $\rho(F)$  \\ \hline
    $\xu$ & Robots in $\partial C(R)$ form a regular $m$-gon \\ \hline
    $\xc$ & $\partial C_{\uparrow}^{1}(R) =\{r\}$ and 
            $d(r,c(R)) < \delta(\CB)$ \\ \hline 
    $\xa$ & $\Rob(\Ann)$ is empty \\ \hline
    $\xm$ & $\M(C(R))$ is empty  
    \\ \hline
    $\xp$ & $F$ can be obtained by projecting radially on $C(R)$ all 
            robots in $\Ann \cup \CT$ \\ \hline
    $\xg$ & $\rho(F)=1$ or $F$ contains only one element with multiplicity 
            $|R|$ \\ \hline
    $\xw$ & $R$ is similar to $F$ \\ \hline
  \end{tabular}
\end{center}
\egroup
\end{table*}

Table~\ref{tab:tasks-bis} summarizes all the ingredients determined by the proposed methodology: the first two (general) columns recall the hierarchical decomposition described in the previous section, the third column associates tasks names to sub-problems, and the fourth column defines precondition $\pre_i$ for each task $T_i$ (cf. Section~\ref{ssec:methodology:predicates}). These preconditions must be considered according to Equation~\ref{eq:predicates}, that is the predicate $P_i$ associated to task $T_i$ is defined as 
\begin{equation}\label{eq:2}
P_i = \pre_i \wedge \neg ( \pre_{i+1} \vee \pre_{i+2} \vee \ldots \vee \pre_{11}), \text{ for each } i=1,2,\ldots,11  
\end{equation} 	
As a consequence, such predicates are intended to be used in the \compute phase of each robot as presented in Algorithm~\ref{alg:compute}.

The fifth column of Table~\ref{tab:tasks-bis} contains the name of  the move used in each task (we simply denote as $m_i$ the move used in task $T_i$), and the specification of each move is provided in Table~\ref{tab:moves}. Notice that in Table~\ref{tab:moves} some moves are directly specified, while a few of them are defined by means of specific procedures (namely, $\GoToC$, $\Distmin$, $\Circle$, $\Gathering$, and $\Leader$ - formally defined in the next section). Moreover, all the trajectories defined in the moves are always straight lines, or arcs of circles centered in $c(R)$, or compositions of both in order to guarantee stationarity and to avoid collisions. More details that specify all target points and trajectories will be provided in Section~\ref{sec:correctness}.

The last column of Table~\ref{tab:tasks-bis} reports the possible transitions for each task. For instance, while performing task $T_1$ our algorithm may generate configurations belonging to the classes associated to tasks $T_1,T_2,\ldots,T_6$, and during task $T_9$ only configurations belonging to the classes $T_9$ and $T_{11}$ may be generated.
According to the proposed methodology, all such transitions are summarized in the transition graph (cf. Section~\ref{ssec:methodology:execution}) shown in Figure~\ref{fig:transitions2}.

\begin{table*}[ht]
\small
\caption{ Algorithm for \pf. }
\label{tab:tasks-bis}
\bgroup
\def\arraystretch{1.4}
\setlength{\tabcolsep}{6pt}
\begin{center}
  \begin{tabular}{ | c | l | l | l | c | r | l | l | }
    \hline
    \textit{problem} & \multicolumn{3}{l|}{\textit{sub-problem}} & \textit{task} &  \textit{precondition} &  \textit{move} &  \textit{transitions} \\ \hline \hline

  \multirow{11}{*}{\raggedleft \pf } & \multirow{7}{*}{\raggedleft \RS }  & \multicolumn{2}{l|}{ \SB }  & \textit{ $T_1$} & \textit{true}  &  $m_1$ & $T_1,T_2,T_3,T_4,T_5,T_6$ \\ \cline{3-8}

 &  & \multicolumn{2}{l|}{ $\RS_1$ }  & \textit{ $T_2$} & $\predue$  & $m_2$ & $T_2,T_3,T_4,T_6,T_7,T_8$ \\ \cline{3-8}

&  & \multirow{2}{*}{\raggedleft $\RS_2$ }  & $\RS_{2.1}$ & \textit{ $T_3$} & $\pretre$  & $m_{3}$ & $T_2,T_3,T_8$ \\ \cline{4-8}

&  &  & $\RS_{2.2}$ & \textit{ $T_4$} & $\prequattro$  & $m_{4}$ & $T_2,T_4,T_6,T_7$ \\ \cline{3-8}
                          
&  & \multicolumn{2}{l|}{ $\RS_3$ }  & \textit{ $T_5$} & $\precinque$  & $m_5$ & $T_2,T_5,T_7$ \\ \cline{3-8}                         
 
&  & \multicolumn{2}{l|}{ $\RS_4$ }  & \textit{ $T_6$} & $\presei$   & $m_6$ & $T_3,T_6,T_9$ \\ \cline{3-8} 

&  & \multicolumn{2}{l|}{ $\RS_5$ }  & \textit{ $T_7$} & $\presette$  & $m_7$ & $T_7,T_8,T_9,T_{11}$ \\ \cline{2-8}                                                                   

&  \multicolumn{3}{l|}{ \PPF }  &  \textit{ $T_8$} & $\preotto$ &  $m_8$  & $T_8,T_9,T_{11}$ \\ \cline{2-8}
&  \multicolumn{3}{l|}{ \Fin }  & \textit{ $T_9$} &  $\prenove$ &  $m_9$  & $T_9,T_{11}$ \\ \cline{2-8}
&  \multicolumn{3}{l|}{ \SC}  &  \textit{ $T_{10}$} & $\predieci$ &  $m_{10}$  & $T_{10},T_{11}$ \\ \cline{2-8}
&  \multicolumn{3}{l|}{ \Term }  & \textit{ $T_{11}$} & $\preundici$ &  $\nil$  & $T_{11}$\\ \hline

\hline
  \end{tabular}
\end{center}
\egroup
\end{table*}

\begin{table*}[ht]
\small
\caption{ Moves associated to tasks. }
\label{tab:moves}
\bgroup
\def\arraystretch{1.3}
\setlength{\tabcolsep}{5pt}
\begin{center}
  \begin{tabular}{ | c | p{0.87 \textwidth} | }
    \hline
    \textit{move}  &  \textit{definition} \\ \hline \hline
      
    $m_1$ & Robot $r \in \partial C_{\uparrow}^{1}(R)$ moves radially to $\CB$ \\ \hline
    $m_2$ & 
            Let $C = C_{\uparrow}^{i}(R)$ be the circle contained in  $\Ann$ and 
            with minimum index $i$. 
            If $\partial C\setminus \M'(C)\neq \emptyset$ then
            let 
               $R_2$ be the set of robots in $\partial C\setminus \M'(C)$ of minimal view else let $R_2$ be the set of 
            robots on $C$ of minimal view -- 
            call $\GoToC( R_2 )$ \\ \hline
    $m_{3}$ &  If $\partial C(R)\setminus \M'(C(R))\neq \emptyset$ then let 
               $R_3$ be the set of robots in $\partial C(R)\setminus \M'(C(R))$ of minimal view else let $R_3$ be the set of robots on $C(R)$ of minimal view --
               call $\GoToC( R_3 )$\\ \hline
    $m_{4}$ &  Let $r$ be the non-critical robot in $\partial C(R)$ of minimal 
               view and let $R_4= \{r\}$ --   
               call $\GoToC( R_4 )$ \\ \hline
    $m_5$ & 
    A point $p \in C(R)$ is said \emph{forbidden} for $C(R)$ if it forms an angle of $\frac {2\pi} n \cdot k$ degrees in $c(R)$ with any robot on $C(R)$, for $k = 0,1,\ldots, n$ (with $n$ being the number of robots);
        Let $r$ be the robot in $\partial C_{\downarrow}^{1}(R)$ having minimum view; 
            $r$ moves toward $C(R)$ avoiding forbidden points \\ \hline
    $m_6$ & The three robots on $C(R)$ form a triangle with angles 
            $\alpha_1 \ge \alpha_2 \ge \alpha_3$ and let $r_1$, $r_2$ and $r_3$ 
            be the three corresponding robots. For equal angles, the role of the 
            robot is selected according to the view, i.e. if $\alpha_1 = \alpha_2$ 
            then the view of $r_1$ is smaller than that of $r_2$. Robot $r_2$ rotates toward 
            the point $t$ such that $\alpha_1$ becomes of $90^{\circ}$ \\ \hline
    $m_7$ & Call $\Circle(\alpha)$, where $\alpha=2\pi / | \partial C(R)|$ \\ \hline
    $m_8$ & Call $\Distmin()$  \\ \hline
    $m_9$ & All robots in $\Ann \cup \CT$ radially move toward $C(R)$ \\ \hline
    $m_{10}$ & If $F$ is composed of one point with multiplicity $|R|$ then call 
            $\Gathering()$;\newline
            If $\rho(F)=1$ then call $\Leader()$
  \\ \hline

  \end{tabular}
\end{center}
\egroup
\end{table*}

\begin{figure}[ht]
\begin{center}
\scalebox{0.50}{\input 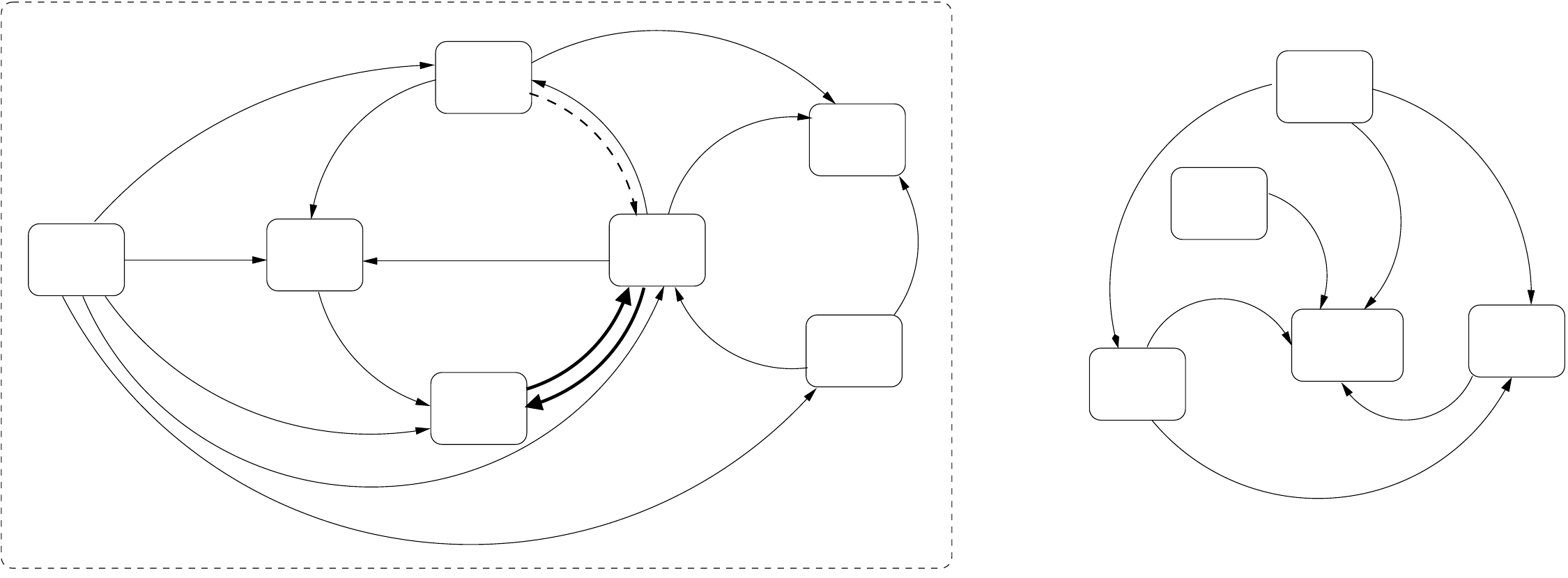_t }
\end{center}
\caption{For sake of presentation, the transition graph is divided into two parts: $(a)$ transitions among tasks in $\RS$, and $(b)$ transitions among $\RS$ and the tasks associated to sub-problems $\PPF$, $\Fin$, $\SC$, $\Term$.  The transitions represented by bold arrows (from/to $T_2$ to/from $T_3$) are unclassified, the one represented by the dashed arrow (from $T_4$ to $T_2$) is robust, all the others are stationary. The types of the self-loops - omitted from each task - will be discussed in the correctness proof in Section~\ref{sec:correctness}. Notice that apart for the self-loops, the only simple cycles are: $(T_2,T_{3})$, $(T_2,T_{4})$, $(T_2,T_6,T_{3})$, $(T_2,T_{4},T_6,T_{3})$.
}
\label{fig:transitions2}
\end{figure}

\section{Explanatory example and moves details}\label{sec:example}
In this section we provide an explanatory example about the behavior of the proposed algorithm for the \pf problem.  We take advantage of this example to provide the missing details about moves. In particular, we provide the pseudo-code of procedures $\GoToC$, $\Distmin$, and $\Circle$, along with their correctness. We also briefly discuss how algorithms $\Gathering$ from~\cite{CFPS12} and $\Leader$ from~\cite{CDN19} are exploited. We also formally prove some properties about these procedures.

\begin{figure}[ht]
\begin{center}
\scalebox{0.95}{\input 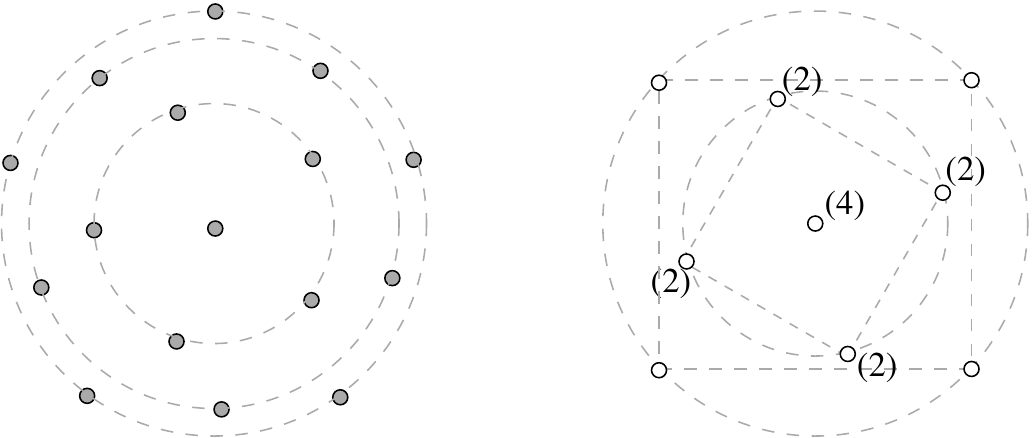_t }
\end{center}
\caption{The input for the \pf\ problem that we use as running example throughout Section~\ref{sec:example}. Notice that the initial configuration $R$ is composed of 16 robots and $\rho(R)=1$, while the pattern $F$ has symmetricity $\rho(F)=4$ (numbers close to points refer to multiplicities).
}
\label{fig:input}
\end{figure}

The example is based on the input defined in Figure~\ref{fig:input}. Notice that both the configuration $R$ and the pattern $F$ defined in the example are symmetric but $\rho(R)=1$ and $\rho(F)=4$. In the next subsections, we analyze each task separately, according to the order dictated by a possible execution of the algorithm.

\subsection{Task $T_1$}
This task is associated to the sub-problem \SB. As already remarked, this sub-problem is thought for breaking possible symmetries by moving a robot $r$ from $c(R)$ (i.e., when $\rho(R)=1$). 

Concerning the current example, we now show that configuration $R$ in Figure~\ref{fig:input} belongs to task $T_1$. Each robot can detect this situation by evaluating the predicates characterizing each task. First, notice that variable $\xc$ holds in $R$, and this immediately implies that the configuration does not belong to any of tasks $T_2$, $\ldots$, $T_6$ (in fact, from Table~\ref{tab:tasks-bis} it follows that variable $\xc$ is negated in each precondition of these tasks). Since there are five robots on $C(R)$ and $\rho(F)=4$, then each robot deduces that both $\xduno$ and $\xddue$ are true in $R$: this implies that $R$ does not belong to $T_7$ nor to $T_8$.  
Variable $\xp$ is false in $R$ since $F$ cannot be obtained by radially projecting on $C(R)$ all robots in $\Ann \cup \CT$ (to observe $\Ann$ and $\CT$ refer to Figure~\ref{fig:T1}). According to the value of $\xp$, $R\not \in T_9$.
Variable $\xg$ is false as $\rho(F)=4$, hence $R\not \in T_{10}$. Finally, $\xw$ is false as $R$ is not similar to $F$ and hence $R\not \in T_{11}$. By concluding this analysis, it follows that $R$ does not belong to any of tasks $T_2$, $\ldots$, $T_{11}$ and according to precondition of $T_1$ and to definition of predicate $P_1$ -- cf. Equation~\ref{eq:2}, it follows that $R\in T_1$.

Since $R\in T_1$ then move $m_1$ is applied by the algorithm (cf. Figure~\ref{fig:T1}, left side). Robot $r$ located on $c(R)$ is moved radially along any direction to reach the parking circle $\CB$ in order to guarantee stationarity.\footnote{For the sake of completeness the exact direction toward which the robot moves will be specified in Section~\ref{sec:correctness}.}
It is worth to remark that even though the initial configuration does not admit symmetry, but there is a robot at a distance from $c(R)$ smaller than $\delta(\CB)$, then it is moved to the parking circle $\CB$ before starting any other task. 

Once the robot in $c(R)$ has reached the specified target (possibly within multiple LCM-cycles), configuration in Figure~\ref{fig:T1}, right side, is obtained. The obtained configuration is stationary and belongs to task $T_2$.

\begin{figure}[ht]
\begin{center}
\scalebox{1.0}{\input 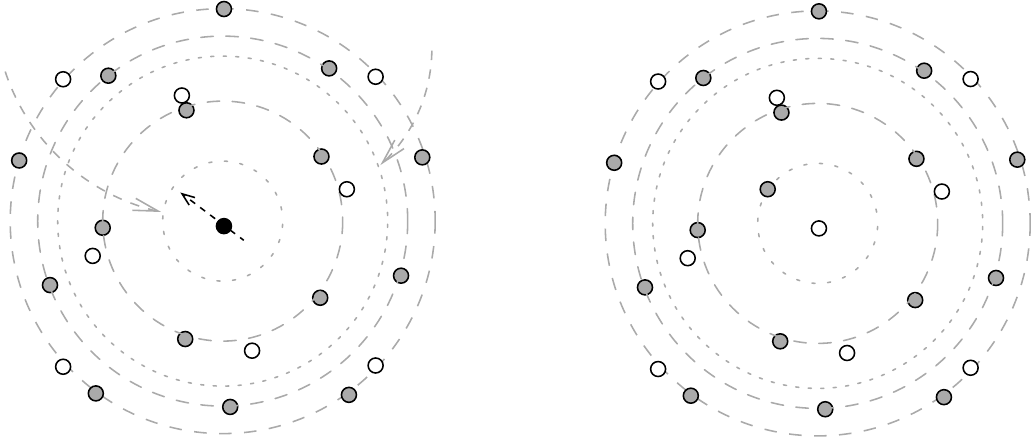_t }
\end{center}
\caption{Task $T_1$: Ensure $c(R)$ empty. Notice the parking circles $\CT$ and $\CB$.
}
\label{fig:T1}
\end{figure}
%

\subsection{Task $T_2$}
This task is responsible for the correct removal of the robots from $\Ann$, and their movement toward the parking circle $\CT$ without generating unsolvable configurations. This removal is done in order to guarantee stationarity when later the algorithm starts removing robots from $C(R)$, when needed. Notice that there might be a number of robots equal to $\rho(R)$ that can move concurrently according to $m_2$ (this occurs when the processed configuration is symmetric). 

To perform this task, all robots in $\Ann$ eventually move according to the trajectory computed by Procedure $\GoToC$ specified in Algorithm~\ref{alg:gotoc} and used by move $m_2$. 

When Procedure $\GoToC$ is executed by a robot $r$, such robot is required to move toward a point of an arc of $\CT$ denoted as $A'_r$. In particular, $r$ is required to reach the leftmost endpoint (denoted as $a_r$) of $A'_r$ or the middle point of $A'_r$ according whether $a_r$ is a ``forbidden point for $\CT$'' or not. Informally, a point of $\CT$ is forbidden if it may form a regular $n$-gon along with the points occupied by some robots already located on $\CT$. The rationale underlying this definition is that when $r$ reaches $\CT$ all robots in such a circle are non-equivalent; this helps to ensure that no unsolvable configurations are created. Concerning the formal definition of $A'_r$, it depends on $r$ and various other parameters (for a visualization of the most of them, refer to Figure~\ref{fig:T2}). In what follows we formalize all such parameters. To this aim, assume that $\GoToC$ takes as input a set of robots $R_x \subseteq Rob(\Ann\cup C(R))$:

 \begin{itemize}
 	\item  Let $r\in R_x$ and $h=\halfline (c(R),r)$;
 	\item  Let $r^-$ be the robot on $C(R)$ such that $h^-=\halfline (c(R),r^-)$ overlaps $h$ by the minimal clockwise rotation;
 	\item Let $r^+$ be a robot in $\Ann\cup C(R)$ such that $h$ overlaps $h^+=\halfline (c(R),r^+)$ by the minimal clockwise rotation;
 	\item Let $\alpha$ be the size of the smallest angle greater than $\angolo (r^-,c(R),r)$, formed in $c(F)$ between two consecutive targets on $C(F)$;
 	\item Let $h'$ be the half-line obtained by rotating clockwise $h^-$ of $\alpha$ degrees;
 	\item Let $A_r$ be the portion of $\CT$ delimited by $h$ and the closest half-line between $h'$ and $h^+$. Let $a_r$ and $b_r$ the end points of $A_r$, such that $b_r$ follows $a_r$ in the clockwise order;
 	\item A point $p \in \CT$ is said \emph{forbidden} for $\CT$ if it forms an angle of $\frac {2\pi} n \cdot k$ degrees in $c(R)$ with any robot on $\CT$, for $k = 0,1,\ldots, n$ (we recall the reader that $n$ denotes the number of robots);
 	\item Let $A'_r$ be the sub-arc of $A_r$ starting from $a_r$ and ending at the closest point between $b_r$ and the first forbidden point for $\CT$ different from $a_r$ met in the clockwise order along $A_r$, if any.
 \end{itemize}

\begin{algorithm}
\caption{$\GoToC(R_x)$}\label{alg:gotoc}
\begin{algorithmic}[1]
\begin{small}
\IF{$r\in R_x$}
\IF{$a_r$ is not forbidden for $\CT$}
\STATE $r$ straightly moves toward $a_r$ 
\ELSE
\STATE Let $q$ be the middle point of arc $A'_r$;
\STATE $r$ straightly moves toward $q$ until reaching $\CT$ on the closest intersection point of $\CT$ and $[r,q]$.
\ENDIF
\ENDIF
\end{small}
\end{algorithmic} 
\end{algorithm}

\begin{figure}[ht]
\begin{center}
\scalebox{1.0}{\input 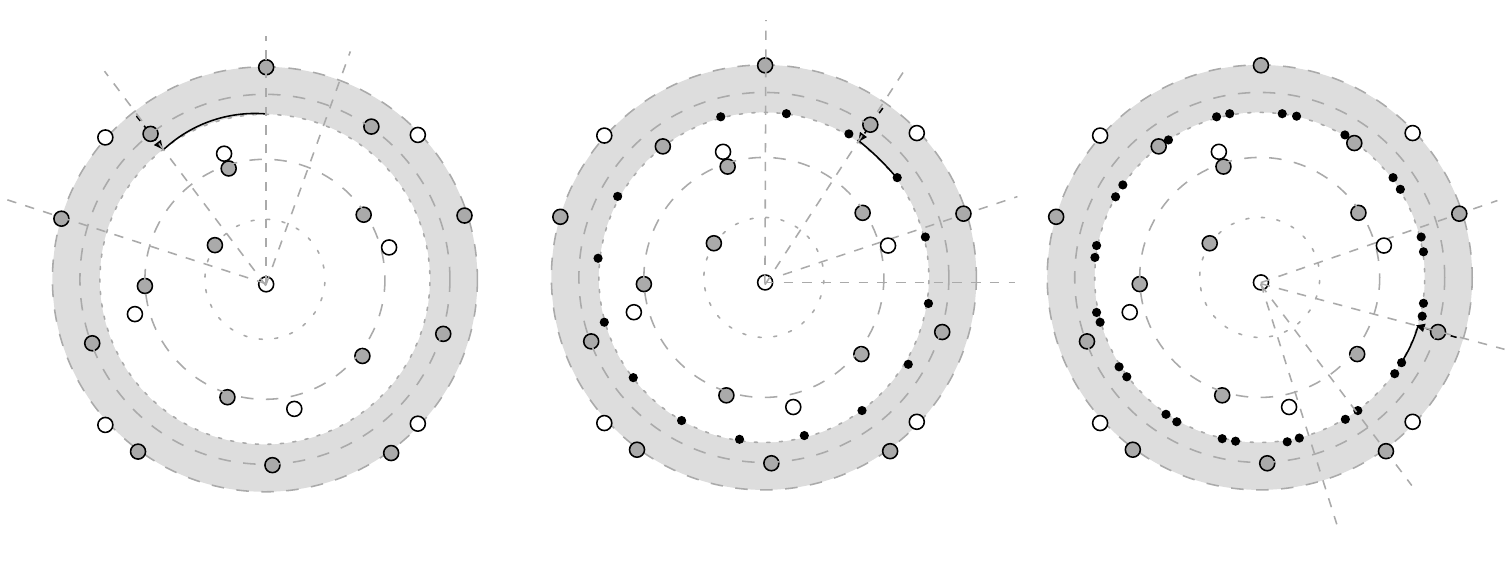_t }
\end{center}
\caption{Task $T_2$: Make $\Ann$ (the light-gray corona) empty to ensure stationarity (notice that only the trajectories of the first three moving robots are shown). Small black dots represent forbidden points for $\CT$. 
}
\label{fig:T2}
\end{figure}

By considering again our running example, we have configuration at Figure~\ref{fig:T1}, right side, as input for the current task $T_2$. As done for the analysis of task $T_1$, we now formally show that such configuration belongs to $T_2$. 

As analyzed for task $T_1$ we have the same values for variables $\xw$, $\xg$, $\xp$, $\xduno$, $\xddue$, so the configuration is not in $T_7$, $T_8$, $T_9$, $T_{10}$, and $T_{11}$. Variable $\xa$ is false since $\Ann$ contains robots. Hence the configuration is not in $T_6$, $T_4$, and $T_3$. About $T_5$, we have that $\xf$ is false as there are too many robots on $C(R)$ with respect to $\rho(F)$. Since variable $\xc$ is now false, then the configuration belongs to $T_2$.
  
By applying move $m_2$, all robots in $\Ann$ eventually move toward $\CT$. In particular, since all robots in $\Ann$ reside on a single circle (say $C$) and $\M(C)=\emptyset$ (on $C$ there are no regular $m$-gons such that $m>1$ and $m$ divides $\rho(F)=4$), then $m_2$ calls $\GoToC( R_2 )$ with $R_2$ containing all robots on $C$. The trajectories performed by some robots in $C$ are illustrated in Figure~\ref{fig:T2} (notice that, for sake of presentation, we assume that in such an example the asynchronous scheduler makes active one robot in $C$ at a time - cf. Figure~\ref{fig:T2} where the first three executions of $\GoToC$ are illustrated).  Once all robots in $\Ann$ reach $\CT$, as we will show the obtained configuration (cf. Figure~\ref{fig:T4}, left side) belongs to task $T_4$.


The next lemma gives important properties of Procedure $\GoToC$ when applied to an initial configuration belonging to $T_2$.

\begin{lemma}\label{lem:GoCorrectness}
 Let $R=R(t_0)$ be an initial configuration at time $t_0$ belonging to $T_2$, and $S(t_0)$ be the set of robots to move according to $m_2$.
There exists a time $t_k> t_0$ where the reached configuration $R'=R(t_k)$ differs from $R$ only for robots in $S(t_0)$ that are all on $\CT$ in $R'$, such that the following properties hold:
 \begin{enumerate}
  \item \label{p:1} 
        $R(t_i)$ belongs to $T_2$ 
        for each $t_0 < t_i < t_k$;
  \item\label{p:2} 
        $\rho(R(t_i))$ divides 
        $\rho(F)$ for each $t_0 < t_i \le t_k$;
  \item\label{p:3} $R'$ is stationary;
  \item\label{p:4} $R(t_i)$, $t_0 \leq t_i \leq t_k$, has no  multiplicities.
 \end{enumerate}
\end{lemma}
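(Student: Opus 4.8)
The plan is to follow the execution of move $m_2$ (i.e., the repeated calls to Procedure $\GoToC$ of Algorithm~\ref{alg:gotoc}) starting from $R(t_0)$, and to prove the four numbered properties as a bundle of invariants maintained along the sequence $R(t_0),R(t_1),\ldots$. The backbone of every argument is the observation that $m_2$ only ever designates robots lying strictly inside $C(R)$ — those of $\Ann$, picked on the innermost occupied circle — and pushes them inward toward $\CT$. Consequently $\partial C(R)$ is never touched, the moving robots stay interior to $C(R)$, and by Property~\ref{prop1} both the centre and the radius of $C(R)$ (hence the value $|\partial C(R)|$) are invariant for all $t_i\in[t_0,t_k]$. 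I will keep this invariance as a running hypothesis.

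First I would establish the preamble together with the collision-freeness of Property~\ref{p:4}, since the two are intertwined. Each designated robot $r$ is sent, via the arc $A'_r$, to a target point of $\CT$; the arc $A_r$ is carved out by the angular neighbours $r^-$ and $r^+$ and by the pattern-driven angle $\alpha$, so that distinct designated robots are routed to pairwise disjoint sub-arcs of $\CT$. Combined with the innermost-first selection rule of $m_2$ — outer robots of $\Ann$ remain idle while the inner ones travel inward — this guarantees that no two trajectories ever meet, so no robot of $\partial C(R)$ is recruited and no multiplicity is created at any $t_i$. For termination, each chosen robot traces a bounded straight segment terminating on $\CT$, so by non-rigidity it covers that segment in finitely many activations and lands on $\CT$; fairness then yields a finite $t_k$ at which every robot of $S(t_0)$ sits on $\CT$ while every other robot has not moved. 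One must also check that recomputing $m_2$ after a partial move keeps selecting exactly the transiting robots: a robot in transit occupies a circle inner to its original one but still inside $\Ann$, hence it stays on the innermost occupied circle.

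Next I would handle the symmetricity bound of Property~\ref{p:2}, the membership of Property~\ref{p:1}, and conclude with stationarity. For Property~\ref{p:2}, the forbidden-point rule of $\GoToC$ ensures that each robot lands on a point of $\CT$ forming no angle $\frac{2\pi}{n}k$ with a robot already on $\CT$; thus the robots accumulating on $\CT$ remain pairwise non-equivalent and cannot by themselves introduce a rotation, while $\partial C(R)$ is frozen and $\rho(R(t_0))$ already divides $\rho(F)$ by the standing assumption. Invoking Lemma~\ref{lem:divM} and the fact that any rotation of an intermediate configuration must be consistent with both the frozen $\partial C(R)$ and the non-equivalent robots on $\CT$, one concludes that $\rho(R(t_i))$ keeps dividing $\rho(F)$. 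For Property~\ref{p:1}, I would inspect the preconditions of Table~\ref{tab:tasks-bis} through Equation~\ref{eq:2}: for every $t_i\in(t_0,t_k)$ the annulus stays non-empty, so $\xa$ is false, which at once falsifies $\pre_3,\pre_4,\pre_6,\pre_7,\pre_8$. Because $\partial C(R)$ is frozen, the variables $\xf$ and $\xm$ keep their initial values (falsifying $\pre_5$ and, together with $\xp$ remaining false, $\pre_9$); the $F$-only variable $\xg$ is false by the assumption $\rho(F)>1$ and $\delta(C(F))>0$ (falsifying $\pre_{10}$); $\xw$ stays false since $F$ is not yet formed (falsifying $\pre_{11}$); and $\xc$ stays false as no robot moves toward the centre. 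As $\pre_2=\neg\xc$ holds, Equation~\ref{eq:2} then gives $R(t_i)\in T_2$. Finally, Property~\ref{p:3} follows from the preamble: at $t_k$ every robot of $S(t_0)$ has reached $\CT$ and thereby completed the straight move prescribed by $\GoToC$, whereas every other robot — never being in $\Ann$ — is assigned a $\nil$ move by $m_2$; by Definition~\ref{def:stationary} the configuration $R'$ is then stationary.

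The main obstacle will be the simultaneous control of asynchrony and the forbidden-point mechanism in the second and third steps. Because a robot may be executing a pending move whose target was computed from an outdated snapshot — taken before other (up to $\rho(R)$) symmetric minimal-view robots reached $\CT$ and modified the set of forbidden points — I must prove that such a stale target is still a legal, non-forbidden and collision-free landing position, i.e. that the sub-arcs $A'_r$ are robust to the arrivals occurring during a robot's transit. Equivalently, the delicate point is to show that the asynchronous, possibly concurrent arrival of the minimal-view robots never produces two coincident targets nor a spurious regular sub-gon on $\CT$ — which is exactly the content of Properties~\ref{p:2} and~\ref{p:4} — and that this remains true under every interleaving the adversary may choose.
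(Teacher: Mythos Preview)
Your plan follows the paper's own proof quite closely: you freeze $\partial C(R)$, check the preconditions of Table~\ref{tab:tasks-bis} to keep the execution in $T_2$, use the geometry of $\GoToC$ for collision-freeness, and invoke the forbidden-point mechanism together with Lemma~\ref{lem:divM} for the symmetricity bound. Two points, however, are handled more carefully in the paper and are genuine gaps in your outline.

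First, your check of Property~\ref{p:1} asserts that $\xp$ ``remains false'' without justification. This is not automatic: a robot of $S(t_0)$ drifting inside $\Ann$ could in principle sit on the radial projection of some point of $\partial C(F)$, making $\xp$ true and kicking the configuration into $T_9$. The paper rules this out via the angle $\alpha$ built into $\GoToC$: by construction $A_r$ is bounded on the clockwise side by $h'$, obtained by rotating the ray through $r^-$ by the smallest pattern-gap larger than $\angolo(r^-,c(R),r)$, so the moving robot can never line up with an unmatched point of $\partial C(F)$. You need this argument.

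Second, your treatment of Property~\ref{p:2} overstates what the forbidden-point rule buys you. You write that ``the robots accumulating on $\CT$ remain pairwise non-equivalent''. This is false when $\rho(R)>1$: up to $\rho(R)$ robots of equal minimal view may take their snapshots simultaneously, see no robot on $\CT$, and hence land on mutually forbidden points — the forbidden-point rule only looks at robots \emph{already} on $\CT$. The paper therefore does a case analysis (is $\Rob(\CT)$ empty? is $\partial C\setminus\M'(C)$ empty?) and argues that such simultaneously-arriving robots may indeed form a regular $|M'|$-gon on $\CT$, but that $|M'|$ divides $\rho(R)$ and hence $\rho(F)$; only once some robots have landed does the forbidden-point rule kick in to prevent any larger symmetricity. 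Your last paragraph correctly identifies this as the delicate step, but the resolution is exactly this case split rather than a blanket non-equivalence claim.
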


\begin{proof}
We now prove the existence of $R'$ and each property at Items~\ref{p:1}--\ref{p:4} in the statement.
\begin{itemize}
\item
About the existence of $R'$ and property at Item~\ref{p:1}.  

Let $C_{\uparrow}^{i}(R)$ be the circle in $\Ann$ closest to $\CT$. Then $S(t_0)\subseteq \Rob(C_{\uparrow}^{i}(R))$ according to move $m_2$. The call $\GoToC(S(t_0))$ aims to move all robots in $S(t_0)$ toward $\CT$. Let $\bar R = R(t_i)$, $t_0 < t_i < t_k$, and assume that some robots in $S(t_0)$ are not on $\CT$ in $\bar R$. We now show that $\bar{R}$ is still in $T_2$. 
  
Clearly $\bar{R}$ does not belong to $T_{11}$ as there are robots in $\Ann$. It does not belong to $T_{10}$ because of $\xg$ that only depends on $F$. In order to show it does not belong to $T_9$, it is sufficient to remind the area within a robot $r$ is moving according to Procedure $\GoToC$. In fact, this ensures that $\xp$ remains false because of the limit established by angle $\alpha$. Such a limit guarantees that along all its movement $r$ cannot be in a position corresponding to the projection of a point from $C(F)$ to $\CT$. $\bar{R}$ is not in $T_8$, $T_7$, $T_6$, $T_4$ and $T_3$ because $\xa=\false$. It is not in $T_5$ because from $\pre_2 \wedge \neg \pre_5$ we deduce that $\xf=\false$ in $R$ and since no robots are moved from $C(R)$ then $\xf$ remains false. Then $\bar{R}$ is in $T_2$ because the value of $\xc$ has not changed.

Being $\bar{R}$ in $T_2$, again Procedure $\GoToC$ is applied. Note that, the input provided to the successive calls of Procedure $\GoToC$ is constituted by a subset $S(t_i)\subseteq S(t_0)$. In fact, it involves robots lying on the current circle $C_{\uparrow}^{j}(R)$ in $\Ann$ closest to $\CT$, whose radius is certainly not greater than that of the initial $C_{\uparrow}^{i}(R)$ from where robots in $S(t_0)$ were selected. By applying the arguments above, we can state that by repeatedly applying $\GoToC$, the algorithm will lead all robots in $S(t_i)$ to reach $\CT$. This implies there exists a time where the portion of $\Ann$ delimited by $C_{\uparrow}^{i}(R)$ and $\CT$, and excluding such circles, will not contain robots, eventually.
At this time, either all robots contained in $S(t_0)$ have reached $\CT$ or some of them are still on $C_{\uparrow}^{i}(R)$. In the latter case, move $m_2$ ensures to call $\GoToC$ providing as input only the robots originally contained in $S(t_0)$. 

By reconsidering the above analysis, we conclude that all configurations generated while robots in $S(t_0)$ are moved toward $\CT$ belong to $T_2$. Once all such robots reach $\CT$, say at time $t_k> t_0$, then the requested configuration $R'$ is obtained.

\item
About property at Item~\ref{p:2}. 
Consider two different cases for $R=R(t_0)$: $\Rob(\CT)=\emptyset$  and $\Rob(\CT)\not = \emptyset$.

If $\Rob(\CT)=\emptyset$, let us first analyze the case when $\partial C_{\uparrow}^{i}(R)\setminus \M'(C_{\uparrow}^{i}(R)) = \emptyset$. When $m_2$ is applied to configuration $R=R(t_0)$, at most $S(t_0)$ robots will move at the same time. If  more than one robot moves, this is because they are of minimal view and, by Lemma~\ref{lem:divM}, if one of them belongs to an element $M \in \M(C_{\uparrow}^{i}(R))$, then all the other robots belong to the same regular $|M|$-gon. The robots move radially toward $\CT$, as so far there is no forbidden point for $\CT$. 
The robots that trace concurrently the same distance could form a regular $|M'|$-gon, but in this case, $|M'|\leq |M|$ and $|M'|$ divides $\rho(R)$. Then the symmetricity of the whole configuration divides $\rho(R)$,  which in turn divides $\rho(F)$.
Possibly, some robots reach $\CT$ whereas some other are stopped before by the adversary or they do not start moving yet. In such cases, the trajectories of the robots might change in order to reach $\CT$ by avoiding the forbidden points generated by robots arrived on $\CT$.
Then, each configuration obtained while the remaining robots move toward $\CT$ cannot have a symmetricity larger than $\rho(R)$ (this could be obtained only if the robots reach the forbidden points for $\CT$). Moreover the symmetricity of any of these configurations has to divide $\rho(R)$ because, otherwise, there is an automorphism $\varphi$ such that one robot $r$ of the first arrived on $\CT$ should be equivalent to a robot $r'=\varphi(r)$, but this violates the requirement for $r'$ to avoid forbidden points for $\CT$.

As the above property holds for each generated configuration $\bar{R}$, when robots reach $\CT$ by successive calls of Procedure $\GoToC$, then we conclude $\bar{R}$ is such that $\rho(\bar{R})$ divides $\rho(R)$ and then $\rho(F)$. The same considerations hold for $\rho(R')$.
 
Let us now analyze the case when $\Rob(\CT)=\emptyset$ and $\partial C_{\uparrow}^{i}(R)\setminus \M'(C_{\uparrow}^{i}(R)) \neq \emptyset$. Let $\M'(C_{\uparrow}^{i}(R)) \neq \emptyset$. Similarly as above, Procedure $\GoToC$ is called until all the robots in $S(t_0)$ are moved from $C_{\uparrow}^{i}(R)$ to $\CT$. By Lemma~\ref{lem:divM}, the symmetricity of each generated configuration $\bar{R}$ as well as $R'$ divide $|M|$. Then both $\rho(\bar{R})$ and $\rho(R')$ divide $\rho(F)$.
If instead $\M'(C_{\uparrow}^{i}(R)) =\emptyset$, then $|S(t_0)|=1$, the configuration is asymmetric and it is maintained as such by means of Procedure $\GoToC$ because the only moved robot cannot be equivalent to any other until it reaches $\CT$. Then for each generated configuration $\bar{R}$, $\rho(\bar{R})=\rho(R')=1$ that obviously divide $\rho(F)$.

\smallskip
Finally, consider the case when $\Rob(\CT)\not = \emptyset$ in $R(t_0)$. 
The analysis is basically the same as above, with the only difference that now there are already some forbidden points for $\CT$ and hence the trajectories of robots in $S(t_0)$ initially are not necessarily radial toward $\CT$.

\item
About property at Item~\ref{p:3}.
As shown above, starting from $R$, all calls of Procedure $\GoToC$ only involve robots originally contained in $S(t_0)$. Any other robot does not move, that is it is stationary. Once all the robots in $S(t_0)$ reach $\CT$, $R(t_k)=R'$ is obtained which is then stationary.

\item
About property at Item~\ref{p:4}.
According to Procedure $\GoToC$, configuration $R'$ has no multiplicities since each robot $r$ moves toward $\CT$ in a region of $\Ann$ confined by: $\CT$, the rays from $c(R)$ passing through $r$ itself, and the next robot $r^+$ in the clockwise direction on $\Ann\cup C(R)$. In this region there are no robots and no other robots enter such a region. Moreover, the destination point on $\CT$ cannot be occupied by a robot, as otherwise by definition it would be a forbidden point for $\CT$. 
\end{itemize}
\end{proof}


\subsection{Task $T_4$} 
In order to solve the sub-problem $\RS$, that is the creation of a common reference system, task $T_4$ is meant to manage the cases in which there are too many robots on $C(R)$ with respect to $\rho(F)$. In particular, task $T_4$ is specialized to manage the cases $\M(C(R)) = \emptyset$. We recall that $\M(C(R))$ denotes the set containing all the maximum cardinality subsets $M \subseteq \partial C(R)$ such that $|M|>1$, robots in $M$ form a regular $|M|$-gon, and $|M|$ divides $\rho(F)$. Since the input configuration $R$ and the pattern to form must guarantee that $\rho(R)$ divides $\rho(F)$, then $\M(C(R)) = \emptyset$ implies that $R$ is asymmetric. This allows the algorithm to remove one robot at a time from $C(R)$ until exactly $m$ robots remain, with $m$ being the minimal prime factor of $\rho(F)$ or $m=3$.

Clearly, the removal of robots must be done very carefully so as to guarantee that $C(R)$ does not change (hence, each time the moving robot must be non-critical). Moreover, if $\rho(F)$ is even and hence only two robots must remain in $C(R)$, then it is possible that $T_4$ must terminate with three robots on $C(R)$ instead on two (it is possible that each of the three remaining robots is critical). In this case, task $T_6$ is required before the removal of the last robot from $C(R)$, that is two antipodal robots must be created on $C(R)$ as otherwise the smallest enclosing circle of the robots would change with respect to the initial one.

For this task, again Procedure $\GoToC$ is used. According to move $m_4$, it is performed by the non-critical robot in $\partial C(R)$ of minimal view. In this way, the moving robot will reach $\CT$ by also ensuring that the new configuration still guarantees that $\rho(R)$ divides $\rho(F)$. It is worth to remark that in case the moving robot is stopped by the adversary before reaching the parking circle, then task $T_2$ is applied again to make $\Ann$ empty (in other words, $T_2$ collaborates with $T_4$ to correctly transfer robots from $C(R)$ to $\CT$). 

\begin{figure}[ht]
\begin{center}
\scalebox{1.0}{\input 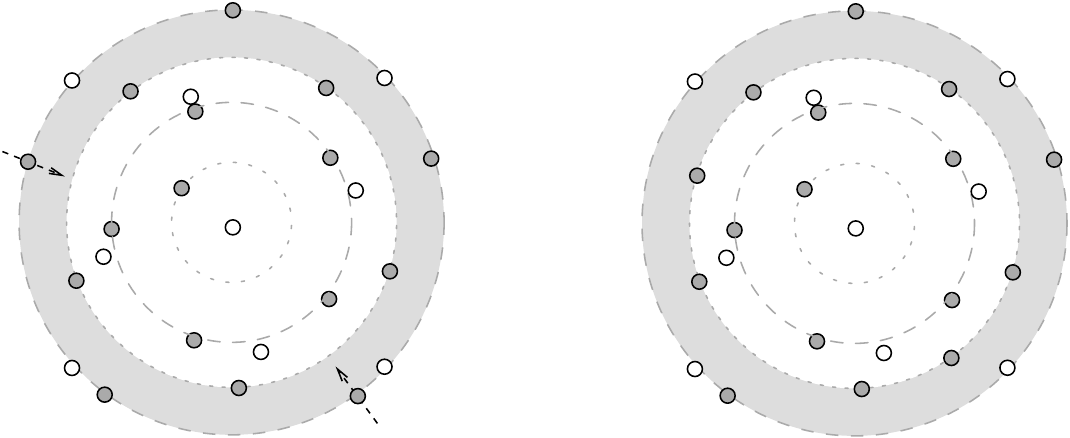_t }
\end{center}
\caption{Task $T_4$: Case $\M(C(R)) = \emptyset$, removing robots from $C(R)$ until exactly $m$ robots remain, with $m$ being the minimal prime factor of $\rho(F)$ or $m = 3$. The configuration on the left side is obtained from Figure~\ref{fig:T2} after all robots in $\Ann$ reached the parking circle $\CT$.
}
\label{fig:T4}
\end{figure}

Concerning the running example, Figure~\ref{fig:T4} (left side) shows the configuration belonging to task $T_4$. This membership can be verified as follows. 
As analyzed for tasks $T_1$ and $T_2$ we have the same values for variables $\xw$, $\xg$, $\xp$, $\xduno$, $\xddue$, so the configuration is not in $T_7$, $T_8$, $T_9$, $T_{10}$, and $T_{11}$. Variables $\xt$ and $\xf$ are both false, so the configuration is not in $T_6$ nor in $T_5$. Since the precondition $\pre_4=\prequattro$ holds (in fact, here $\xa=\true$, $\xc=\false$, and $\xm=\true$), then the predicate $P_4$ holds and hence the current configuration belongs to $T_4$. 

Figure~\ref{fig:T4} (right side) shows the stationary configuration obtained after two consecutive applications of task $T_4$. Since this configuration contains three robots on $C(R)$ and $\rho(F)=4$, then it must be processed by $T_6$ in order to guarantee two antipodal robots on $C(R)$ before leaving two robots on $C(R)$.

\subsection{Task $T_6$}
This task is performed when there are exactly three robots on $C(R)$, 3 does not divide $\rho(F)$, and $\rho(F)$ is even. In such a case, one of the three robots, chosen so as to not modify $C(R)$, rotates until it becomes antipodal with respect to one of the other two robots. Once this happens, variable $\xm$ becomes false since a regular $2$-gon is created on $C(R)$.

\begin{figure}[ht]
\begin{center}
\scalebox{1.0}{\input 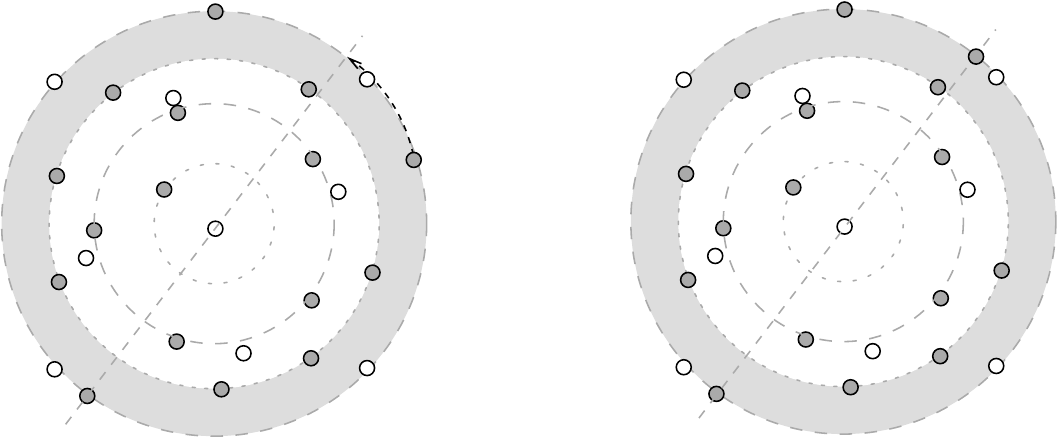_t }
\end{center}
\caption{Task $T_6$: Create two antipodal robots on $C(R)$.
}
\label{fig:T6}
\end{figure}

Consider the running example of Figure~\ref{fig:T6} (left side). This configuration belongs to $T_6$. In fact, as analyzed in previous tasks we have the same values for variables $\xw$, $\xg$, $\xp$, $\xduno$, $\xddue$, so the configuration is not in $T_7$, $T_8$, $T_9$, $T_{10}$, and $T_{11}$. Instead, now $\xa=\true$ (i.e., there are no robots in $\Ann$), $\xc=\false$ (i.e., there are no robots in the interior of $\CB$), $\xm=\true$ (i.e., there are no regular 2-gons in $C(R)$), and $\xt=\true$ (i.e., $|\partial C(R)| = 3$ and 2 is a divisor of $\rho(F)$). Hence the predicate defining $T_6$ is true. 

The three robots on $C(R)$ form a triangle with angles $\alpha_1 \ge \alpha_2 \ge \alpha_3$ where $r_1$, $r_2$ and $r_3$ are the three corresponding robots. The move planned for this task (cf. move $m_6$) rotates $r_2$ along $C(R)$ so as to obtain a configuration with two antipodal robots on $C(R)$. Once this happens (and, as usual, it may require multiple LCM cycles), the configuration belongs to $T_3$ as there is a regular 2-gon on $C(R)$, with 2 being a divisor of $\rho(F)$ but with a third robot that must be moved from $C(R)$ toward $\CT$. Such a movement initiated by $T_3$ might be continued via task $T_2$ if the robot does not conclude its movement within one LCM cycle.

\subsection{Task $T_3$}
Together with task $T_4$, this task is meant to manage the cases in which there are too many robots on $C(R)$ with respect to $\rho(F)$. In particular, task $T_3$ is specialized to manage the case in which $\M(C(R)) \neq \emptyset$. 

The move planned for this task is $m_3$ and it carefully moves robots from $C(R)$ toward the parking circle $\CT$ by means of Procedure $\GoToC$. According to its specification, we observe that it considers two cases: (1) if $\partial C(R)\setminus \M'(C(R))\neq \emptyset$ then all robots of minimal view in $\partial C(R)\setminus \M'(C(R))$ are moved, otherwise (2) all robots on $C(R)$ of minimal view are moved. 
Notice that it is possible that even though $R$ might be symmetric, its symmetricity is (or becomes) smaller than $\rho(F)$. However, by Lemma~\ref{lem:divM} we are ensured that $\rho(R)$ remains a divisor of $\rho(F)$ as long as $T_3$ is applied. Moreover, even in the possible case where $\rho(R)>1$, due to the \async model not all robots belonging to a same regular $m$-gon (say $M$) are necessarily active, and hence after some LCM cycles some of such robots may be in $\Ann$  while some other may still stay on $C(R)$. Any robot in $\Ann$ is then moved by $T_2$, and once $T_2$ has completely removed robots from $\Ann$, then the remaining robots of $M$ left on $C(R)$ are later processed again by $T_3$ since they result to be in $\partial C(R)\setminus \M'(C(R))$. 

It is worth to remark that, as soon as a robot leaves $C(R)$, variable $\xa$ becomes false, and task $T_2$ might be invoked. 

\begin{figure}[ht]
\begin{center}
\scalebox{1.0}{\input 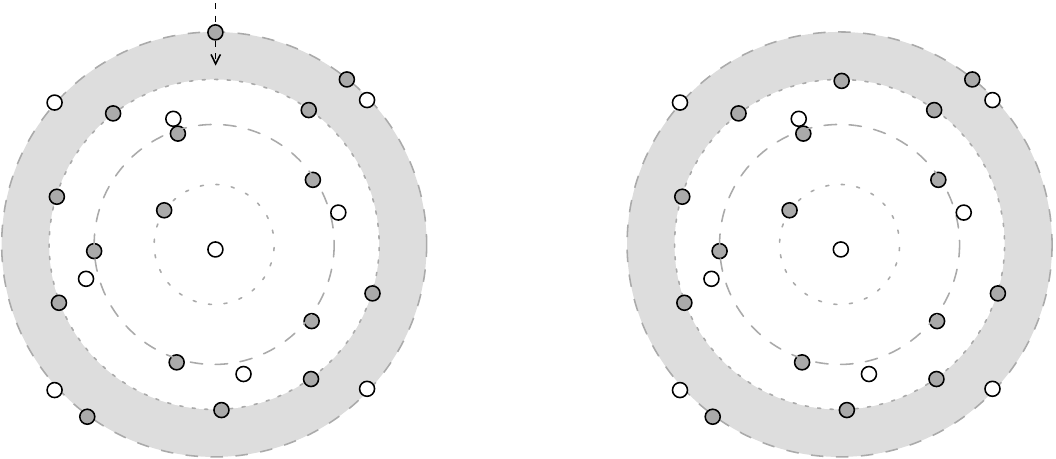_t }
\end{center}
\caption{Task $T_3$: Case $\M(C(R)) \neq \emptyset$, removing robots from $C(R)$ until exactly one maximal regular $m$-gon of $\M$ remains.
}
\label{fig:T3}
\end{figure}

Consider the running example of Figure~\ref{fig:T3} (left side). This configuration belong to $T_3$. In fact, as analyzed in previous tasks we have the same values for variables $\xw$, $\xg$, $\xp$, $\xduno$, $\xddue$, so the configuration is not in $T_7$, $T_8$, $T_9$, $T_{10}$, and $T_{11}$. Variable $\xm=\false$ (there is one regular 2-gon in $C(R)$ and $\rho(F)$ is even), so the configuration is not in $T_6$ or $T_4$; variable $\xf=\false$ ($|\partial C(R)| = 3$ and $\rho(F)=4$), hence it does not belong $T_5$. In conclusion, since precondition $\pre_3=\pretre=\true$, then the configuration belongs to $T_3$. 
 
Move $m_3$, possibly interleaved by move $m_2$, will lead to obtain the configuration shown in Figure~\ref{fig:T3} (right side). In this configuration the problem $\RS$ is solved, and hence the subsequent sub-problem $\PPF$ can be addressed by performing the planned task $T_8$.

\subsection{Task $T_8$}
This task is responsible for solving the $\PPF$ sub-problem. In particular, it moves all robots that are inside or on $\CT$ toward the targets computed with respect to the embedding of the \emph{modified pattern $F'$}. As described in Section~\ref{ssec:algorithm:subdivision} (cf. description of $\PPF$), pattern $F'$ differs from $F$ only for those possible targets on $C(F)$ different from the $m$ ones already matched by the resolution of sub-problem $\RS$ (i.e., the embedding of $F$ on $R$ and hence the embedding of $F'$ on $R$ are well-defined, cf. description of $\RS$). Such additional points on $C(F)$, if any, are instead radially projected to $\CT$ in $F'$. In our strategy, task $T_8$ is designed to solve the pattern formation problem with respect to $F'$. 

\begin{figure}[ht]
\begin{center}
\scalebox{1.0}{\input 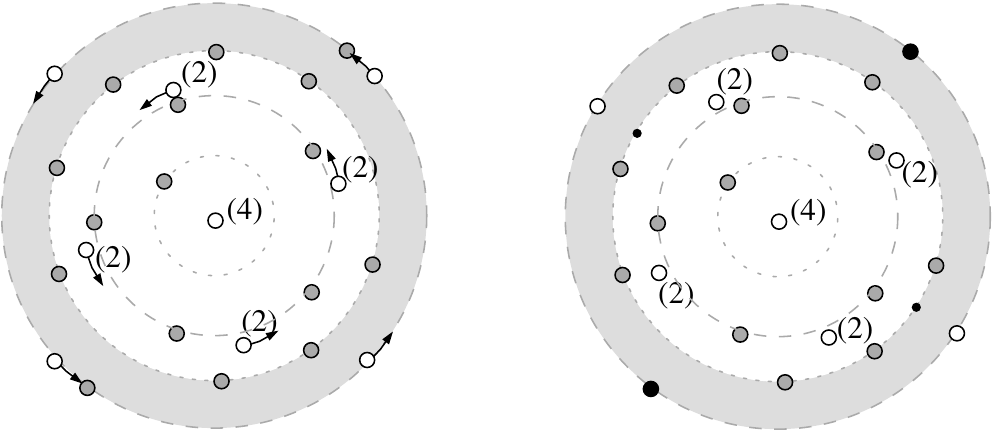_t }
\end{center}
\caption{Task $T_8$. Make a partial pattern formation: embedding of $F$ and $F'$. 
The light-gray corona is the $\Ann$; gray circles represent robots; white circles represent points of $F$;
On the left, arrows represent how $F$ must be rotated according to the embedding defined in Section~\ref{ssec:algorithm:subdivision}.
On the right, black circles represent robots matched with points of $F$ after the embedding; finally, the two black dots on $\CT$ represent points of $F'$ obtained as radial projections of unmatched points of $F$ on $C(R)$.}
\label{fig:T8a}
\end{figure}

Concerning the running example, Figure~\ref{fig:T8a} shows how each robot views the embedding of $F'$ in the current configuration. It is worth to note that, during this task, (1) no robots on $C(R)$ move, and (2) no robots are moved out of $\CT$ (i.e., no robot enters in $\Ann$); this implies that the embedding of $F'$ remains the same during the whole task $T_8$. 

To solve $\PPF$, at any time, each robot inside $\CT$ must determine (1) whether it is already on its target or not (i.e., whether it is \emph{matched} or not), (2) if it is not matched, which is its target, and (3) whether it is its turn to move or not. 
To this aim, and to formally define Procedure $\Distmin$ that is used to solve task $T_8$, we need some further definitions and properties (cf. Figure~\ref{fig:sector}).

\begin{figure}[ht]
\begin{center}
\scalebox{0.80}{\input 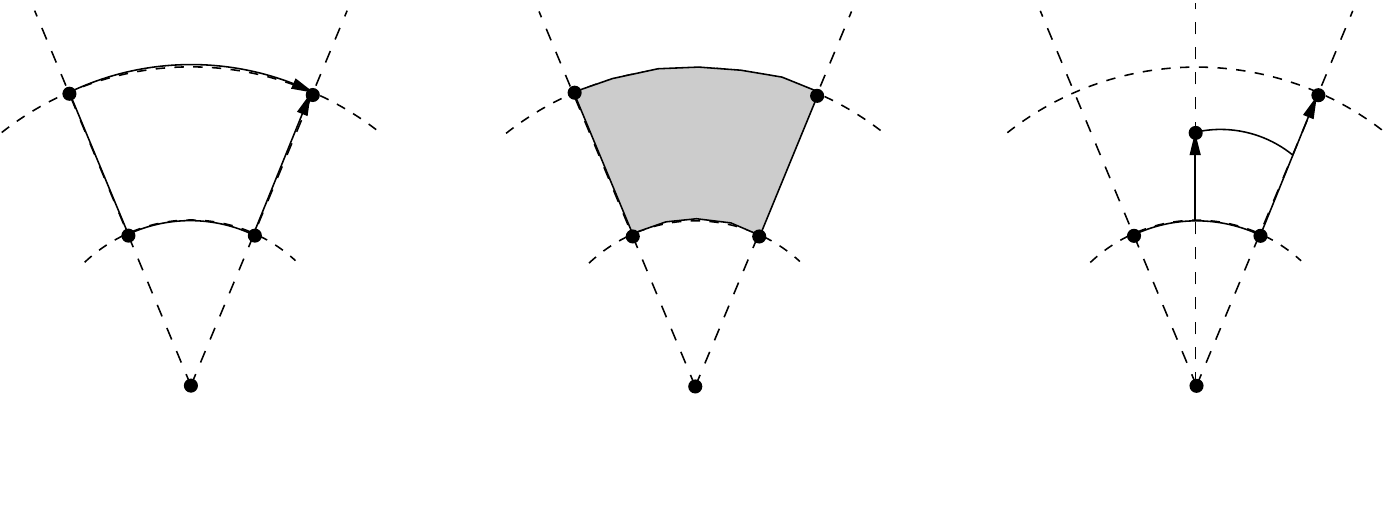_t }
\end{center}
\caption{A representation of: $(a)$ the sectorial paths between points $p$ and $q$; $(b)$ the annulus sector $\AS(p,q)$ (the gray region); $(c)$ two shortest paths from $p$ to $q$ (one passing through $s$ and composed by two sectorial paths).
}
\label{fig:sector}
\end{figure}

Let $P$ be a multiset of points and let $p,q\in P$. We denote by $C_p$ and $C_q$ the circles centered in $c(P)$ and with radii $d(c(P),p)$ and $d(c(P),q)$, respectively. Points $p'$ and $q'$ correspond to $C_q\cap \halfline(c(P),p)$ and $C_p\cap \halfline(c(P),q)$, respectively  (cf. Figure~\ref{fig:sector}.$(a)$). Symbol $\AS(p,q)$ is used to denote the \emph{annulus sector} given by the area enclosed by circles $C_p$ and $C_q$, and by segments $[p,p']$ and $[q',q]$, subtending $\angolo(p,c(R),q)$ (cf. Figure~\ref{fig:sector}.$(b)$). Notice that when $\angolo(p,c(R),q) = \pi$, by definition $\AS(p,q)$ corresponds to the annulus sector spanned by $\halfline(c(P),p)$ to overlap $\halfline(c(P),q)$ by means of a clockwise rotation.
We say that $\AS(p,q)$ is \emph{degenerate} when it reduces to a point (i.e., when $p=q$) or to a segment/arc (i.e., when $p$ and $q$ lie on the same ray/circle).

\begin{definition}[Sectorial path and sectorial distance]
Let $P$ be a multiset of points in the plane. Given $p,q\in P$, the \emph{sectorial path}
between $p$ and $q$ is given by either the arc $\arc{pq'}$ composed with the segment $[q',q]$, or the segment $[p,p']$ composed with the arc $\arc{p'q}$  (cf. Figure~\ref{fig:sector}.$(a)$). 
The \emph{sectorial distance} between $p$ and $q$ is denoted by $\dist(p,q)$ and if $\delta(C(P))= 0$ then $\dist(p,q)=0$, else 
\[\dist(p,q) = |d(p,c(P)) - d(q,c(P))| / \delta(C(P)) + \min \{\angolo (p,c(P),q), \angolo (q,c(P),p) \}/ \pi.\]
\end{definition}
Informally, the sectorial distance is a sort of Manhattan distance where moving between two points is constrained by rotating along concentric circles centered at $c(P)$ and moving along rays starting from $c(P)$. It is easy to verify that function $\dist()$ is in fact a distance function.

\begin{property}\label{prop:multi-shortest-paths}
Let $P$ be a multiset of points in the plane, and let $p,q\in P$. For each point $s\in \AS(p,q)$ it follows that $\dist(p,q) = \dist(p,s)+\dist(s,q)$.
\end{property}

According to this property, the sectorial distance implies the existence of infinitely many shortest paths (composed of one or more sectorial paths) connecting two distinct points (cf. Figure~\ref{fig:sector}.$(c)$).

The above notation and definitions will be applied to what was before informally called a ``sector''. The following definition formalizes such a concept. 

\begin{definition}[Sector]
	Let $\ell$ and $\ell’$ be two consecutive (clockwise) robot-rays. A \emph{sector} $S$ is the area confined by $\ell$, $\ell’$, and $\CT$. Concerning the boundary, $\ell$ belongs to $S$, $\ell’$ does not belong to $S$, the portion on $\CT$ delimiting $S$ belongs to $S$, and $c(R)$ does not belong to $S$. $\Sector(R)$ denotes the set containing all the sectors of a configuration $R$. 
\end{definition}

We now exploit the sectorial distance to determine the trajectories used by robots to move toward the targets.

\begin{definition}[Safe trajectory]
Given a configuration $R$ and a sector $S\in \Sector(R)$, a robot $r \in \Rob(S)$ is said to admit a \emph{safe trajectory} toward a target point 
$t\in S\cup c(R)$  if there exists a shortest path between $r$ and $t$ according to $\dist()$ that does not pass through any other robot. 
\end{definition}

The next statements (see Lemma~\ref{lem:degenerate} and Proposition~\ref{prop:asymmetric}) will play a central role for the definition of $\Distmin$. 

\begin{lemma}\label{lem:degenerate}
Given a configuration $R$  and a sector $S\in \Sector(R)$, let $r\in \Rob(S)$ and 
$t\in S$ be a target point. If $\AS(r,t)$ is not degenerate, then $r$ admits a safe trajectory toward $t$.	
\end{lemma}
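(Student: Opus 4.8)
The plan is to use the additive structure of the sectorial distance to describe \emph{all} shortest paths between $r$ and $t$, and then to reduce the statement to a purely planar problem of routing a monotone curve around finitely many isolated points.

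\emph{Step 1: describing the shortest paths and locating the obstacles.} By the definition of $\dist$ and by Property~\ref{prop:multi-shortest-paths}, a shortest path from $r$ to $t$ is any curve that is monotone in the polar coordinates centered at $c(R)$: its radial coordinate runs monotonically between $d(r,c(R))$ and $d(t,c(R))$, and its angular coordinate runs monotonically between the directions of $\halfline(c(R),r)$ and $\halfline(c(R),t)$; every such curve has sectorial length exactly $\dist(r,t)$ and is contained in $\AS(r,t)$. Since $\AS(r,t)$ is non-degenerate, $r$ and $t$ lie on distinct rays \emph{and} on distinct circles concentric with $C(R)$, so $\AS(r,t)$ has nonempty interior and there is a continuum of such shortest paths. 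Moreover $\AS(r,t)\subseteq S$, because both endpoints belong to $S$ and the annulus sector spans only the radii and the angle comprised between them; in particular, every robot of $\partial C(R)$ lies at distance $\delta(C(R))$ from $c(R)$, which is strictly larger than the radius $\delta(\CT)$ bounding $S$ from the outside, hence no robot of $\partial C(R)$ belongs to $\AS(r,t)$. Consequently the only robots that a shortest path inside $\AS(r,t)$ could meet are finitely many robots lying strictly inside $\CT$; denote them by $w_1,\ldots,w_h$.

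\emph{Step 2: routing a monotone path around $w_1,\ldots,w_h$.} Here non-degeneracy is used a second time. Because $d(r,c(R))\neq d(t,c(R))$, the radial coordinate ranges over an interval of positive length, so I look for a shortest path that is \emph{strictly} monotone in this coordinate, crossing each circle concentric with $C(R)$ at most once; because $r$ and $t$ also lie on distinct rays, the path may be routed within an angular interval of positive width. Realizing such a path as a monotone concatenation of sectorial paths whose turning points are chosen in the open interior of $\AS(r,t)$, and invoking Property~\ref{prop:multi-shortest-paths} to certify that the concatenation still has sectorial length $\dist(r,t)$, I obtain a shortest path that meets the boundary of $\AS(r,t)$ only at $r$ and $t$. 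Each obstacle $w_i$ sits at a single radius $\rho_i$ and a single angle $\vartheta_i$; since the path crosses the circle of radius $\rho_i$ at a single, freely adjustable angle, the finitely many constraints ``avoid $w_i$'' can all be met simultaneously, and the path avoids $w_1,\ldots,w_h$.

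\emph{Step 3: conclusion, and the main difficulty.} The endpoints themselves carry no other robot ($r$ is the moving robot and $t$ is an unoccupied target), so the path just built is a shortest path from $r$ to $t$ that passes through no other robot, i.e. a safe trajectory, as required. The step I expect to be the real obstacle is Step 2: one must argue carefully that the freedom granted by the two halves of the non-degeneracy hypothesis — a range of radii of positive length to route over, and an angular interval of positive width to route within — is genuinely enough to miss finitely many prescribed points while keeping the curve both monotone (hence a shortest path, by Property~\ref{prop:multi-shortest-paths}) and confined to $\AS(r,t)$, including the delicate behaviour near the two endpoints, where a careless choice would force the path onto a bounding arc or radial segment possibly occupied by some $w_i$.
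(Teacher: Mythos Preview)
Your argument is correct and is exactly the paper's approach: invoke Property~\ref{prop:multi-shortest-paths} to obtain infinitely many shortest paths inside the non-degenerate $\AS(r,t)$, then use the finiteness of $R$ to conclude that at least one of them avoids every other robot. The paper compresses all of your Steps~1--3 into a single sentence, so your elaboration of the monotone-routing details in Step~2 is sound but more than what is strictly needed.
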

\begin{proof}
The claim simply follows from Proposition~\ref{prop:multi-shortest-paths} that implies the existence of infinitely many shortest paths between $r$ and $t$, and by observing that $R$ is finite. 
\end{proof}

\begin{property}\label{prop:asymmetric}
For each sector $S$, the sub-configuration given by $\partial C(R) \cup Rob(S)$ is asymmetric.
\end{property}

\noindent
The above statements can be combined as follows:
the former ensures that when a robot $r$ moves toward a target $t$ and $\AS(r,t)$ is not degenerate, then $r$ admits a safe trajectory toward $t$; the latter says that inside a sector $S$ it is always possible to elect a leader $r\in \Rob(S)$. By combining them we get that inside a sector $S$ we can always elect a robot $r$ to move toward a target $t$, and if $\AS(r,t)$ is not degenerate then $r$ can move along a shortest path without creating collisions. Given a sector $S$, the following additional notation allow us to formalize  such an approach:
\begin{itemize}
\item $R^m(S)=Rob(S)\cap F'$ denotes the matched robots;
\item $F^m(S)= F'\cap R^m(S)$ denotes the matched targets;
\item $R^{\neg m}(S)= Rob(S) \setminus R^m(S)$ denotes the unmatched robots;
\item $F^{\neg m}(S)= (F'\cap S)\setminus F^m(S)$ denotes the unmatched targets;
\item $R^{\safe}(S) = \{r\in R^{\neg m}(S) : \exists$ a safe trajectory from $r$ to $t$, $t\in F^{\neg m}(S)\}$ denotes the subset of $R^{\neg m}(S)$ containing only robots having a safe trajectory toward at least one target in $F^{\neg m}(S)$;
\end{itemize}

\noindent
If in $S$ both $R^{\neg m}(S)\neq \emptyset$ and $F^{\neg m}(S)\neq \emptyset$ then:
\begin{itemize}
\item $r^*(S)$ denotes the unmatched robot in $S$ that has to move toward an unmatched target still in $S$. If $R^{\safe}(S)\neq\emptyset$ then $r^*(S)$ is the robot of minimum view satisfying $\argmin_{r\in R^{\safe}(S)}\{dist(r,t): t\in F^{\neg m}(S)\}$ else $r^*(S)$ is selected from $R^{\neg m}(S)$ according to the minimum view (cf. Proposition~\ref{prop:asymmetric}).

\end{itemize}  

\smallskip\noindent
Consider now the case in which there are more robots than targets within a sector $S$. Our approach will move one robot at a time in $S$ (always identified  as $r^*(S)$) toward a target in $F^{\neg m}(S)$ until all targets become matched. At that time, we will get $R^{\neg m}(S)\neq \emptyset$ and $F^{\neg m}(S)= \emptyset$. Then, our strategy will move the remaining robots in $R^{\neg m}(S)$ toward points on the robot-ray belonging to $S'$, where $S'$ is the next sector with respect to $S$ according to the clockwise direction. We then extend the previous notation as follows:

\begin{itemize}
\item $R^{\safe}(S,S')= \{r\in R^{\neg m}(S) : r$ is lying on a circle $C_\downarrow^i$ and it can rotate along $C_\downarrow^i$ until reaching $S'$ without collisions$\}$ denotes the set containing any robot that can reach $S'$ by means of a simple rotation along the circle $C_\downarrow^i$ where it lies;
\item $r^*(S,S')$ denotes the unmatched robot in $S$ that has to move toward $S'$.  If $R^{\safe}(S,S')\neq\emptyset$ then $r^*(S)= \argmin_{r\in R^{\safe}(S,S')}\{dist(r,t): t\in S'\}$ else $r^*(S)$ is selected from $R^{\neg m}(S)$ according to the minimum view. 
\end{itemize} 
Procedure $\Distmin$ is given in Algorithm~\ref{alg:dist_min}. Its description can be found in the corresponding correctness proof provided in Lemma~\ref{lem:corr-DistMin}. Figure~\ref{fig:T8b} provides a partial illustration of how $\Distmin$ determines the pairs robot-target within one sector of the running example. 

\begin{algorithm}
\caption{$\Distmin$}\label{alg:dist_min}
\label{alg:distmin}
\begin{algorithmic}[1]
\begin{small}
\IF{ $\mult(c(R),R) < \mult(c(F),F)$ \label{l:1}}
	\IF{$d(r,c(R))$ is minimum among all robots in $R$, and $r$ is of minimum view in case of ties}
	\STATE $r$ moves toward $c(R)$~\label{l:3}
	\ENDIF
\ELSE
	\IF{$\exists$ sector $S$ s.t. $R^{\neg m}(S)\neq \emptyset$ and
	    $F^{\neg m}(S) \neq \emptyset$ ~\label{l:5} }
	    \IF { $R^{\safe}(S)\neq\emptyset$ }
	        \STATE $r^*(S)$ moves toward its target $f\in F^{\neg m}(S)$ 
	           along a safe trajectory~\label{l:7}
	    \ELSE
	        \IF { $r^*(S)$ and its target $f$ belong to a circle $C_\downarrow^i(R)$ }
	            \STATE $r^*(S)$ moves radially at half distance from $C_\downarrow^{i-1}(R)$ if this exists or from $c(R)$~\label{l:10}
	        \ELSE
	            \STATE $r^*(S)$ rotates clockwise at half distance from the closest robot-ray or from the closest robot if there is one on the way~\label{l:12}
	        \ENDIF
	    \ENDIF
	\ELSE

  	    \IF{$\exists$ sector $S$ s.t. $R^{\neg m}(S)\neq \emptyset$ ~\label{l:14}}
      		\STATE Let $S'$ be the next sector in clockwise order; 
	        \IF { $R^{\safe}(S,S')\neq\emptyset$ }
	            \STATE $r^*(S,S')$ rotates toward the robot-ray of $S'$~\label{l:17}
	        \ELSE	
	        	\STATE Let $C_\downarrow^i(R)$ be the circle to which $r^*(S,S')$ belongs to       
	            \STATE $r^*(S,S')$ moves radially at half distance from $C_\downarrow^{i-1}(R)$ if this exists or from $c(R)$~\label{l:19}
	        \ENDIF
        	\ELSE
		    \STATE Let $r$ be the robot on $c(R)$:~\label{l:21}
   			$r$ radially moves along the segment connecting
   			     $c(R)$ with the unique point left in $F^{\neg m}(S)$ 
   			     for some sector $S$, until distance 
   			     $\delta (\CB)$;~\label{l:22}
 
    	    \ENDIF
	\ENDIF
\ENDIF
\end{small}
\end{algorithmic} 
\end{algorithm}

\begin{figure}[ht]
\begin{center}
\scalebox{1.0}{\input 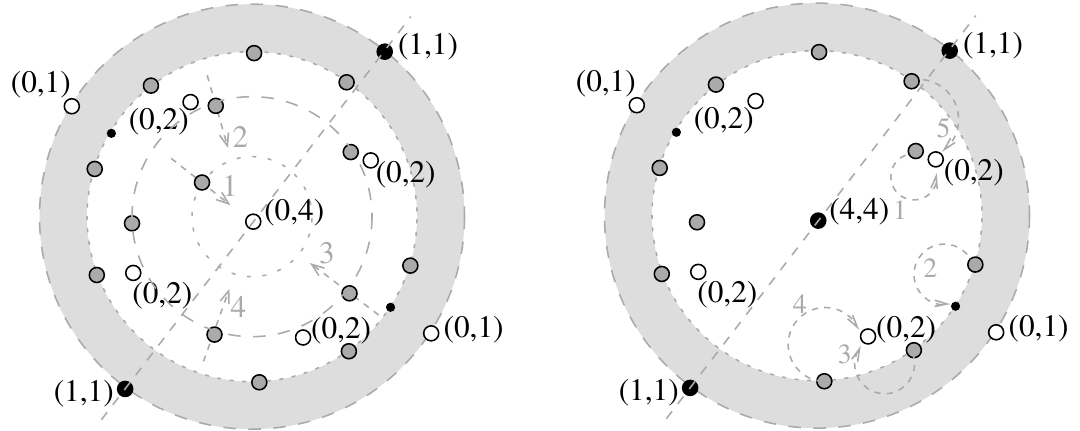_t }
\end{center}
\caption{Task $T_8$. Any pair of integers close to points of $F$ represents multiplicities of robots and of targets, respectively. (left) Preliminary phase, the right multiplicity is formed on $c(F)$ (cf. Lines~\ref{l:1}-\ref{l:3} of  Algorithm~\ref{alg:distmin}). The numbers close to the arrows show the order in which robots move.
(right) Order of robots' movements toward targets within one sector. Notice that the gray arrows only show robot-target pairs and not trajectories: we recall that Algorithm~\ref{alg:distmin} uses sectorial paths as robots' trajectories.}
\label{fig:T8b}
\end{figure}

\begin{lemma}\label{lem:corr-DistMin}
Given a configuration $R$ belonging to $T_8\cap (\I\setminus \U(F))$, by repeatedly applying Procedure $\Distmin$ the pattern $F'$ can be formed.
\end{lemma}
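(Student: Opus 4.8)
The plan is to show that, under any fair \async schedule, repeated execution of $\Distmin$ halts in a configuration similar to $F'$, by combining an invariant analysis, a lexicographic progress measure, and two separate finiteness lemmas for the auxiliary (non-matching) moves. First I would record the invariants inherited from the $T_8$ context: no robot on $\partial C(R)$ ever moves and no robot ever enters $\Ann$, so $C(R)$, the regular $m$-gon and hence the embedding of $F'$ remain fixed throughout the task. Consequently the target multiset $F'$ and the partition $\Sector(R)$ are constant, and by Property~\ref{prop:asymmetric} the sub-configuration $\partial C(R)\cup \Rob(S)$ of every sector $S$ is asymmetric. This asymmetry is the key that tames asynchrony: it makes the elected $r^*(S)$ (and $r^*(S,S')$) uniquely determined by the minimum-view rule, so that exactly one robot per sector is ever designated to move, matched robots are never selected (they are drawn from $R^{\neg m}(S)$), and a pending move can concern only the designated robot and agrees with whatever a later \look would recompute. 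Thus sectors evolve independently with at most one mover each.

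Second, I would dispose of the center. Lines~\ref{l:1}--\ref{l:3} bring the globally closest robot (minimum view on ties) to $c(R)$ whenever $\mult(c(R),R)<\mult(c(F),F)$; since each completed move increases $\mult(c(R),R)$ by one and the required multiplicity is fixed, after finitely many moves the center carries exactly $\mult(c(F),F)$ robots. Symmetrically, Lines~\ref{l:21}--\ref{l:22} evacuate the single leftover center robot out to $\CB$ toward the unique residual target when all sectors are otherwise matched, after which an ordinary step matches it. Hence it suffices to analyse the main body assuming the center multiplicity already correct.

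Third, the core argument uses a lexicographically ordered potential $\Phi(R)$ whose leading component is the number of globally unmatched targets, $|F'|-\mu(R)$ with $\mu(R)$ the number of matched robot--target pairs, refined by the total remaining clockwise redistribution distance and, as last component, $\sum_{S}\dist(r^*(S),f_S)$, where $f_S\in F^{\neg m}(S)$ is the target toward which $r^*(S)$ is directed. The \emph{productive} step is Line~\ref{l:7}: when $R^{\safe}(S)\neq\emptyset$, Lemma~\ref{lem:degenerate} guarantees a safe trajectory for a non-degenerate $\AS(r^*(S),f_S)$, and Property~\ref{prop:multi-shortest-paths} keeps the remaining sectorial distance additive along a shortest path, so the non-rigidity constant $\nu>0$ forces $r^*(S)$ strictly closer to $f_S$ each cycle and onto it in finitely many cycles, creating a new match. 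Since matched robots never move and are never reselected, the leading component of $\Phi$ is non-increasing and strictly decreases whenever a target becomes occupied; being bounded below, it can drop only finitely often.

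The main obstacle, and the bulk of the work, is the non-safe case $R^{\safe}(S)=\emptyset$ (Lines~\ref{l:10} and~\ref{l:12}) together with the inter-sector redistribution (Lines~\ref{l:14}--\ref{l:19}), since these moves repair geometry without matching a target. For the in-sector case I would prove a finiteness lemma: each half-distance radial or clockwise move strictly reduces the number of robots of $\Rob(S)$ lying between $r^*(S)$ and $f_S$ on their common ray/circle, so the degeneracy or blocking that caused $R^{\safe}(S)=\emptyset$ is resolved after finitely many moves, after which Line~\ref{l:7} becomes available. For redistribution, note it is entered only when every sector lacks unmatched robots or lacks unmatched targets; the Line-\ref{l:17} rotations push one excess robot one sector clockwise, with Line-\ref{l:19} a finite radial preparation, and this strictly decreases the total clockwise distance from excess robots to deficit sectors. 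Because $R\in\I\setminus\U(F)$ and Lemma~\ref{lem:divM} give $\rho(R)\mid\rho(F)$, the global robot count equals $|F'|$ and is compatibly distributed, so this one-directional flow balances the finitely many excess robots in finitely many transfers, and each robot reaching a deficit sector immediately re-enables a productive step. Collision-freeness of every auxiliary move follows from the half-distance rule (each mover stops strictly before any ray, circle, or robot it approaches) and the single-mover-per-sector property. Combining the two finiteness lemmas, only finitely many non-productive steps separate consecutive productive steps, while productive steps strictly decrease the bounded-below $\Phi$; hence the process halts, and by construction it can halt only when no sector has an unmatched target and the center multiplicity is correct, that is, exactly when $F'$ is formed.
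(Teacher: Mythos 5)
Your proposal is correct and follows essentially the same route as the paper's own proof: one mover per sector via Property~\ref{prop:asymmetric}, the center handled first by Lines~\ref{l:1}--\ref{l:3} and last by Lines~\ref{l:21}--\ref{l:22}, safe trajectories as the productive moves, deviations when $R^{\safe}(S)=\emptyset$, clockwise inter-sector redistribution, and termination via progress arguments that the paper states informally and you merely package as an explicit lexicographic potential. One sub-argument needs repair, though it does not break the proof: the deviation moves of Lines~\ref{l:10} and~\ref{l:12} do not ``strictly reduce the number of robots of $\Rob(S)$ lying between $r^*(S)$ and its target on their common ray/circle''; what they actually do is take $r^*(S)$ off that shared ray/circle, so that after a single (even partially completed, thanks to non-rigidity) deviation the annulus sector $\AS(r^*(S),f)$ is non-degenerate and Lemma~\ref{lem:degenerate} immediately supplies a safe trajectory. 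This is exactly the paper's justification, and it gives the finiteness of non-productive in-sector steps directly, without the (incorrectly stated) counting invariant.
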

\begin{proof}
According to Proposition~\ref{prop:asymmetric}, two robots with the same view cannot belong to a same sector $S\in\Sector(R)$. Hence, all moves allowed by Procedure $\Distmin$ involve at most one robot per sector as ties are always broken by means of the minimum view.

Lines~\ref{l:1}-\ref{l:3} consider the cases when the current multiplicity in the center $c(R)$ is less than that required in $c(F)$.  Notice that $\rho(F)>1$ by hypothesis, and this implies that in $c(F)$ there is a number of points which is multiple of $\rho(F)$. Since $\rho(R)$ divides $\rho(F)$, then the number of robots in each circle $C_{\downarrow}^i(R)$ divides $\rho(F)$, and hence the number of robots to be moved toward the center is always correctly determined by the procedure: this is $\rho(R)$ which divides $\rho(F)$ that in turn divides the number of robots in $c(F)$. 

Lines~\ref{l:5}-\ref{l:22} consider the cases when the multiplicity in the center (if any) is already correctly formed. In particular, lines~\ref{l:5}-\ref{l:12} are executed when there exists a sector $S$ in which there are both unmatched robots and unmatched targets. According to our definitions, the robot $r^*(S)$ elected to move follows a safe trajectory if it exists.  Once this robot starts moving, it will be moved until reaching its target, possibly within multiple LCM cycles. In fact, (1) robots admitting safe trajectories move before robots not admitting safe trajectories, and (2) the moves along safe trajectories assure to decrease the distances to the target; hence, in case of multiple LCM cycles, the moving robot $r^*(S)$, for each sector $S$, will be again chosen to reach its target. 
In case there is not a safe trajectory from $r^*(S)$ to the target, then the robot is slightly deviated (see moves at Lines~\ref{l:10} and~\ref{l:12}) to avoid collisions. Then, by Lemma~\ref{lem:degenerate}, the deviated robots admit safe trajectories and will be chosen again by the algorithm to be moved.%

Once each sector contains only unmatched robots or only unmatched targets, then unmatched robots in any sector $S$ are moved toward a point on the boundary of the next sector $S'$ in clockwise order (cf. Lines~\ref{l:14}-\ref{l:19}). As before, robots moved are first those elected that admit a safe trajectory toward the next (clockwise) sector, and then the remaining ones (which are deviated as before in order to avoid collisions). Notice that, as soon as the moved robot reaches the boundary, it enters into the next sector $S'$. As a consequence, the procedure processes this robot  when it will be elected in $S'$ to be moved either toward an unmatched target in the same sector, or toward the boundary of the successive (clockwise) sector. 

The last line (Line-\ref{l:22}) consider the cases when a robot must be moved from the center $c(R)$ whereas any other robot is matched. This case is processed at the end because, by definition, the center $c(R)$ does not belong to any sector. The robot is moved toward the last unmatched target in $F'$ until reaching the circle $\CB$ (by definition, along the trajectory there are no targets and hence no robots). Regardless whether it is stopped or not by the adversary, once it becomes active again, it will be processed as an unmatched robot by Lines~\ref{l:5}-\ref{l:12}.
\end{proof}

\subsection{Task $T_9$}
This task is devoted to finalize the pattern formation. It is characterized by the precondition $\pre_9=\prenove$, which means: there is a subset of $m\ge 2$ robots on $C(R)$ that form a regular $m$-gon, with $m$ divisor of $\rho(F)$; the unmatched robots with respect to $F$ are only those in $\Ann$ or on $\CT$; $F$ can be obtained by radial movements of the unmatched robots toward $C(R)$. 

Move $m_9$ makes such robots moving radially toward $C(R)$. As described in Section~\ref{ssec:algorithm:subdivision} (cf. description of $\Fin$), while robots move from $\CT$ to $C(R)$, the common reference system might be lost as soon as some robots reache $C(R)$. However, robots can always detect whether the configuration obtained by a radial projection of all robots in $\Ann \cup \CT$ to $C(R)$ produces $F$ or not as both $\Ann$ and $\CT$ can be determined just on the basis of $F$. This is the way to establish the value of variable $\xp$. Trivially, once all robots finish their movements, $\xw$ becomes true, that is $F$ is formed. Figure~\ref{fig:T9} provides an illustration of this task when it is applied to the running example.

\begin{figure}[ht]
\begin{center}
\scalebox{1.0}{\input 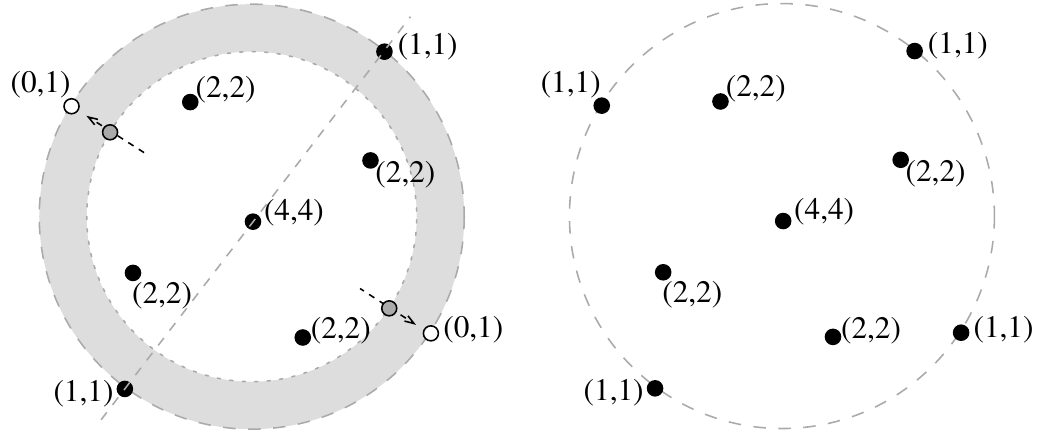_t }
\end{center}
\caption{Task $T_9$: finalize the pattern formation.}
\label{fig:T9}
\end{figure}
%

\subsection{Task $T_{11}$}

This is actually not a real task. It is identified by variable $\xw$ which means $F$ is formed, hence robot must not move anymore. It guarantees the obtained configuration does not change anymore.

\subsection{Task $T_5$}
This task is complementary with respect to $T_3$ and $T_4$ as it is invoked when the number $m$ of robots on $C(R)$ is too small with respect to $\rho(F)$, that is $m$ is smaller than the minimal prime factor of $\rho(F)$. In this case, the configuration is necessarily asymmetric and, consequently, one robot per time is moved from $C_\downarrow^2(R)$ toward $C_\downarrow^1(R) = C(R)$ by means of move $m_5$. Robots are moved toward $C(R)$ avoiding forbidden points for $C(R)$. These forbidden points are similar to those introduced in the description of Task $T_2$: a point of $C(R)$ is forbidden if it may form a regular $n$-gon along with the points occupied by some robots already located on $C(R)$. Again, avoiding forbidden points ensures that when a robot reaches $C(R)$ all robots in such a circle are non-equivalent; this helps to ensure that no unsolvable configurations are created.

\begin{figure}[ht]
\begin{center}
\scalebox{0.95}{\input 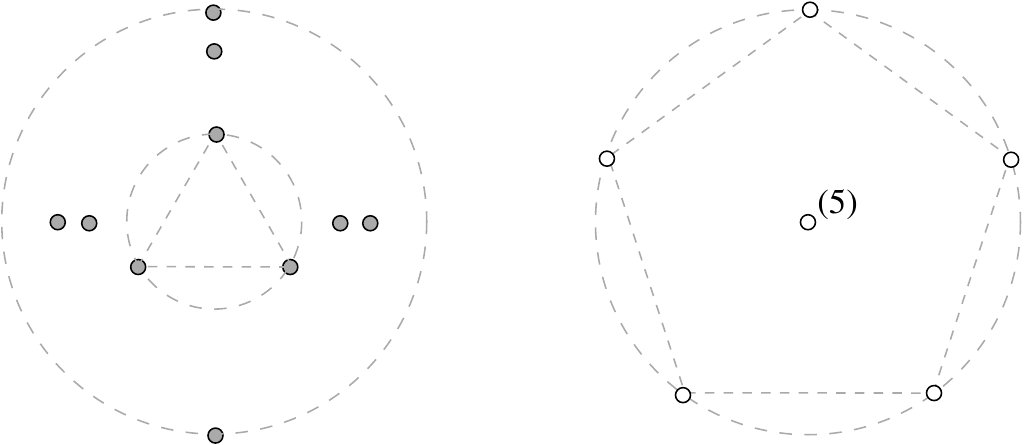_t }
\end{center}
\caption{The input for the \pf\ problem that we use as secondary running example. Notice that the initial configuration $R$ is composed of 10 robots and $rho(R)=1$, while the pattern $F$ has symmetricity $\rho(F)=5$ (numbers close to points refer to multiplicities).
}
\label{fig:input-bis}
\end{figure}
\begin{figure}[ht]
\begin{center}
\scalebox{0.8}{\input 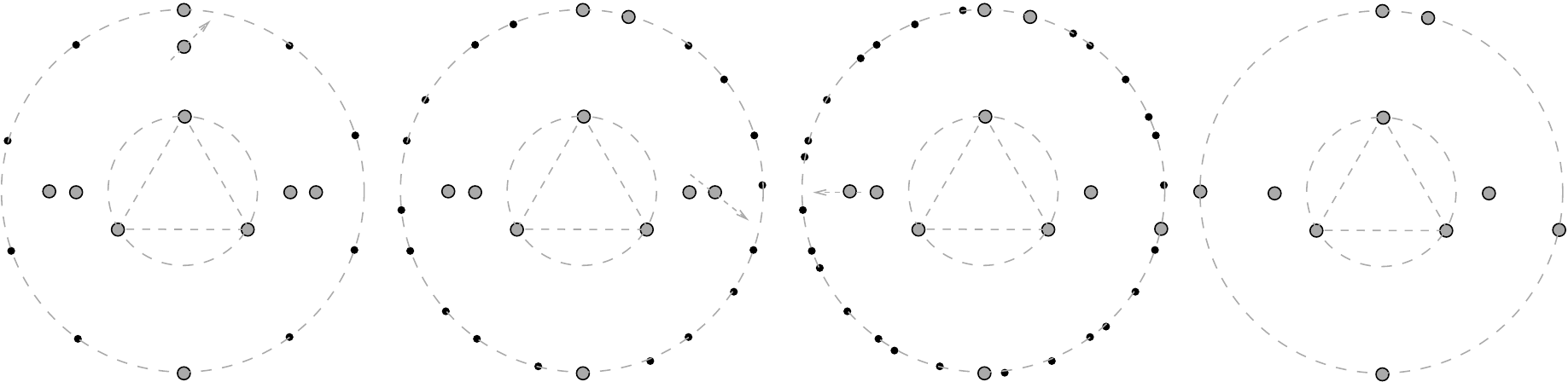_t }
\end{center}
\caption{Task $T_5$ applied to the input specified in Figure~\ref{fig:input-bis}: bring robots to $C(R)$ until $|\partial C(R)|$ divides $\rho(F)$. Gray circles represent robots 
and small black dots represent forbidden points for $C(R)$.
}
\label{fig:T5}
\end{figure}

An example of application of $m_5$ can be seen in Figure~\ref{fig:input-bis}.
 There $\rho(R)=1$ whereas $\rho(F)=5$. Moreover, $|\partial C(R)|=2$ is smaller than the minimal prime factor of $\rho(F)$, which is five. So, $\xf=\true$, whereas $\xc=\false$. The configuration is then in $T_5$ as it can be easily checked: $\xw$, $\xg$, $\xp$ are false, that is the configuration does not belong to $T_{11}$, $T_{10}$, $T_9$, respectively; $\xduno$ and $\xddue$ are true, hence $R$ is not in $T_8$ nor in $T_7$; $\xt=\false$ since 2 is not a divisor of $\rho(F)$, hence $R$ is not in $T_6$. 
 
Once three robots, one per time, are moved to $C(R)$ by means of $m_5$, the configuration in Figure~\ref{fig:T5}, right side, is obtained. It belongs to $T_7$ as a regular 5-gon must be formed on $C(R)$ since $\xduno$ and $\xddue$ are now false as well as $\xu$.

\subsection{Task $T_7$}
This task is meant to create a regular $m$-gon on $C(R)$. It is a sort of generalization of $T_6$ as it is used when $m=|\partial C(R)|$ is the minimal prime factor of $\rho(F)$, that is $\xduno$ and $\xddue$ are both false. By means of move $m_7$ the $m$ robots on $C(R)$ are opportunely rotated so as to obtain a regular $m$-gon. Once this happens, $\xm$ becomes false and $\xu$ becomes true.
Actually, Procedure $\Circle()$ applies the same movements of the algorithm proposed in~\cite{FPS08} where the problem was to uniformly distribute robots along a ring. The only difference is that here the ring is the circumference of $C(R)$, hence to guarantee the correctness of the algorithm we need to guarantee that $C(R)$ does never change.


\begin{algorithm}
\caption{$\Circle(\alpha)$}\label{alg:circle}
\begin{algorithmic}[1]
\begin{small} 
\STATE Let $r'$, $r$ and $r''$ be three consecutive (clockwise) robots on $C(R)$;
\STATE Let $p$ be the antipodal point of $r'$;
\STATE Let $q$ be the point on $C(R)$ preceding $r''$ wrt the clockwise direction such that $\angolo(q,c(R),r'') = \alpha$;
\IF{$\angolo(r,c(R),r'') > \alpha$}
\STATE $r$ rotates clockwise toward the closest point among $p$ and $q$;
\ENDIF
\end{small}
\end{algorithmic} 
\end{algorithm}


Given a configuration $R$, with $|R|\geq 3$, let $\alpha=2\pi / |\partial C(R)|$,
and let $r'$, $r$ and $r''$ be three consecutive (clockwise) robots on $C(R)$. The following lemma can be stated.

\begin{lemma}\label{lem:rot}
Let $p$ and $q$ be the two points calculated by a robot $r$ when running algorithm $\Circle(\alpha)$. If $r$ has to move, it will reach $q$, within a finite number of LCM cycles.
\end{lemma}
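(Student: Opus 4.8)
The plan is to work in the shared clockwise angular coordinate centered at $c(R)$, placing $r$ at the origin. Writing $\beta=\angolo(r,c(R),r'')$ and $\gamma=\angolo(r',c(R),r)$ for the two gaps adjacent to $r$, the hypothesis that $r$ has to move is exactly $\beta>\alpha$. First I would locate the two candidate destinations: by its definition $q$ lies at clockwise angle $\beta-\alpha$ from $r$ (strictly positive, hence strictly inside the gap toward $r''$), while $p$, being antipodal to the backward neighbour $r'$, lies at clockwise angle $\pi-\gamma$ from $r$. Both points are therefore strictly clockwise of $r$, so the instruction ``rotate clockwise toward the closest point among $p$ and $q$'' is well posed and $r$ heads toward whichever of the two has the smaller clockwise angle.

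The core of the argument is to show that this closest point is $q$, i.e. that $\beta-\alpha\le\pi-\gamma$, equivalently that the backward arc from $r'$ to the position $q$ has length $\gamma+(\beta-\alpha)\le\pi$. I would derive this from the invariant maintained throughout task $T_7$ that $C(R)$ is the smallest enclosing circle of the $m$ robots and is never allowed to change: if $r$ could move clockwise past the antipodal of $r'$, the empty arc from $r'$ to $r$ would exceed a semicircle, so all $m$ robots would lie within an arc shorter than a semicircle and $C(R)$ would shrink, contradicting Property~\ref{prop1}. The point $p$ is introduced precisely as the cap forbidding such moves; hence, as long as $q$ itself respects the semicircle bound, $q$ precedes $p$ and is the selected destination. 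Combined with $\alpha=2\pi/|\partial C(R)|$ and the structure of the configuration on entering $T_7$, this yields $q$ as the target toward which $r$ rotates.

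It then remains to obtain finiteness. Once the destination is fixed to $q$, I would invoke the non-rigid movement guarantee: at each activation $r$ advances clockwise by at least $\min(\nu,\text{residual arc})$ toward $q$, and because it targets $q$ it never overshoots; moreover no robot is moved across its path and, while $r$'s forward gap still exceeds $\alpha$, the admissible motion of $r''$ does not push $q$ farther from $r$. Thus the residual angular distance to $q$ strictly decreases by a fixed positive amount each cycle until it falls below $\nu$, after which $r$ lands exactly on $q$; hence $r$ reaches $q$ within a finite number of LCM cycles. The hard part will be the middle step: rigorously excluding that the antipodal cap $p$ binds before $q$ and controlling the concurrent motion of $r''$ (and thus of $q$) during the intervening cycles, which is where the $T_7$ invariant that $C(R)$ is preserved, together with the spreading strategy of~\cite{FPS08}, must be used in full.
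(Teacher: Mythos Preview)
Your central claim---that the closer of $p,q$ is always $q$, i.e.\ $\beta-\alpha\le\pi-\gamma$---is false, and the rest of the argument collapses with it. Take $m=4$ (so $\alpha=\pi/2$) and place the four robots so that the two gaps adjacent to $r$ satisfy $\gamma=100^{\circ}$ and $\beta=175^{\circ}$; the remaining two gaps then sum to $85^{\circ}$. Every empty arc is at most $175^{\circ}<\pi$, so this \emph{is} a valid smallest enclosing circle configuration, yet $\gamma+(\beta-\alpha)=185^{\circ}>\pi$: here $q$ lies strictly beyond $p$ and the algorithm sends $r$ toward $p$, not $q$. Your appeal to the $C(R)$--invariant only says that $r$ must not be \emph{moved} past $p$; it says nothing about where $q$ sits before the move, so it cannot force $q$ to precede $p$.

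The paper's proof does not try to avoid this case; it confronts it. When $p$ lies between $r$ and $q$, the inequality $\pi-\gamma<\beta-\alpha$ together with $\beta\le\pi$ forces $\gamma>\alpha$, so the backward neighbour $r'$ is itself required to move clockwise. The proof then tracks $r'$: it cannot overpass the antipode of $r''$, but it must reach the antipode of $q$ first, and once it does the antipodal cap for $r$ (recomputed from the new position of $r'$) no longer blocks $q$, so $r$ is free to advance to $q$. This two-robot coupling is the missing idea in your attempt. A secondary issue: your finiteness paragraph asserts that motion of $r''$ ``does not push $q$ farther from $r$'', but $r''$ moves clockwise and therefore does push $q$ farther; what actually helps is that the new target is then beyond the old $q$, so $r$ still reaches (and may overpass) the original $q$---which is exactly how the paper phrases the easy case.
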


\begin{proof}
If $p$ is not in between $r$ and $q$ then the statement clearly holds as all moving robots follow the clockwise direction, and hence within different LCM cycles the target to reach either is unchanged or it is further (clockwise) than $q$ with respect to the starting position of $r$, that is $r$ reaches (and possibly overpasses) $q$. 

When $p$ is in between $r$ and $q$ then $r$ must stop at $p$, and eventually $r$ reaches $p$.
In this case, since  $|R|\geq 3$, necessarily $\angolo(r',c(R),r) > \alpha$, that is, also robot $r'$ must move.

Consider the points $q'$ and $q''$ on $C(R)$ antipodal to $q$ and $r''$, respectively. When $r'$ moves, it cannot overpass $q''$, however, by construction, once $r$ has reached $p$, then $q'$ is met by $r'$ before reaching $q''$. It follows that as soon as $r'$ reaches $q'$ then $r$ is free to reach $q$. 
\end{proof}

By combining the result of Lemma~\ref{lem:rot} with the correctness proof of the Circle Formation algorithm given in~\cite{FPS08}, the following corollary holds.

\begin{corollary}\label{cor:circle}
Let $R$ be a configuration belonging to $T_7\cap \IA$ with $m$ robots on $C(R)$. By repeatedly applying Algorithm $\Circle$, configuration $R$ is transformed into a configuration $R'$ having a regular $m$-gon on $C(R)$.
\end{corollary}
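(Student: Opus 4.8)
The plan is to lift the correctness of the ring-based Circle Formation algorithm of~\cite{FPS08} to our setting. In~\cite{FPS08} the ring is an external object fixed once and for all, so the only genuinely new obligation is to certify that throughout the execution the ``ring'' here, namely $C(R)$, never changes; once this is granted the proof of~\cite{FPS08} transfers essentially verbatim. The two tools I would rely on are Lemma~\ref{lem:rot}, which guarantees finite-time progress of every moving robot, and Property~\ref{prop1}, which characterizes when the smallest enclosing circle is pinned. Note first that in $T_7$ we have $m=|\partial C(R)|\ge 3$: if only two robots were on $C(R)$ then, by Property~\ref{prop1}, they would be antipodal and hence already form a regular $2$-gon, making $\xu$ true and excluding membership in $T_7$; thus the three consecutive robots $r'$, $r$, $r''$ required by $\Circle(\alpha)$ are always available.

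The central step is the invariance of $C(R)$. Every move prescribed by $\Circle(\alpha)$ is a clockwise rotation of a robot along $C(R)$, so no robot ever leaves the circle and neither its centre nor its radius can increase; the only danger is that $C(R)$ might \emph{shrink}. This happens exactly when the robots on the boundary cease to pin the circle, i.e.\ when they no longer contain an antipodal pair nor three robots with $c(R)$ in their convex hull (cf.\ Property~\ref{prop1}). I would rule this out through the antipodal safety stop $p$ (the antipodal point of $r'$): a robot $r$ is forced to halt at $p$ precisely when $p$ lies between $r$ and its target $q$, and the analysis in the proof of Lemma~\ref{lem:rot} shows that in this situation $r'$ is itself obliged to move and reaches the antipodal point $q'$ of $q$ before it can release the circle, at which instant $r$ is freed. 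This interlocking choreography keeps an antipodal pair (or three suitably placed robots) on $C(R)$ at all times, so by Property~\ref{prop1} the smallest enclosing circle is unchanged.

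Granted the invariance of $C(R)$, I would then argue convergence. Since the positions attainable by the robots are confined to the fixed circle $C(R)$ and the moves coincide with those of~\cite{FPS08}, the progress measure of~\cite{FPS08} (equalization of the angular gaps between consecutive robots toward the common value $\alpha=2\pi/m$) applies without change. Crucially, Lemma~\ref{lem:rot} certifies that every robot which must move reaches its computed target $q$ within a finite number of LCM cycles, so the adversary's power to stop robots early in the \async model cannot stall the process. Repeated application therefore drives every angular gap to $\alpha$, producing a regular $m$-gon on $C(R)$, which is the desired configuration $R'$.

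The hard part is exactly the invariance of $C(R)$: in~\cite{FPS08} the ring is given, whereas here it is \emph{defined} by the robots, so one must exclude the adversarial \async scenario in which partially completed, concurrent rotations crowd all boundary robots into an open semicircle, which would admit a strictly smaller enclosing circle. The combination of the safety point $p$ with the finite-reachability of $q$ furnished by Lemma~\ref{lem:rot} is what I expect to do the work: whenever a robot could otherwise let $C(R)$ collapse it is pinned at an antipodal position, and because its clockwise neighbour is guaranteed to clear the way in finite time no deadlock arises and $C(R)$ is preserved.
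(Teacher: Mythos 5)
Your proof is correct and takes essentially the same route as the paper: both reduce the corollary to the correctness of the ring algorithm of~\cite{FPS08}, with Lemma~\ref{lem:rot} supplying the key fact that every robot required to move reaches the destination prescribed by~\cite{FPS08} within finitely many LCM cycles despite the antipodal stopping rule. The only difference is that you spell out, via Property~\ref{prop1} and the cap at the antipodal point $p$, why $C(R)$ can never shrink under concurrent \async rotations, a point the paper's proof treats as immediate from the design of $\Circle$.
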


\begin{proof}
The proof simply follows by observing that algorithm $\Circle$ operates the same movements of those in~\cite{FPS08} but with the further constraint to not changing $C(R)$. However, Lemma~\ref{lem:rot}, proves that eventually each moving robot will reach the destination imposed in~\cite{FPS08}. It means that within multiple (but finite) LCM cycles, each moving robot behaves like in~\cite{FPS08}.
\end{proof}


%
\begin{figure}[ht]
\begin{center}
\scalebox{0.95}{\input 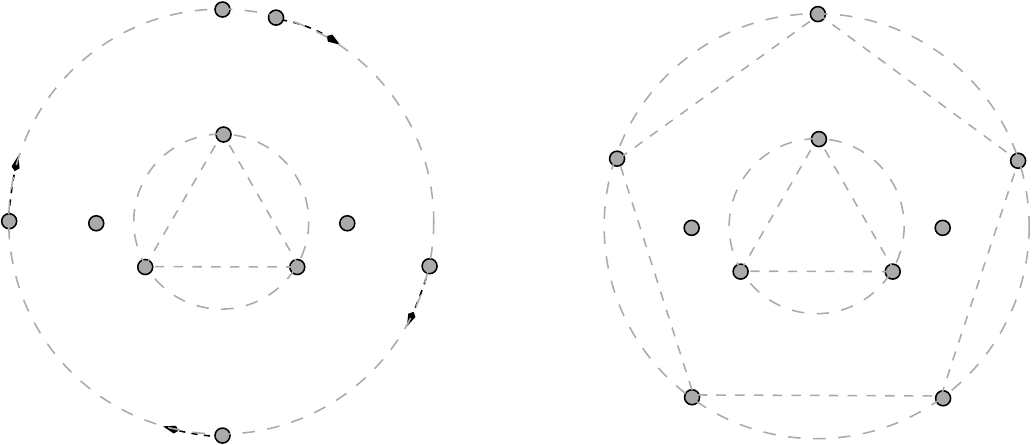_t }
\end{center}
\caption{Task $T_7$. Create a regular $m$-gon on $C(R)$.
}
\label{fig:T7}
\end{figure}

Considering the running example of Figure~\ref{fig:T7} (left side), the robots on $C(R)$ are opportunely rotated in the clockwise direction so as to obtain a configuration with a regular pentagon on $C(R)$. In particular, the only robot that will never move in the specific configuration is the top-most one since the angle it forms in $c(R)$ with the clockwise neighbor is smaller than $\frac {2\pi} 5$. All other robots, will rotate eventually. Once configuration in Figure~\ref{fig:T7} (right side) is obtained, it means $\xu=\true$. Predicates $\xg$, $\xa$ and $\xddue$ did not change their values, whereas variables $\xw$ and $\xp$ are clearly false. Hence the configuration cannot belong to $T_9$, $T_{10}$ and $T_{11}$. Since $\neg \xddue \Rightarrow \neg \xduno$, then the configuration belongs to $T_8$.

\subsection{Task $T_{10}$}

This is actually not a real task. It solves \pf by exploiting other algorithms (namely $\Gathering$ from~\cite{CFPS12} and $\Leader$ from~\cite{CDN19}) when $F$ is composed of one point with multiplicity $|R|$, that is $\rho(F) = |R|$, or when $\rho(F) = 1$, respectively. Notice that $P_{10}$ depends only on $F$ and not on the current configuration. This implies that once one algorithm among $\Gathering$ and $\Leader$ starts, it will be invoked to process the configuration until the pattern is formed.

\section{The algorithm for \pf: correctness}\label{sec:correctness}

In this section we prove the correctness of the provided algorithm. According to the proposed methodology (cf. Section~\ref{sec:methodology} and in particular to Claim~\ref{claim:correctness}), it is done by proving that each property in Table~\ref{tab:properties} holds.

\begin{table*}[ht]
\caption{Properties underlying the correctness}
\label{tab:properties}
\bgroup
\def\arraystretch{1.3}
\setlength{\tabcolsep}{5pt}
\begin{center}
  \begin{tabular}{  l l  p{0.85 \textwidth}  }
    
    $\h_1$  & = &  \textit{for each configuration in $\IA$ at least one 
                        predicate $P_i$ is true and
                        for each $i\neq j$, $T_i \cap T_j= \emptyset$;} \\ 
      
    $\h_2$  & = &  \textit{configurations in $\Unew(F)$ are not generated 
                        by $\A$, i.e. $\IA\cap \Unew(F)=\emptyset$ - 
                        this means that given $R$ and $F$ as input,  
                        each generated configuration $R(t)$, $t>0$, 
                        must ensure that $\rho(R(t))$ divides $\rho(F)$;} \\ 

    $\h_3$  & = &  \textit{for each class $T_i$, the classes reachable from $T_i$ by means of a transition are exactly those represented in the transition graph $G$}  (i.e., the transition graph is correct);\\ 

    $\h_{3'}$  & = &  \textit{each transition not leading to $T_F$ is 
                           stationary, almost-stationary, or robust, 
                           while each transition leading to $T_F$ is
                           stationary;} \\                       

    $\h_{3''}$  & = &  \textit{the algorithm is collision-free;}
                       \\ 

    $\h_{4}$  & = &  \textit{possible cycles in the transition graph $G$ (including self-loops but excluding the self-loop in $T_F$) must be performed a finite number of times.}                 

\end{tabular}
\end{center}
\egroup
\end{table*}

\smallskip\noindent
Concerning property $\h_1$,  since the tasks' predicates $P_1,P_2,\ldots,P_{11}$ used by the algorithm have been defined as suggested by Equation~\ref{eq:predicates}, it holds according to Remark~\ref{rem:pre-i}. 

Since properties $\h_2$, $\h_3$, $\h_{3'}$, $\h_{3''}$ and $\h_4$ (the last limited to self-loops only) must be proved for each transition/move, then in the following we provide a specific lemma for each task. 
It is worth to point out that, according to Remark~\ref{rem:simplified-transitions}, if one of such lemmata analyzes a task - say $T_i$ - and we have already proved that all transitions toward $T_i$ are stationary or almost-stationary or robust, then during the analysis of $T_i$ we can basically ignore possible pending moves. 

A final theorem (cf. Theorem~\ref{th:correctness}) will make use of all these lemmata and will also prove the remaining part of property $\h_4$ concerning cycles that are not self-loops. 
As last remark, we remind that properties $\h_{3'}$ and $\h_{3''}$ are desirable but not necessary to prove the correctness of the algorithm. As we are going to see, in a few cases we cannot guarantee them.

\begin{lemma}\label{lem:corr-T10}
Let $R$ be a stationary configuration in $T_{10}$. From $R$ the algorithm eventually leads to a stationary 
configuration belonging to $T_{11}$.
\end{lemma}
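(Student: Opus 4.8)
The plan is to exploit the special structure of $T_{10}$: its precondition $\pre_{10}=\xg$ depends only on the pattern $F$ and not on the current positions of the robots, hence $\xg$ is an invariant of the whole execution. First I would observe that, since $R\in T_{10}$, the predicate $P_{10}=\xg\wedge\neg\xw$ holds, so $\xg$ is true while $\xw$ is false (the pattern is not yet formed). Because every precondition $\pre_1,\ldots,\pre_9$ enters the definition of the corresponding $P_i$ only together with the negated block $\neg(\pre_{i+1}\vee\cdots\vee\pre_{11})$, and this block contains $\pre_{10}=\xg$, no task among $T_1,\ldots,T_9$ can ever be active while $\xg$ holds. Consequently, once the configuration is in $T_{10}$ it can only remain in $T_{10}$ or move to $T_{11}$ (whose predicate is exactly $P_{11}=\xw$), and the transition $T_{10}\to T_{11}$ is triggered precisely when $\xw$ becomes true. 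This reduces the lemma to showing that the repeated application of $m_{10}$ eventually makes $\xw$ true and leaves the robots stationary.

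Next I would invoke the correctness of the delegated algorithms, distinguishing the two disjuncts of $\xg$. If $F$ consists of a single point with multiplicity $|R|$, move $m_{10}$ runs $\Gathering$ from~\cite{CFPS12}; since $n\ge 3$ and the robots have global-strong multiplicity detection and chirality, the cited result guarantees that the robots eventually gather into one point, which is similar to $F$, so $\xw$ becomes true. If instead $\rho(F)=1$, then the standing assumption that $\rho(R)$ divides $\rho(F)$ (otherwise $R\in\U(F)$ is unsolvable) forces $\rho(R)=1$, so $R$ is solvable for a pattern with trivial symmetricity; move $m_{10}$ runs $\Leader$ from~\cite{CDN19}, whose correctness guarantees that $F$ is eventually formed, again making $\xw$ true. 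In both cases the choice between $\Gathering$ and $\Leader$ is dictated solely by $F$, hence it is fixed throughout the execution and the delegated algorithm is applied consistently until it terminates.

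Finally I would argue stationarity of the reached configuration. When the delegated algorithm forms $F$ it also satisfies its termination property, meaning that from that moment on every robot performs only the $\nil$ movement; hence the configuration attained at the instant $\xw$ turns true is stationary. Since $\xw$ is true, this configuration belongs to $T_{11}$, which yields the claim and, incidentally, shows that the transition $T_{10}\to T_{11}$ is stationary as required by $\h_{3'}$.

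The main obstacle I expect is the careful matching of hypotheses between our \async model and the external results: one must check that the configurations fed to $\Gathering$ and $\Leader$ satisfy their preconditions (solvability, $n\ge 3$, the assumed detection and chirality capabilities) and that their notion of termination coincides with our notion of a stationary configuration, so that no pending move survives the transition into $T_{11}$. Since $R$ is assumed stationary when $T_{10}$ is entered and the delegated algorithm is selected by $F$ alone, this reduces to a direct appeal to the cited correctness and termination statements rather than to a fresh analysis.
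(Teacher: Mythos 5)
Your proof is correct and takes essentially the same route as the paper's: both hinge on the observation that $\xg$ depends only on $F$ and therefore never changes, so the execution remains governed by move $m_{10}$ — which delegates to $\Gathering()$ from~\cite{CFPS12} or $\Leader()$ from~\cite{CDN19}, chosen by $F$ alone — until the pattern is formed, $\xw$ becomes true, and the configuration lands in $T_{11}$. Your write-up is merely more explicit than the paper's on two points it leaves implicit (why no $P_i$ with $i\le 9$ can hold while $\xg$ is true, and why termination of the delegated algorithm yields stationarity), but this is an elaboration of the same argument, not a different one.
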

\begin{proof}
Since $\xg$ holds, we have two cases: either $\rho(F)=1$ or $F$ contains only one element with multiplicity $|R|$.
In the first case move $m_{10}$ consists in calling the $\Leader()$ algorithm given in~\cite{CDN19}. In the second case, move
$m_{10}$ consists in applying the algorithm $\Gathering()$ given in~\cite{CFPS12}. Since the predicate only depends on $F$, its value never changes then one of the two algorithms can be applied until forming pattern $F$. Concerning the correctness of the algorithms we refer the reader to the proofs given in~\cite{CFPS12} and~\cite{CDN19}, respectively.
\end{proof}

\begin{remark}
As $\xg$ only depends on $F$ and not on the current configuration, from now on we can always consider variable $\xg$ as false since the movements of robots cannot change its value. It also follows that no transitions can lead to $T_{10}$ apart for self-loops.
\end{remark}

\begin{lemma}\label{lem:corr-T9}
Let $R$ be a stationary configuration in $T_9$. From $R$ the algorithm eventually leads to a stationary 
configuration belonging to $T_{11}$.
\end{lemma}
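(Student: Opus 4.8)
The plan is to follow a single traversal of task $T_9$ and verify, in the format of the preceding lemmata, the properties $\h_2$, $\h_3$, $\h_{3'}$, $\h_{3''}$ and the self-loop part of $\h_4$; the whole argument rests on the fact that move $m_9$ is purely radial, so every robot in $\Ann\cup\CT$ slides along its own ray $[c(R),r]$ toward $C(R)$ and all angular coordinates are invariant under the move. First I would show that the predicate $P_9=\prenove\wedge\neg(\pre_{10}\vee\pre_{11})$ keeps holding while at least one robot is still off $C(R)$. Since $\xg$ depends only on $F$ it stays false; the regular $m$-gon on $C(R)$ witnessing $\neg\xm$ is never touched by $m_9$, so $\neg\xm$ persists; and, writing $g$ for the radial projection that sends the robots of $\Ann\cup\CT$ onto $C(R)$ and fixes the remaining (interior and on-$C(R)$) robots, the value of $\xp$ is exactly ``$g(R(t))=F$''. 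Because $g$ is insensitive to how far along its ray a robot currently sits, $\xp$ is invariant under $m_9$ until the last robot arrives, at which instant $\xw$ turns true and the configuration enters $T_{11}$. This already yields $\h_3$ (only the transitions $T_9\to T_9$ and $T_9\to T_{11}$ of the graph occur): even though the reference system may be lost as soon as robots start reaching $C(R)$, the value of $\xp$ is computable from $F$ alone, so each robot still recognizes $T_9$ and keeps applying $m_9$.

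For $\h_2$ I would exploit the commutation of $g$ with rotations about $c(R)$. With chirality, any nontrivial $\psi\in\Aut(R(t))$ is a rotation, and since $\Ann$, $\CT$ and $C(R)$ are concentric with $c(R)$ we have $g\circ\psi=\psi\circ g$. From $\psi(R(t))=R(t)$ and $g(R(t))=F$ it then follows that $\psi(F)=\psi(g(R(t)))=g(\psi(R(t)))=g(R(t))=F$, so $\psi\in\Aut(F)$ and its order divides $\rho(F)$. Taking $\psi$ to generate the rotation group of $R(t)$ gives $\rho(R(t))\mid\rho(F)$, hence no configuration of $\U(F)$ is ever generated.

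Turning to $\h_{3'}$ and $\h_4$: the self-loop $T_9\to T_9$ is \emph{almost-stationary} rather than stationary, which is exactly what is needed. A moving robot's residual trajectory is the radial segment from its current position to $C(R)$, and this coincides with the trajectory it would recompute from the current configuration, so pending moves are harmless; meanwhile the $m$-gon and the already-matched interior robots are stationary. The transition $T_9\to T_{11}$ is stationary, because it fires only when every robot has reached its target on $C(R)$, leaving no residual motion. Finiteness of the self-loop follows from non-rigidity: each of the finitely many moving robots has a finite distance to cover and advances by at least $\nu$ at every activation until it lands on $C(R)$, so after finitely many cycles $\xw$ holds and the stationary configuration in $T_{11}$ is reached.

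Finally, for $\h_{3''}$ I would note that distinct rays meet only at $c(R)$, which no moving robot ever occupies (robots move outward, away from the center), so robots on different rays never coincide; robots sharing a ray are precisely those whose common target is a multiplicity point of $F$ on $C(R)$, and any coincidence they can form lies on that ray, on the way to that same target, and therefore leaves $\xp$ unchanged. The main obstacle I anticipate is exactly this last point combined with the asynchronous loss of the reference frame: one must argue carefully that a premature coincidence of co-radial robots (the adversary freezing an outer robot while an inner one advances) is the \emph{intended} multiplicity and not a spurious collision, and that the invariance of $\xp$ and the computation of $\h_2$ survive every such partial, adversarially-interrupted move. Establishing these invariants cleanly, rather than the finiteness bookkeeping or the symmetricity computation, is where the real effort lies.
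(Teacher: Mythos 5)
Most of your proof coincides with the paper's: for $\h_3$, $\h_{3'}$, $\h_{3''}$ and $\h_4$ the paper argues exactly as you do (radial moves leave $\neg\xm$ and $\xp$ invariant and $\xg$ untouched, the self-loop of $T_9$ is almost-stationary, the transition to $T_{11}$ is stationary, coincidences of co-radial robots are the intended multiplicities of $F$ since $\xp$ holds, and the total distance to $C(R)$ decreases). The divergence is in $\h_2$, and there your argument contains a false step. You claim that a rotation $\psi\in\Aut(F)$ has order dividing $\rho(F)$. This fails whenever $c(F)\in F$: symmetricity is defined through regular partitions, and points at the center can only be grouped into degenerate $m$-gons of $m$ coincident points, so $\rho(F)=\gcd(\mu,q_F)$, where $\mu$ is the multiplicity of $F$ at its center and $q_F$ the order of its rotation group; this can be strictly smaller than $q_F$. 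Concretely, take $F$ with two points at $c(F)$, a regular $4$-gon strictly inside $\CT$, and eight points on $C(F)$ forming two regular $4$-gons offset by $30^{\circ}$: then $q_F=4$ but $\rho(F)=2$. Since $\rho(F)=2>1$, this $F$ is legitimate for $T_9$ (it does not fall into $T_{10}$), and one can exhibit a reachable configuration $R(t)\in T_9$ for it (four robots on $C(R)$ at the first $4$-gon, four in $\Ann$ above the second, the interior points matched) whose rotation group has order $4$, violating your intermediate claim. This is precisely the center-point subtlety the paper emphasizes with Figure~\ref{fig:rho}(c).

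The gap is repairable within your own framework: because $\xp$ forces all interior points of $F$ — in particular its center points — to be already occupied in $R(t)$, the center multiplicities of $R(t)$ and $F$ agree, so $\rho(R(t))=\gcd(\mu,q)$ and $\rho(F)=\gcd(\mu,q_F)$ with $q,q_F$ the respective rotation group orders; your (correct) commutation argument gives $q\mid q_F$, and divisibility of the gcd's follows. Note, however, that the paper sidesteps the symmetricity computation entirely: it proves $\h_2$ by observing that any configuration satisfying $\neg\xm\wedge\xp$ is solvable \emph{by construction}, since radially moving the robots of $\Ann\cup\CT$ to $C(R)$ — i.e.\ move $m_9$ itself — forms $F$ regardless of the scheduler; hence no generated configuration can belong to $\U(F)$. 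That argument is shorter and immune to the center-point issue, and you may want to adopt it instead of patching the algebra.
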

\begin{proof}
Move $m_9$ aims to finalize the pattern formation by performing only radial movements of robots from $\CT \cup \Ann$ to $C(R)$.
 \begin{itemize}

\item[$\h_2$:] During this task, since move $m_9$ does not remove any robot from $C(R)$, then $C(R)$ does not change and $\neg \xm$ remains  true. Moreover, since the movement is radial and by the fact that the computation of $\CT$ depends only on $F$, $\xp$ remains true during all the movements. Similarly, $\xw$ remains false until the last robot reaches $C(R)$. This means that it is always possible to solve PF when $\neg \xm \wedge \xp$ holds. It is enough to radially move all robots from $\CT \cup \Ann$ to $C(R)$ (which is exactly what move $m_9$ does). Hence, independently on the activation of the robots, the incurred configurations until $F$ is formed are all solvable, that is none of them belongs to $\U(F)$.

\item[$\h_3$:] as observed, during the move $\neg \xm$, $\xp$ and $\neg \xw$ remain true, then no other tasks can start. If robots are stopped during their movement by the adversary, the configuration remains in $T_{9}$. This defines a self-loop in $T_{9}$. 
If all robots involved by move $m_9$ reach their target on $C(R)$ then $\xw$ becomes true and the configuration is in $T_{11}$. 

\item[$\h_{3'}$:] If the configuration remains in $T_9$ after applying move $m_9$, the set of robots involved by the move as well as their trajectories do not change, hence the self-loop of $T_9$ is almost-stationary. Once all the robots in $\CT \cup \Ann$ reach $C(R)$ (that is $F$ is formed and the configuration is in $T_{11}$) the configuration is stationary.

\item[$\h_{3''}$:] actually two robots on the same ray can potentially collide, but this is not a problem as at their destination there must be a multiplicity, as $\xp$ holds.\footnote{We remind that property $\h_{3''}$ - as well as $\h_{3'}$ - are desirable but not necessary. As we are going to prove, the current case is actually the only one where property $\h_{3''}$ might be violated.} 

\item[$\h_4$:] when the self-loop is traversed, the overall distance of the robots involved by move $m_9$ to $C(R)$ is decreased. Then, eventually, it becomes zero and all such robots will be on $C(R)$.
\end{itemize}
\end{proof}


\begin{lemma}\label{lem:corr-T8}
Let $R$ be a stationary configuration in $T_8$. From $R$ the algorithm eventually leads to a solvable and stationary 
configuration belonging to $T_9$ or $T_{11}$.
\end{lemma}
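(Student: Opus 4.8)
The plan is to follow the template already used for Lemma~\ref{lem:corr-T9}, establishing properties $\h_2$, $\h_3$, $\h_{3'}$, $\h_{3''}$ and the self-loop part of $\h_4$ for task $T_8$, while leaning on Lemma~\ref{lem:corr-DistMin}, which already guarantees that repeated application of $\Distmin$ forms the modified pattern $F'$. The key preliminary observation I would record is that move $m_8$ never displaces the $m$ robots forming the regular $m$-gon on $C(R)$: procedure $\Distmin$ relocates only robots lying strictly inside or on $\CT$, and every target of $F'$ lies inside or on $\CT$. Hence $C(R)$, its cardinality $|\partial C(R)|=m$, and the regular $m$-gon are preserved throughout the task, so that $\neg\xduno$ and $\xu$ stay true and the embedding of $F'$ remains fixed.

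Property $\h_2$ then follows by noting that, since the boundary is a fixed regular $m$-gon, any rotation in $\Aut(R(t))$ must map this $m$-gon onto itself, whence its order divides $m$; as $m$ divides $\rho(F)$ by $\neg\xduno$, we get that $\rho(R(t))$ divides $\rho(F)$ and no configuration of $\U(F)$ is generated (Property~\ref{prop:asymmetric} reinforces this, since it forces at most one robot per sector to be selected, so that no symmetry exceeding $m$ can be created). For $\h_3$ I would show that the precondition $\pre_8=\preotto$ remains satisfied until $F'$ is formed: $\xa$ stays true because the moving robots remain confined to the disk bounded by $\CT$ and never enter $\Ann$; $\neg\xduno$ and $\xu$ stay true by the preliminary observation; and $\xp$ stays false until $F'$ is fully formed, because until then an interior target or a projected target on $\CT$ is still unmatched, so the radial projection of $\Rob(\Ann\cup\CT)$ cannot coincide with $F$. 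The higher-priority predicates are excluded too ($\xg$ depends only on $F$ and is false, while $\xw$ turns true only when $F$ itself is formed). Thus the only reachable classes are $T_8$ (self-loop, when robots are stopped mid-way), $T_{11}$ (when $F'=F$, so $\xw$ becomes true), and $T_9$ (when $F'\neq F$ is formed, so the robots left on $\CT$ project radially onto the missing boundary points of $F$, making $\xp$ true while $\neg\xm$ still holds thanks to the $m$-gon), matching Table~\ref{tab:tasks-bis}.

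For $\h_{3''}$ I would recall that $\Distmin$ moves robots along safe trajectories whenever they exist (Lemma~\ref{lem:degenerate}) and otherwise slightly deviates them (Lines~\ref{l:10} and~\ref{l:12}) so that a safe trajectory becomes available at the next activation; this yields collision-freeness, the only admitted position coincidences being the intended multiplicities at targets. Regarding $\h_{3'}$, the self-loop of $T_8$ is robust, since the precondition of $T_8$ is stable under all partial movements generated by $m_8$ and the selected robot $r^*(S)$ keeps a strictly decreasing sectorial distance to its target, so any pending move keeps the configuration in $T_8$ until it resolves; when $F'$ is formed every moving robot has reached its target and the reached configuration is stationary, hence both the transition toward $T_9$ and the one toward $T_{11}=T_F$ are stationary (the latter being mandatory by $\h_{3'}$). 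Finally, for the self-loop part of $\h_4$ I would use the monotone measure given by the total sectorial distance of the moving robots to their unmatched targets together with the number of still-unmatched robot-target pairs; by the argument in Lemma~\ref{lem:corr-DistMin} this quantity strictly decreases at each effective traversal, so the self-loop is taken only finitely often and $F'$ is eventually formed.

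The step I expect to be the main obstacle is the rigorous verification that $\xp$ cannot turn true before $F'$ is completely formed, so that no premature transition to $T_9$ is triggered. This demands a careful comparison between the intended targets of $F'$ on $\CT$ and the radial-projection test defining $\xp$, accounting for the fact that during the asynchronous movements robots on $\CT$ may momentarily sit at non-target positions; the confinement of all trajectories to the sectors and to the $\CT$-disk enforced by $\Distmin$ is precisely what rules this phenomenon out.
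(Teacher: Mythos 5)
Your proposal is correct and follows essentially the same route as the paper's own proof: it verifies $\h_2$, $\h_3$, $\h_{3'}$, $\h_{3''}$ and the self-loop part of $\h_4$ for $T_8$ by observing that move $m_8$ never touches the $m$-gon on $C(R)$ (so $\xa$, $\xu$, $\neg\xduno$ are preserved and the embedding of $F'$ is fixed), leaning on Lemma~\ref{lem:corr-DistMin} for collision-freeness and progress, classifying the self-loop as robust and the exits to $T_9$/$T_{11}$ as stationary, and using decreasing distance-to-target for termination of the self-loop. The only notable difference is that you give a more explicit justification that $\xp$ cannot become true before $F'$ is formed (which the paper dismisses as holding ``by definition''), which is a point of care rather than a divergence in approach.
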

\begin{proof}
The aim of the task is to form pattern $F'$ rather than $F$. This is done so as the embedding of $F$ on $R$ is maintained thanks to the $k$-gon on $C(R)$ (cf. description of sub-problems $\RS$ and $\PPF$ of Section~\ref{ssec:algorithm:subdivision}). Note that $\neg \xduno \wedge \xu \Rightarrow \neg \xm$, hence $\xp$ must be false as otherwise the configuration would be in $T_9$.

 \begin{itemize}
\item[$\h_2$:] during the movements, as robots in $C(R)$ remain unchanged (and so $C(R)$ itself), $\rho(R)$ can be at most $|\partial C(R)|$ or a divisor of it. Being $|\partial C(R)|$ a divisor of $\rho(F)$ (since $\neg \xm$ holds), then no unsolvable configurations with respect to the symmetricity (cf. Theorem~\ref{th:rho}) can be generated. Moreover, by Lemma~\ref{lem:corr-DistMin}, if move $m_8$ leads to create a multiplicity, this is on a point corresponding to a multiplicity in $F'$, and also its size would not be greater than that specified by $F'$. Hence, no unsolvable configurations are created on this respect as well.

\item[$\h_3$:] during the movements, $m_8$ does not change the values of $\xw$, $\xm$, and $\xduno$, that are $\false$, and those of $\xa$ and  $\xu$, that are $\true$, as no robots are moved neither toward nor from $C(R)\cup \Ann$. Moreover, by definition $\xp$ remains false until the last robot reaches its destination, that is once $F'$ is formed. The configuration is then always in $T_8$ until $F'$ is formed. 
As soon as the last robot reaches its destination, the configuration satisfies $\xp$. Hence, if $F'$ is different from $F$ (in case there are robots on $\CT$), then the configuration is in $T_{9}$, otherwise the configuration is in $T_{11}$.
\item[$\h_{3'}$:] except for the robots moved by $\Distmin$, no other robot is moved (the only possible ones are those on $\partial C(R)$, not affected by $m_8$), then, when $\xp$ holds the configuration is stationary. Whereas, if the configuration remains in $T_8$ after applying move $m_8$ and it is non-stationary, the set of robots involved by the move does not change but their trajectories could.  This may happen when robots deviate to avoid collisions. Hence, the self-loop in $T_8$ is robust.
\item[$\h_{3''}$:] by Lemma~\ref{lem:corr-DistMin}, procedure $\Distmin$ avoids collisions.
\item[$\h_4$:] if a moving robot is stopped by the adversary during its movement, the configuration remains in $T_{8}$ and the robot will be moved again.  By Lemma~\ref{lem:corr-DistMin}, the total distance of the robots from their target decreases. Hence, the self-loop of $T_8$ can be traversed only a finite number of times.
\end{itemize}
\end{proof}

The next lemmata refer to the \RS subproblem, that is to tasks $T_1$, $T_2$, $\ldots$, $T_7$. All those tasks operates on configurations in $\I \setminus \U(F)$, that is solvable configurations without multiplicities, and as we are going to show each of them generates a configuration in $\I \setminus \U(F)$. Non-initial configurations are instead managed only by tasks $T_8$, $T_9$, $T_{10}$, $T_{11}$ and, as shown in the above lemmata, they never generate configurations in $T_1$, $T_2$, $\ldots$, $T_7$.

\begin{lemma}\label{lem:corr-T7}
Let $R$ be a stationary configuration in $T_7\cap (\I \setminus \U(F))$. From $R$ the algorithm eventually leads to a 
stationary configuration in $\I\setminus \U(F)$ belonging to $T_8$, $T_9$ or $T_{11}$.
\end{lemma}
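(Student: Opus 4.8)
The plan is to mirror the structure of Lemmas~\ref{lem:corr-T8} and~\ref{lem:corr-T9}, establishing in turn the properties $\h_2$, $\h_3$, $\h_{3'}$, $\h_{3''}$, and the self-loop part of $\h_4$ for task $T_7$ and its move $m_7$. Since $m_7$ merely calls $\Circle(\alpha)$ with $\alpha = 2\pi/|\partial C(R)|$, the whole argument can be anchored on the two facts already proved for this procedure: Lemma~\ref{lem:rot} (each moving robot eventually reaches its target and, thanks to the antipodal stop, never overtakes another) and Corollary~\ref{cor:circle} (repeated application of $\Circle$ transforms $R$ into a configuration carrying a regular $m$-gon on $C(R)$ within finitely many LCM cycles). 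The structural observation I would exploit throughout is that $\Circle$ moves robots only along $C(R)$: hence $C(R)$ and its radius are preserved, $\Ann$ stays empty, and the number of robots on the circle remains $m = |\partial C(R)|$, which by $\neg\xddue$ equals the \emph{minimal prime factor} of $\rho(F)$.

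For $\h_2$ I would lean on the primality of $m$. Since $R \in T_7$ satisfies $\neg\xu$, the $m$ robots on $C(R)$ do not form a regular $m$-gon; because $m$ is prime and $m\ge 2$, any nontrivial rotation in $\Aut(R)$ would restrict to a nontrivial rotational symmetry of these $m$ points and thus force them into a regular $m$-gon, a contradiction. Hence every intermediate configuration produced by the rotation is asymmetric ($\rho = 1$), and once the $m$-gon is completed the symmetricity is $1$ or $m$; in all cases it divides $\rho(F)$. Combined with the no-overtaking guarantee of Lemma~\ref{lem:rot}, which prevents any multiplicity from being created, this shows no configuration in $\U(F)$ is ever generated and the outcome stays in $\I\setminus\U(F)$.

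For $\h_3$ I would track which basic variables may change. Because robots stay on $C(R)$, variable $\xa$ remains true, $\neg\xddue$ (hence the implied $\neg\xduno$) remains valid, and $\xg$ is false by assumption. The same primality argument shows $\M(C(R)) = \emptyset$ as long as the $m$-gon is incomplete, so $\xm$ stays true and the configuration remains in $T_7$ (the self-loop) until completion. At the instant the regular $m$-gon appears, $\xu$ becomes true and $\neg\xm$ holds, so by Equation~\ref{eq:2} the configuration falls into $T_{11}$ if $\xw$, into $T_9$ if $\neg\xw \wedge \xp$ (matching $\pre_9 = \neg\xm \wedge \xp$), and otherwise into $T_8$ (whose precondition $\preotto$ is then met); these are exactly the outgoing edges of $T_7$ in Figure~\ref{fig:transitions2}. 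Collision-freedom ($\h_{3''}$) is immediate from Lemma~\ref{lem:rot}, and finiteness of the self-loop ($\h_4$) follows from Corollary~\ref{cor:circle} with the monotone progress of Lemma~\ref{lem:rot}.

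The main obstacle will be $\h_{3'}$, and specifically proving that the transitions \emph{leaving} $T_7$ are stationary rather than merely robust. Classifying the self-loop as robust is easy: a pending rotation keeps all robots on $C(R)$ with $\Ann$ empty and $m$ robots present, so the configuration stays in $T_7$ until the cycle ends. The delicate point is that when the regular $m$-gon is first formed there must be no residual non-nil pending move that would later displace a robot off its $m$-gon position. I would settle this by showing that the regular $m$-gon is a fixed point of $\Circle$—every robot then satisfies $\angolo(r,c(R),r'') = \alpha$, so the guard $\angolo(r,c(R),r'') > \alpha$ fails and $\Circle$ returns a $\nil$ move—and that, by Lemma~\ref{lem:rot}, robots approach their targets monotonically without overshooting. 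Consequently the first configuration satisfying $\xu$ is reached exactly when the last moving robot settles on its target, with no robot still carrying a non-nil computed trajectory. This yields stationarity of the transitions into $T_8$, $T_9$, $T_{11}$ and completes the claim that the reached configuration is stationary, solvable, and belongs to one of these three classes.
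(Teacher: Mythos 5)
Your proof is correct and follows essentially the same route as the paper's: the same property-by-property verification of $\h_2$, $\h_3$, $\h_{3'}$, $\h_{3''}$, $\h_4$ anchored on Lemma~\ref{lem:rot} and Corollary~\ref{cor:circle}, the same primality argument showing intermediate configurations stay asymmetric, and the same variable-tracking to pin the outgoing transitions to $T_8$, $T_9$, $T_{11}$. The only deviations are cosmetic: you classify the self-loop as robust where the paper claims the (stronger) almost-stationary property---either suffices for $\h_{3'}$---and on the stationarity of the outgoing transitions you are in fact more explicit than the paper, which simply asserts it, whereas you argue via the fixed-point and no-overshoot properties of $\Circle$ that no pending move can survive the instant the regular $m$-gon is completed.
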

\begin{proof}
Let  $k=|\partial C(R)|$ be the minimal prime factor of $\rho(F)$. Then $\neg \xddue$ holds and this implies that $\neg \xduno$ holds too.
The $k$ robots on $C(R)$ are rotated by $m_7$ which applies Procedure $\Circle$ so as to obtain a regular $k$-gon without affecting $C(R)$. Once this happens, $\xm$ becomes false and $\xu$ becomes true. 

 \begin{itemize}

	\item[$\h_2$:] as $k=|\partial C(R)|$ is the minimal prime factor of $\rho(F)$, then $k$ is prime. This implies either $\rho(R)=1$ or $\rho(R)=k$. This last possibility can happen only at the end of this task when $\xu$ becomes true, whereas $\rho(R)=1$ for each generated configuration $R$ during the task. Moreover, as Procedure $\Circle$ guarantees to not create multiplicities, then no unsolvable configurations can be generated. 
	\item[$\h_3$:]  
	the move only involves robots in $C(R)$ along $C(R)$, hence $\xa$, that is $\true$, and $\xddue$, that is $\false$ do not change their values.   Variable $\xw$ can become true only once the $k$-gon is formed. Similarly $\neg \xm \wedge \xp$ and $\xu$ remain false as long as the $k$-gon is not formed. Hence, if robots are stopped during their movements, the configuration remains in $T_7$. 
	Once the $k$-gon is formed then $\xm$ becomes $\false$ and $\xu$ becomes $\true$. Since $\neg \xduno$ holds, this implies that the configuration can be in $T_8$ (not in $T_1$, $T_2$, $T_3$, $T_4$, $T_5$, $T_6$), in $T_9$, or in $T_{11}$ according to possible changes of the values of $\xp$ and $\xw$.
	\item[$\h_{3'}$:] at the end of the task the configuration is clearly stationary as the only robots allowed to move are those on $C(R)$ and they do not move once $\xu$ holds. If the configuration remains in $T_7$ after applying move $m_7$, the 
	trajectory of a moving robot might be prolonged but always along the circumference of $C(R)$. Hence, the self-loop in $T_7$ is almost-stationary.
	\item[$\h_{3''}$:] in Procedure $\Circle$  no collisions are possible because the target of a move is always between the moving robot and the next (clockwise) robot on $C(R)$. 
	\item[$\h_4$:] the correctness of Procedure $\Circle$ provided in Corollary~\ref{cor:circle} guarantees the property.
\end{itemize}
\end{proof}

\begin{lemma}\label{lem:corr-T6}
Let $R$ be a stationary configuration in $T_6\cap (\I \setminus \U(F))$. From $R$ the algorithm eventually leads to a stationary configuration in $\I\setminus \U(F)$ belonging to $T_3$ or $T_9$.
\end{lemma}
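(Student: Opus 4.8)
The plan is to follow the same template used for the previous tasks, establishing $\h_2$, $\h_3$, $\h_{3'}$, $\h_{3''}$ and the self-loop part of $\h_4$ for move $m_6$. First I would fix the input: $R$ is stationary and satisfies $\pre_6=\presei$, so $\xa$, $\neg\xc$, $\xm$, $\xt$ all hold; in particular $\Ann$ is empty, $|\partial C(R)|=3$, $2$ divides $\rho(F)$, and $\M(C(R))=\emptyset$. The preliminary step is to prove that $R$ is asymmetric, i.e. $\rho(R)=1$. I would argue that any nontrivial rotation in $\Aut(R)$ about $c(R)$ must permute the three robots of $\partial C(R)$ without fixing any of them (the only fixed point of a nontrivial rotation is $c(R)$, which lies strictly inside $C(R)$); the only orbit structure on three points with no fixed point is a single orbit of size $3$, which forces $3\mid\rho(R)$ and makes the three robots a regular $3$-gon. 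Since $\rho(R)\mid\rho(F)$, this $3$-gon would belong to $\M(C(R))$, contradicting $\xm$. Hence $\rho(R)=1$, which together with the absence of multiplicities gives $R\in\I\setminus\U(F)$ via Theorem~\ref{th:rho}.

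The geometric heart of the proof, and the step I expect to be the main obstacle, is showing that move $m_6$ preserves $C(R)$ while creating the antipodal pair. Because all three robots lie on their smallest enclosing circle, the triangle $r_1r_2r_3$ is non-obtuse, so $\alpha_1\le 90^\circ$. I would keep $r_1$ and $r_3$ fixed and rotate the median-angle robot $r_2$; the key observation is that $\alpha_2$, being the inscribed angle subtending the fixed chord $[r_1,r_3]$ from a fixed arc, stays constant during the motion, while $\alpha_1+\alpha_3=180^\circ-\alpha_2$ stays constant. Thus as $r_2$ rotates so that $\alpha_1$ grows toward $90^\circ$, $\alpha_3$ decreases and all three angles remain $\le 90^\circ$: the triangle stays non-obtuse, so the three robots keep lying on the unchanged $C(R)$. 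When $\alpha_1$ reaches exactly $90^\circ$ the chord $[r_2,r_3]$ becomes a diameter and $r_2,r_3$ become antipodal. I would also note that the ordering $\alpha_1\ge\alpha_2\ge\alpha_3$, hence the role assignment and the target $t$ (the antipode of $r_3$), is preserved throughout, so $r_2$ always heads for the same fixed point $t$. Collision-freeness ($\h_{3''}$) then follows because the arc $\widehat{r_2r_3}$ not containing $r_1$ equals $2\alpha_1\le 180^\circ$ at all times, so $r_1$ stays on the complementary arc and $r_2$ reaches $t$ without ever crossing another robot.

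With these facts the predicate bookkeeping for $\h_2$ and $\h_3$ is routine. During the motion only $r_2$ moves along $C(R)$, so $C(R)$, $\xa$, $\neg\xc$ and $\xt$ are unchanged and $\rho(R)=1$ still divides $\rho(F)$ with no multiplicity created; every generated configuration is therefore solvable and remains in $T_6$ as long as $\xm$ holds, giving the self-loop. When $r_2$ attains $t$, a regular $2$-gon appears on $C(R)$ and (as $2\mid\rho(F)$) $\xm$ turns false; I would then check that $\pre_4,\pre_5,\pre_6,\pre_7,\ldots,\pre_{11}$ are all false, with $\xg$ and $\xw$ false and $|\partial C(R)|=3$ being neither the minimal prime factor of $\rho(F)$ nor smaller than it, so that $\pre_3=\pretre$ becomes the governing precondition. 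The only remaining branch is the value of $\xp$: if $\xp$ holds the configuration satisfies $\prenove$ and lies in $T_9$, otherwise it lies in $T_3$ — exactly the two targets of the statement.

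Finally, for $\h_{3'}$ and the self-loop part of $\h_4$: since the target $t$ is fixed and $r_2$ moves along the arc toward it, a pending $r_2$ has its untraced arc contained in the arc recomputed from any intermediate configuration, so the self-loop of $T_6$ is almost-stationary; and by non-rigidity the arc-distance to $t$ strictly decreases at each activation, so $r_2$ reaches $t$ within finitely many LCM cycles and the self-loop is traversed finitely often. Because $R$ was stationary and only $r_2$ ever moves, the configuration reached when $r_2$ attains $t$ has no pending move and is therefore stationary, completing the argument that the algorithm leads from $R$ to a stationary configuration of $\I\setminus\U(F)$ in $T_3$ or $T_9$.
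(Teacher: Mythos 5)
Your proposal is correct and follows essentially the same route as the paper's proof: establish $\rho(R)=1$ (you via the orbit/$\xm$ contradiction, the paper via noting the configuration would otherwise lie in $T_8$), analyze move $m_6$, and then verify $\h_2$, $\h_3$, $\h_{3'}$, $\h_{3''}$, $\h_4$ with the same $\xp$-branch into $T_3$ or $T_9$ and the same almost-stationary self-loop argument. The only difference is one of detail: your inscribed-angle analysis ($\alpha_2$ constant, $\alpha_1+\alpha_3$ constant, hence the triangle stays non-obtuse, roles are preserved, and no second antipodal pair or collision can arise) explicitly justifies claims that the paper states tersely, namely that $m_6$ never modifies $C(R)$ and is collision-free.
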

\begin{proof}
There are exactly three robots on $C(R)$ (as $\xt=\true$) and $\xw=\false$. Note that $\rho(R)=1$, otherwise, if $\rho(R)=3$ (and then 3 is a divisor of $\rho(F)$) the configuration would not be in $T_6$ (it would be in $T_8$, because in this case $\neg \xduno \wedge \xu$ holds). Moreover $\rho(F)$ must be even as $\xt$ holds. By referring to the description of move $m_6$ note that $\alpha_1\neq 90^\circ$ as otherwise $r_2$ and $r_3$ are antipodal, against $\xm$. Moreover, $\alpha_1<90^\circ$ as otherwise the three robots would lie in half $C(R)$ hence defining a different smallest enclosing circle. Being $\rho(R)=1$, the configuration is asymmetric and hence robot $r_2$ can always be selected and moved toward its target without modifying $C(R)$. 

The configuration can start with an equilateral triangle on $C(R)$ (when three is not a divisor of $\rho(F)$), but as soon as $r_2$ moves, $\xu$ is false and remains false until the end of the task. 

 \begin{itemize}
\item[$\h_2$:] since during this task $\rho(R)=1$ and no multiplicities are created, no unsolvable configurations can be generated.
\item[$\h_3$:] during the movement (i.e., before reaching the target), the variables involved in $\pre_6$ do not change their values. Hence the configuration cannot be in $T_9$ because of $\xm$. It cannot be in $T_8$ because of $\xu$. It cannot be in $T_7$ because of $\xt \Rightarrow \xddue$. Then the configuration remains in $T_6$ until the moving robot reaches the target. 
At that point, $\xm$ becomes false. If $\xp$ is also true then the configuration is in $T_9$. By the same considerations as above, the obtained configuration cannot be in $T_8$ nor in $T_7$. It is not in $T_6$ nor in $T_4$ because of $\xm$.  It is not in $T_5$ because of $\xf$. Hence, it is in $T_3$ since $\pre_3$ holds.
\item[$\h_{3'}$:] the transitions to the tasks following $T_6$ are obviously stationary being $r_2$ the only moving robot. Whereas the self-loop is almost-stationary as the same robot along the same trajectory is moved at any time. 
\item[$\h_{3''}$:] by the definition of move $m_6$ no collision can be generated by $r_2$.
\item[$\h_4$:] the possible self-loops of this task will end as the total distance of the robot from its target decreases. 
\end{itemize}
\end{proof}

\begin{lemma}\label{lem:corr-T5}
Let $R$ be a stationary configuration in $T_5\cap (\I \setminus \U(F))$. From $R$ the algorithm eventually leads to a 
stationary configuration in $\I\setminus \U(F)$ belonging to $T_2$ or $T_7$.
\end{lemma}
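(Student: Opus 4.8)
The plan is to follow exactly the template of the preceding lemmata (cf.\ Lemmas~\ref{lem:corr-T6} and~\ref{lem:corr-T7}) and to verify $\h_2$, $\h_3$, $\h_{3'}$, $\h_{3''}$ together with the self-loop part of $\h_4$ directly for move $m_5$. First I would fix the two structural facts that keep the analysis one-robot-at-a-time. Write $p$ for the minimal prime factor of $\rho(F)$. Since $R\in\I\setminus\U(F)$ we have $\rho(R)\mid\rho(F)$, and since $C(R)$ is a smallest enclosing circle we have $|\partial C(R)|\ge 2$; hence $\xf$ forces $2\le|\partial C(R)|<p$, which already gives $p\ge 3$. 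If $\rho(R)>1$ then the robots of $\partial C(R)$ would split, under the rotation of order $\rho(R)$ about $c(R)$, into orbits of size $\rho(R)$, so $|\partial C(R)|$ would be a positive multiple of $\rho(R)\ge p$, contradicting $\xf$; thus $\rho(R)=1$, and $\neg\xc$ excludes a lone robot sitting on $c(R)$ (that robot would give $\partial C_\uparrow^1(R)=\{r\}$ with $d(r,c(R))=0<\delta(\CB)$). Consequently $m_5$ moves a single, uniquely identified (minimum-view) robot. I would also record, for later use, that on the range $2\le|\partial C(R)|<p$ the value $|\partial C(R)|$ neither equals nor divides $\rho(F)$ (the least divisor larger than $1$ being $p$), so both $\xddue$ and $\xduno$ hold; and that $\xm$ holds, since a regular $|M|$-gon on $C(R)$ with $|M|>1$ and $|M|\mid\rho(F)$ would require $|M|\ge p>|\partial C(R)|$.

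For $\h_2$, the configuration stays asymmetric and multiplicity-free: only the minimum-view robot is driven outward toward an unoccupied target on $C(R)$, and the forbidden-point discipline for $C(R)$ (the case $k=0$ forbids the occupied positions themselves) guarantees that on arrival the new robot is not equivalent to any robot already on $C(R)$, so $\rho$ remains $1$, which divides $\rho(F)$; hence every generated configuration lies in $\I\setminus\U(F)$. For $\h_3$ I would track the basic variables along the move. While the mover is still strictly inside $C(R)$, the count $|\partial C(R)|$ is unchanged, so $\xf$ and therefore $\pre_5$ persist; as $\pre_6,\pre_7,\pre_8$ each require $\xa$ and $\pre_9=\neg\xm\wedge\xp$ is false (because $\xm$ holds), while $\pre_{10}=\xg$ and $\pre_{11}=\xw$ are false by the standing assumptions, the configuration stays in $T_5$ (the self-loop). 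The move leaves $T_5$ exactly when a robot reaches $C(R)$ and $|\partial C(R)|$ becomes $p$, so that $\neg\xddue$ holds: if $\Ann$ is then empty the configuration satisfies $\pre_7$ and lands in $T_7$, whereas if a robot still lies in $\Ann$ (for instance the companions left on the source circle $C_\downarrow^2(R)$ when it lies inside $\Ann$) then $\xa$ is false, $\pre_5$ fails, and the configuration drops to $T_2$. These are precisely the edges $T_5\to T_5$, $T_5\to T_7$ and $T_5\to T_2$ of the transition graph.

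For $\h_{3'}$ I would note that both outgoing transitions occur only at the instant the mover has actually reached $C(R)$, hence it is at rest and any robots left in $\Ann$ were never selected by $m_5$ and are themselves at rest, so the transitions to $T_7$ and to $T_2$ are stationary (matching the caption of Figure~\ref{fig:transitions2}); the self-loop is almost-stationary, since as long as the mover is in transit the target on $C(R)$ (the forbidden points depend only on the fixed robots of $\partial C(R)$) and hence its residual trajectory are recomputed identically. For $\h_{3''}$ a single robot moves toward an unoccupied, non-forbidden target, and, the obstacles being finite, a collision-free outward path always exists. For the self-loop part of $\h_4$ I would use the lexicographic quantity consisting of $|\partial C(R)|$ (strictly increasing at each completed sub-move, bounded above by $p$) and the residual distance of the current mover to $C(R)$ (strictly decreasing at each partial sub-move); it cannot decrease forever, so $T_5$ is left after finitely many traversals.

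The hard part will be the symmetry bookkeeping for $\h_2$ at the \emph{intermediate} instants, because $\partial C(R)$ by itself may be rotationally symmetric even though $R$ is asymmetric, so fixedness of the boundary does not immediately give $\rho(R(t))=1$. I would argue that the moving robot is the unique robot at its current radius for all but finitely many instants and is therefore fixed by any putative non-trivial rotation about $c(R)$ -- impossible off-centre -- and handle the finitely many exceptional radii by the forbidden-point argument already used in Lemma~\ref{lem:GoCorrectness}. A second point I must check, to be sure no edge $T_5\to T_1$ appears, is that $m_5$ never leaves a single robot inside $\CB$ (which would switch on $\xc$): since the move only displaces robots outward no robot is ever brought inside $\CB$, and it remains to verify that draining the source circle cannot reduce the innermost structure to a lone central robot, or else that such configurations fail $\pre_5$ to begin with.
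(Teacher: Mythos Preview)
Your proposal follows the paper's template closely and is essentially correct, but there are two points where it diverges from the paper's proof in ways worth flagging.

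First, the classification of the self-loop. You claim it is almost-stationary, arguing that the forbidden points depend only on the fixed robots of $\partial C(R)$ so the target is recomputed identically. The paper instead classifies the self-loop as \emph{robust}, and for a reason your argument overlooks: the target selection in $m_5$ depends not only on the set of forbidden points but also on the current position of the mover (the move is radial when the radial projection is not forbidden, and deviates to a non-radial target otherwise). If the adversary stops the robot along a non-radial deviation, the robot's radial projection onto $C(R)$ changes, and at the next \compute the recomputed target may be different. Hence ``residual trajectory recomputed identically'' is not justified, and the self-loop is only robust. This does not break correctness (robust suffices for the methodology), but it is a mis-classification.

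Second, your $\h_3$ sentence ``as $\pre_6,\pre_7,\pre_8$ each require $\xa$'' does not by itself make those preconditions false; $\xa$ may well hold during parts of the movement (e.g.\ while the mover is still below $\CT$). What actually kills $\pre_6,\pre_7,\pre_8$ during the self-loop are the facts you recorded earlier and which the paper invokes directly: $\xduno$ and $\xddue$ remain true while $|\partial C(R)|<p$, and $\xt$ is false because $p\ge 3$ forces $2\nmid\rho(F)$. You should cite those rather than $\xa$.

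On the positive side, your closing paragraph --- the symmetry bookkeeping at intermediate radii for $\h_2$, and the explicit check that no $T_5\to T_1$ edge arises via $\xc$ --- is more scrupulous than the paper's own treatment, which simply asserts $\rho(R)=1$ throughout by appeal to forbidden points.
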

\begin{proof}
At the beginning the configuration is necessarily asymmetric, that is $\rho(R)=1$, because the number of robots on $C(R)$ is less than the minimal prime factor of $\rho(F)$, being $\xf=\true$. Hence one robot per time is moved from $C_\downarrow^1(R)$ toward $C(R)$ by means of move $m_5$. In general, the movements are radial toward $C(R)$. Deviations are applied if the move may cause a collision on $C(R)$ or may potentially make the configuration symmetric. As alternative target we may consider the closest middle point in the clockwise direction between two consecutive forbidden points. In any case, $C(R)$ remains unchanged.
 \begin{itemize}
\item[$\h_2$:] since during this task $\rho(R)=1$ is guaranteed by avoiding forbidden points for $C(R)$, hence avoiding also to create multiplicities, no unsolvable configurations can be generated.
\item[$\h_3$:] during the movement of the robot $\xc= \false$, whereas both $\xf$ and $\xm$ are $\true$; variable $\xw=\false$ and variables both $\xduno$ and $\xddue$ are $\true$. Moreover $\xt=\false$ since 2 is not a divisor of $\rho(F)$. 
Then the configuration cannot be in any task from $T_6$ to $T_{11}$, so any configuration generated during the movement remains in $T_5$. 
Once the last robot reaches $C(R)$, variable $\xf$ becomes false.
The obtained configuration cannot belong to $T_{11}$ because of variables $\xw$, It cannot belong to $T_9$ and $T_8$ because of $\xm$ and $\xu$, respectively, as moving robots avoided forbidden points for $C(R)$. If $\xa=\true$, it belongs to $T_7$ since both $\xddue$ and $\xu$ are false, otherwise it belongs to $T_2$ since it cannot belong to $T_3,T_4$, and $T_6$ being $\xa=\false$.
\item[$\h_{3'}$:] the transitions to the tasks following $T_5$ are obviously stationary because there is only one moving robot per time. Whereas the self-loop is robust as the same robot will be moved but its target may change because of deviations to avoid forbidden points for $C(R)$. 
\item[$\h_{3''}$:] by the definition of move $m_5$ there is no robot between the moving robot and its target, then no collision can be generated.
\item[$\h_4$:] the possible self-loops of this task will end as the total distance of the robots from $C(R)$ decreases. 
\end{itemize}
\end{proof}

\begin{lemma}\label{lem:corr-T4}
Let $R$ be a stationary configuration in $T_4\cap (\I \setminus \U(F))$. From $R$ the algorithm eventually leads to a 
stationary configuration in $\I\setminus \U(F)$ belonging to $T_6$, $T_7$, or to a robust configuration in $\I\setminus \U(F)$ belonging to $T_2$.
\end{lemma}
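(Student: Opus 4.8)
The plan is to mirror the structure of the preceding task lemmata and establish $\h_2$, $\h_3$, $\h_{3'}$, $\h_{3''}$ and the self-loop part of $\h_4$ for the move $m_4$. First I would fix the setting. Since $\pre_4=\prequattro$ holds, in particular $\xm$ is true, and because the invariant ``$\rho(R)$ divides $\rho(F)$'' is maintained, I would argue $\rho(R)=1$: a nontrivial rotation of order $\rho(R)>1$ acts freely on $\partial C(R)$ (the variable $\neg\xc$ forbids a robot at the center), partitioning it into regular $\rho(R)$-gons; since $\rho(R)>1$ divides $\rho(F)$, at least one such gon would belong to $\M(C(R))$, contradicting $\xm$. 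Hence $R$ is asymmetric and $m_4$ selects a unique robot $r$, the minimal-view non-critical robot of $\partial C(R)$. I would also observe that such a robot always exists while we genuinely remain in $T_4$: using Equation~\ref{eq:2} one checks that $|\partial C(R)|\le 3$ forces either $\pre_6$ (when $\xt$, giving $T_6$) or $\pre_7$ (when $\neg\xddue$ with $\neg\xu$, giving $T_7$), so any configuration actually in $T_4$ has $|\partial C(R)|\ge 4$, and Property~\ref{prop2} then guarantees a non-critical robot.

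Next I would track the classes visited while $r$ travels from $C(R)$ to $\CT$ through $\Ann$. As soon as $r$ enters the open annulus, $\xa$ becomes false, which kills the preconditions of $T_3,T_4,T_6,T_7,T_8$; combined with $\xg=\xw=\false$, with $\xp=\false$ (kept false by the angular bound built into $\GoToC$, exactly as in the proof of Lemma~\ref{lem:GoCorrectness}), and with $\xf=\false$ (the count $|\partial C(R)|$ never drops below the minimal prime factor of $\rho(F)$), this leaves $P_2$ as the only true predicate. This yields the transition $T_4\to T_2$, and I would argue it is robust: $r$ is the only non-stationary robot, and as long as it has not terminated its cycle it stays in $\Ann$ heading to $\CT$, so $\xa$ stays false and the configuration stays in $T_2$. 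When instead $r$ reaches $\CT$, the annulus is empty again; re-evaluating with $|\partial C(R)|$ decreased by one gives back $T_4$ (stationary self-loop) if robots must still be removed, $T_6$ once $|\partial C(R)|=3$ with $2\mid\rho(F)$ (so $\xt$ and hence $\pre_6$ hold), and $T_7$ once $|\partial C(R)|$ equals the minimal prime factor $p\ge 3$ (so $\neg\xddue$ and, with $\neg\xu$, $\pre_7$ hold). These three stationary endpoints, plus the robust $T_2$ outcome, are exactly the transitions tabulated for $T_4$, which settles $\h_3$ and, together with the type assignments just made, $\h_{3'}$.

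For $\h_2$ and $\h_{3''}$ I would lean on the fact that $m_4$ reuses $\GoToC$ on the single robot $r$, so the analysis already carried out for $T_2$ transfers verbatim. By forbidden-point avoidance for $\CT$, $r$ never becomes equivalent to another robot on $\CT$, so every generated configuration remains asymmetric (this is the case $\M'=\emptyset$ in Lemma~\ref{lem:GoCorrectness}, property~\ref{p:2}); hence $\rho$ divides $\rho(F)$ and no unsolvable symmetry arises. Moreover $r$ moves alone inside a robot-free region confined by $\CT$, the two robot-rays bounding it, and the next clockwise robot, so no multiplicity/collision is created (Lemma~\ref{lem:GoCorrectness}, property~\ref{p:4}); thus all produced configurations lie in $\I\setminus\U(F)$. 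Finally $\h_4$ for the self-loop is immediate: each completed traversal moves one robot off $C(R)$, so $|\partial C(R)|$ strictly decreases while staying bounded below by $\max\{3,p\}$, whence only finitely many self-loops occur before $T_6$ or $T_7$ is reached.

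I expect the main obstacle to be $\h_2$ hand in hand with the robustness claim: one must be certain that removing a robot from $C(R)$ in an \emph{asymmetric} configuration cannot accidentally raise the symmetricity above $1$ nor create a multiplicity — this is precisely where the role of the forbidden points for $\CT$ and the reuse of Lemma~\ref{lem:GoCorrectness} are essential — and that the degenerate case $|\partial C(R)|=3$ with all three robots critical is correctly deflected to $T_6$ rather than leaving the removal stuck, since $\GoToC$ by itself could not then remove a robot while preserving $C(R)$.
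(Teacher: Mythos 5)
Your proposal is correct and follows essentially the same route as the paper's proof: deduce from $\xm\wedge\neg\xc$ (and divisibility of $\rho(R)$ into $\rho(F)$) that $R$ is asymmetric, move the single minimal-view non-critical robot via $\GoToC$, split into the case where it reaches $\CT$ (stationary self-loop or stationary transition to $T_6$/$T_7$, with $T_8$, $T_9$, $T_{11}$ blocked by the persistence of $\xm$ and by $\xp$, $\xw$ staying false) and the case where the adversary stops it in $\Ann$ (robust transition to $T_2$), with $\h_2$/$\h_{3''}$ inherited from the Lemma~\ref{lem:GoCorrectness}-style analysis and $\h_4$ from the strict decrease of $|\partial C(R)|$. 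Your additions — the explicit rotation-orbit argument for $\rho(R)=1$ and the explicit existence of a non-critical robot via $|\partial C(R)|\ge 4$ and Property~\ref{prop2} — are refinements the paper leaves implicit (it instead derives $|\partial C(R)|$ strictly greater than the minimal prime factor of $\rho(F)$ from $\pre_5$, $\pre_7$, $\pre_8$ being false), not a different approach.
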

\begin{proof}
In this task $\pre_4=\prequattro$ holds. Since $\rho(R)$ divides $\rho(F)$ by hypothesis, $\xc=\false$ and $\xm=\true$ imply that the current configuration $R$ is asymmetric. Moreover, according to the way predicates are defined, all preconditions concerning tasks $T_5$, $T_6$, $\ldots$, $T_{11}$ are false. In particular, this implies the following properties: 
\begin{itemize}
\item
being $\xc=\false$, from $\pre_5=\false$ we derive $\xf=\false$: this means that on $C(R)$ there is a number of robots greater than or equal to the minimal prime factor of $\rho(F)$.
\item
being $\xa=\true$, from $\pre_7=\false$ and $\pre_8=\false$ we derive that at least one variable among $\xduno$ and $\xddue$ must be true. This means that on $C(R)$ there is a number of robots which is not equal to the minimal prime factor of $\rho(F)$.
\end{itemize}
By combining the previous properties, we know that on $C(R)$ there is a number of robots greater than the minimal prime factor of $\rho(F)$.


According to move $m_4$,  the algorithm removes one robot at a time from $C(R)$ (without affecting $C(R)$ by opportunely removing non-critical robots) until exactly $p$ robots remain, where $p$ is the minimal prime factor of $\rho(F)$.
 \begin{itemize}
 \item[$\h_2$:] according to Procedure $\GoToC$, the robot $r$ on $C(R)$ of minimal view is straightly moved toward a suitable point on $\CT$. By similar arguments applied in the proof of Lemma~\ref{lem:GoCorrectness}, such a movement maintains the configuration asymmetric, that is its symmetricity equals one. Moreover, no multiplicities are created and hence no unsolvable configurations are generated. 
\item[$\h_3$:] as soon as $r$ starts moving, $\xa$ becomes false. We can distinguish two cases: either $r$ reaches its target on $\CT$ or it stops before. 

When $r$ reaches its target on $\CT$, each variable referring to $C(R)$ can be potentially influenced. Among those, certainly $\xduno$ and $\xddue$ can change; $\xf$ cannot change and hence it remains false; $\xt$ and $\xu$ can change; $\xm=\true$ and does not change; $\xw$ cannot change. Consequently, no configurations in $T_{11}$ nor in $T_9$ can be generated because of $\xm$. 
Concerning $T_8$, notice that the following implication holds $\neg \xduno \wedge \xu \Rightarrow \neg \xm$.
Hence, since $\xm= \true$ in $R$ and it does not change its value, then no configurations in $T_8$ can be generated ($P_8$ requires $\neg \xduno \wedge\xu= \true$).
If $\xddue$ becomes false, then task $T_7$ must be applied so as to evenly distribute robots on $C(R)$, hence making variable $\xu$ true. This is due to the fact that $\xm$ is false along the whole task. If $\xt$ becomes true, then task $T_6$ must be applied as 3 would not be the minimal prime factor of $\rho(F)$ and $\xm=\true$, that is there are no antipodal robots on $C(R)$. $T_5$ cannot be reached as $\xf$ cannot change and hence it remains false. If nothing changes, still task $T_4$ is applied.

When $r$ does not reach its target on $\CT$ (i.e., it is stopped by the adversary inside $\Ann$), $\xa$ becomes false. It can be easily observed that in this case only task $T_2$ can be reached. 

\item[$\h_{3'}$:] the transitions to tasks $T_6$ and $T_7$ as well as the self-loop are obviously stationary because there is only one moving robot per time which has to reach its target. Whereas the transition to $T_2$ is robust as the same robot will be moved by $m_2$ but its target may change because of deviations to avoid forbidden points for $\CT$. 
\item[$\h_{3''}$:] collisions cannot occur according to Procedure $\GoToC$.
\item[$\h_4$:] the repeated application of $m_4$ eventually ends as the number of robots in $\partial C(R)$ decreases opportunely.
\end{itemize}
\end{proof}

\begin{lemma}\label{lem:corr-T3}
Let $R$ be a configuration in $T_3\cap (\I \setminus \U(F))$. From $R$ the algorithm eventually leads to a 
stationary configuration in $\I\setminus \U(F)$ belonging to  $T_8$ or to a configuration in $\I\setminus \U(F)$ belonging to $T_2$.
\end{lemma}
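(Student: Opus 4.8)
The plan is to follow the template already used for the other \RS tasks (compare Lemmata~\ref{lem:corr-T4} and~\ref{lem:GoCorrectness}), proving in turn $\h_2$, $\h_3$, $\h_{3'}$, $\h_{3''}$ and the self-loop part of $\h_4$ for move $m_3$, and then combining them. First I would record what membership in $T_3$ entails. Since $P_3=\pre_3\wedge\neg(\pre_4\vee\cdots\vee\pre_{11})$ holds, from $\pre_3=\xa\wedge\neg\xc$ together with $\neg\pre_4$ I get $\neg\xm$ (i.e. $\M(C(R))\neq\emptyset$), from $\neg\pre_5$ (using $\neg\xc$) I get $\neg\xf$, and from $\neg\pre_9$ (using $\neg\xm$) I get $\neg\xp$. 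I would then note that $m_3$ moves via $\GoToC$ only robots outside a surviving gon: it first empties $\partial C(R)\setminus\M'(C(R))$, and only when $\partial C(R)=\M'(C(R))$ with more than one maximal gon present does it remove a whole minimal-view gon. Hence at least one $M\in\M(C(R))$ stays on $C(R)$ throughout, so $\neg\xm$ is invariant; and since every surviving gon has $|M|>1$ with $|M|\mid\rho(F)$, we keep $|\partial C(R)|\ge|M|\ge$ (the minimal prime factor of $\rho(F)$), so $\xf$ stays false as well. Because $\GoToC$ never moves robots inward, $\xc$ remains false, which also excludes $T_1$ (as $\pre_2=\neg\xc$ stays true).

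For $\h_2$ I would reason exactly as in Lemma~\ref{lem:GoCorrectness}(\ref{p:2}), now applied with $C=C(R)=C_{\downarrow}^{1}(R)$: the minimal-view robots selected by $m_3$ form (a subset of) a regular $\rho(R)$-gon, and by Lemma~\ref{lem:divM} the symmetricity of every generated configuration divides $|M|$ and hence $\rho(F)$, while $\GoToC$ avoids forbidden points for $\CT$ and so creates neither a higher symmetry nor a multiplicity. Thus every produced configuration lies in $\I\setminus\U(F)$. Property $\h_{3''}$ is inherited verbatim from the collision-freeness of $\GoToC$ (Lemma~\ref{lem:GoCorrectness}(\ref{p:4})).

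The core is $\h_3$. Using the invariants $\neg\xm$, $\neg\xf$, $\neg\xp$ (the last staying false by the angle-$\alpha$ confinement of $\GoToC$, as in the $T_9$-avoidance argument of Lemma~\ref{lem:GoCorrectness}(\ref{p:1})) I would examine the outcomes of a removal. If a selected robot is stopped in $\Ann$ then $\xa$ becomes false; since $\neg\xm$ kills $\pre_4,\pre_6$, $\neg\xf$ kills $\pre_5$, and $\xa$ kills $\pre_7,\pre_8$, the only surviving precondition below $T_3$ is $\pre_2=\neg\xc$, giving the transition to $T_2$. If instead all moving robots reach $\CT$, then $\Ann$ is empty again: either more than one maximal gon remains (self-loop in $T_3$), or exactly one maximal $m$-gon $M$ is left, whence $\partial C(R)=M$ is a regular $m$-gon with $m\mid\rho(F)$, so $\xu$ and $\neg\xduno$ hold while $\neg\xm,\neg\xp,\neg\xw$ persist, yielding $T_8$. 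This reproduces the edges $T_3\to\{T_2,T_3,T_8\}$ of Figure~\ref{fig:transitions2}. For $\h_{3'}$ the edge to $T_8$ is stationary (reaching $T_8$ forces every removed robot onto $\CT$ and leaves the $m$-gon idle), the edge to $T_2$ is (as declared) unclassified, and the self-loop is likewise unclassified, since a worst-case schedule can leave a group member pending on $C(R)$ whose execution would carry it into $\Ann$, i.e. out of $T_3$. For the self-loop part of $\h_4$ I would use the monotone quantity $|\partial C(R)|$: each self-loop traversal completes at least one removal $C(R)\to\CT$ and strictly decreases $|\partial C(R)|$, which is bounded below by $|M|$; hence the self-loop is taken finitely often. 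Combining $\h_3$ and $\h_4$, the execution must leave $T_3$ through $T_2$ or $T_8$, and the $T_8$ configuration is stationary, which is the claim.

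The main obstacle is precisely the $T_3\leftrightarrow T_2$ interaction. Because here $\rho(R)$ may exceed $1$, several robots move concurrently and the adversary may strand a strict subset of a symmetric group in $\Ann$; I must invoke Lemma~\ref{lem:divM} to show this neither raises the symmetricity above a divisor of $\rho(F)$ nor produces a forbidden configuration, and that the ensuing pending moves — which make both the exit to $T_2$ and the self-loop unclassified — do not compromise solvability. Note that the genuine termination of the $(T_2,T_3)$ alternation is \emph{not} settled in this lemma (which only requires reaching \emph{some} $T_2$ configuration) and is deferred to the global cycle analysis of Theorem~\ref{th:correctness}.
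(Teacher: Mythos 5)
Your proposal matches the paper's own proof in essentially every respect: the same derivation of the invariants $\neg\xm$, $\neg\xf$, $\neg\xp$ from the predicate structure, the same appeal to Lemma~\ref{lem:GoCorrectness} and Lemma~\ref{lem:divM} for $\h_2$ and $\h_{3''}$, the same case analysis for $\h_3$ (mover stopped in $\Ann$ gives $T_2$; all movers on $\CT$ gives a self-loop or $T_8$), the same classification in $\h_{3'}$ (edge to $T_8$ stationary, edge to $T_2$ and self-loop unclassified), the same monotone decrease of $|\partial C(R)|$ for the self-loop part of $\h_4$, and the same deferral of the $(T_2,T_3)$ cycle termination to Theorem~\ref{th:correctness}. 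The only point the paper treats that you gloss over is why the angle-$\alpha$ confinement (hence the falsity of $\xp$) still applies when the movers start \emph{from} $C(R)$ rather than from inside $\Ann$: since $m_3$ moves concurrently only robots of the same minimal view, such robots cannot be consecutive on $C(R)$ (otherwise all robots of $\partial C(R)$ would be equivalent, forcing $\neg\xduno\wedge\xu$ and hence membership in $T_8$ rather than $T_3$), so each mover's reference robot $r^-$ used to define $\alpha$ is itself stationary --- a short check worth adding, but not a flaw in your approach.
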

\begin{proof}
In this task $\pretre$ holds and all preconditions concerning tasks $T_4$, $T_5$, $\ldots$, $T_{11}$ are false. This means that from $\pre_3 \wedge \neg \pre_4$ it follows $\xm=\false$. That is, on $C(R)$ there exists a maximal set of $k$ robots regularly disposed, such that $k$ divides $\rho(F)$.
On $C(R)$ there must be more than $k$ robots as otherwise being $\xm$ true, $\neg \xduno \wedge \xu$ would be true as well and the configuration is instead in $T_8$.
The aim of the move is to keep on $C(R)$ only $k$ robots forming a regular $k$-gon (hence $C(R)$ is unchanged) and this is realized by means of Procedure $\GoToC$ that moves robots from $C(R)$ to $\CT$. 
According to move $m_3$, at most $\rho(R)$ robots per time can move.
 \begin{itemize}
 \item[$\h_2$:] according to $m_3$, the robots on $C(R)$ that should move are those of minimum view chosen among the set $\partial C(R)\setminus \M'$ if this is not empty, otherwise all robots on $C(R)$ of minimal view are chosen. The selected robots are straightly moved toward suitable points on $\CT$. As the move is basically the same applied in $T_2$ but involving robots from $C(R)$, similar arguments of the proof of Lemma~\ref{lem:GoCorrectness} guarantee to maintain the symmetricity of the configuration equal to a divisor of $k$ along all the movement. Since $k$ divides $\rho(F)$ and since no multiplicities are created, then no configuration in $\Unew(F)$ can be generated.   
\item[$\h_3$:] as soon as robots from $C(R)$ start moving, $\xa$ becomes false. We can distinguish three cases: 1) all the active robots involved by move $m_3$ reach their targets on $\CT$; 2) some of them do not reach their target but all of them start moving; 3) some of them have performed the \look phase but did not start moving yet. 

In case 1, if no variable changes its value, still task $T_3$ is applied. 
Otherwise, being the targets of the moving robots on $\CT$, then $\xw=\false$. Moreover, similarly to what is shown in the proof of Lemma~\ref{lem:GoCorrectness}, $\xp$ remains false as well because of the limit imposed by angle $\alpha$ established when calling $\GoToC$. Differently from $m_2$ now robots start moving from $C(R)$ which potentially may affect the definition of angle $\alpha$. However, since in $m_3$ only robots with the same minimum view can move concurrently, then they could not have been consecutive on $C(R)$ when $T_3$ started. This would in fact imply that all robots on $C(R)$ were equivalent, i.e. $\xu$ was true. Since $\xm$ was false, then also $\xduno$ would have been true, but then the configuration was in $T_8$ rather than in $T_3$.

Hence, the configuration is not in $T_9$ nor in $T_{11}$. 
If $\xduno$ becomes false, then the configuration might belong to $T_8$ if $\xu$ is true. Whereas if $\xu$ is false, it does not belong to $T_8$ nor to $T_7$ because $\xduno \Rightarrow  \xddue$. 
Variable $\xm$ cannot change its value and it is false, that is the configuration cannot belong to $T_6$ nor to $T_4$. It does not belong to $T_5$ because of $\xf$.  

In case 2, some robots are still inside $\Ann$, hence $\xa$ becomes false and task $T_2$ is invoked. 

In case 3, some robots might be still inside $\Ann$ in which case  task $T_2$ is invoked. Whereas if $\Ann$ is empty then task $T_3$ is still applied because more than $k$ robots are on $C(R)$, that is $\neg \xduno \wedge \xu$ is false.

\item[$\h_{3'}$:] the reached configuration is stationary if all robots reach $\CT$ (i.e. the configuration belongs to $T_8$). Otherwise there might be robots on $C(R)$ or in $\Ann$ concerning pending moves that will reach a suitable target on $\CT$, possibly computed from a different task and/or from a different configuration. By Lemma~\ref{lem:GoCorrectness}, we have that the transition to $T_2$ or even the self-loop are unclassified. This is due to the fact that when such transitions occur, there might be robots that have decided to move while they wouldn't have moved from the current configuration, or they would have moved with respect to a different trajectory. 
\item[$\h_{3''}$:] collisions cannot occur according to Procedure $\GoToC$.
\item[$\h_4$:] the repeated application of $m_3$ eventually ends as the number of robots in $\partial C(R)$ decreases until leaving a single regular $k$-gon.
\end{itemize}
\end{proof}

\begin{lemma}\label{lem:corr-T2}
Let $R$ be a configuration in $T_2\cap (\I \setminus \U(F))$. From $R$ the algorithm eventually leads to a 
stationary configuration in $\I\setminus \U(F)$ belonging to $T_4$, $T_6$, $T_7$, $T_8$, or to a configuration in $\I\setminus \U(F)$ belonging to $T_3$.
\end{lemma}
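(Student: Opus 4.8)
The plan is to mirror the pattern used for the other \RS tasks and verify, one after another, properties $\h_2$, $\h_3$, $\h_{3'}$, $\h_{3''}$ and the self-loop part of $\h_4$, leaning almost entirely on Lemma~\ref{lem:GoCorrectness}, which already isolates the behaviour of Procedure $\GoToC$ when it is fired on a configuration of $T_2$. First I would read off the meaning of membership in $T_2$: since $\pre_2=\predue=\neg\xc$ holds while $\pre_3=\pretre=\xa\wedge\neg\xc$ must fail, the annulus is non-empty, i.e.\ $\xa=\false$; moreover $\pre_5=\precinque=\neg\xc\wedge\xf$ must fail, so $\xf=\false$ (there are at least as many robots on $C(R)$ as the minimal prime factor of $\rho(F)$), and trivially $\xg=\xw=\false$. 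The move $m_2$ selects the $\Ann$-circle closest to $\CT$ and pushes its robots onto $\CT$ via $\GoToC$, so the whole analysis reduces to tracking what $\GoToC$ does.

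For $\h_2$ and $\h_{3''}$ I would simply invoke items~\ref{p:2} and~\ref{p:4} of Lemma~\ref{lem:GoCorrectness}: along the entire motion $\rho(R(t_i))$ keeps dividing $\rho(F)$ and no multiplicity is ever created, so by Theorem~\ref{th:rho} every generated configuration remains in $\I\setminus\U(F)$ and the task is collision-free. For $\h_3$ the argument runs as follows. By item~\ref{p:1}, as long as some robot of the current batch has not yet reached $\CT$ the configuration stays in $T_2$, which accounts for the self-loop; and since $m_2$ never touches $C(R)$, the count $|\partial C(R)|$ is frozen, whence $\xf$ stays false and $T_5$ is unreachable, $\xg,\xw$ stay false (ruling out $T_{10},T_{11}$), and the angular bound $\alpha$ hard-wired in $\GoToC$ forbids a moving robot from ever lying on a radial projection of a target, so $\xp$ stays false and $T_9$ is excluded. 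Once $\Ann$ is empty and $\xa$ turns true I would case-split on the boundary variables: if $\neg\xm$ the configuration lands in $T_8$ (when $\neg\xduno\wedge\xu$) and otherwise in $T_3$; if instead $\xm$ it lands in $T_7$ (when $\neg\xddue$, which forces $\neg\xu$), in $T_6$ (when $\xt$), or in $T_4$ (otherwise). These are exactly the out-edges of $T_2$ shown in Figure~\ref{fig:transitions2}.

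For $\h_{3'}$ the lever is item~\ref{p:3}: when a whole batch has actually been delivered to $\CT$ the reached configuration is stationary, so the exits toward $T_4,T_6,T_7,T_8$ are stationary transitions, matching the claim that those destinations are stationary. The delicate pair is $T_2\leftrightarrow T_3$: because $T_3$ itself returns control to $T_2$ through an unclassified transition (cf.\ Lemma~\ref{lem:corr-T3}), a configuration reaching $T_2$ may already carry a robot whose pending trajectory was computed from $C(R)$ and does not coincide with the one $\GoToC$ would recompute inside $\Ann$; for this reason both the self-loop of $T_2$ and the edge $T_2\to T_3$ have to be declared unclassified, which is exactly why the statement asserts stationarity only for $T_4,T_6,T_7,T_8$ and merely membership for $T_3$. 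Finally, the self-loop part of $\h_4$ is immediate: while the execution stays inside $T_2$, $m_2$ only removes robots from $\Ann$ and never inserts any (robots enter $\Ann$ only from $C(R)$, i.e.\ from $T_3/T_4$, hence only upon leaving $T_2$), so $|\Rob(\Ann)|$ is a strictly decreasing non-negative integer and the self-loop is traversed finitely often.

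The main obstacle is the combination of $\h_3$ and $\h_{3'}$, namely certifying the transition set together with the unclassified status of the $T_2\leftrightarrow T_3$ exchange. The crux is that Lemma~\ref{lem:GoCorrectness} is stated for a \emph{clean} (stationary) entry into $T_2$, whereas the interplay with $T_3$ may hand $T_2$ a configuration already in motion; the plan is to confine this phenomenon to the self-loop and the edge toward $T_3$, conceding them as unclassified, while deducing genuine stationarity (via item~\ref{p:3}) for every remaining exit. I deliberately do \emph{not} settle here the finiteness of the compound $T_2\leftrightarrow T_3$ cycle: that is left to the global cycle analysis of the concluding Theorem~\ref{th:correctness}.
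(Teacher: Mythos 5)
Your overall route mirrors the paper's proof: read off the variable values forced by $P_2$, lean on Lemma~\ref{lem:GoCorrectness} for $\h_2$/$\h_{3''}$, case-split on the boundary variables to get exactly the exits $T_3,T_4,T_6,T_7,T_8$ (your case analysis here is actually more explicit than the paper's), use item~\ref{p:3} for stationarity of the exits, and concede the self-loop and the $T_2\leftrightarrow T_3$ exchange as unclassified. However, there is a genuine flaw in your $\h_4$ argument, and it leaks into two of your auxiliary claims. You assert that during the $T_2$ self-loop ``$m_2$ only removes robots from $\Ann$ and never inserts any,'' so that $|\Rob(\Ann)|$ is strictly decreasing. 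This contradicts the very pending-move phenomenon you concede two sentences earlier: when $T_3$ hands over a non-stationary configuration, a robot still sitting on $C(R)$ may have already committed (in its \look/\compute of task $T_3$) to a $\GoToC$ trajectory; when the scheduler later lets it move, it leaves $C(R)$ and enters $\Ann$ while the configuration is \emph{still} in $T_2$ --- no class change occurs, since $\pre_3,\pre_4,\pre_6,\pre_7,\pre_8$ keep failing ($\xa$ stays false), $\xf$ stays false (the pending $T_3$ moves never strip $C(R)$ below the retained $k$-gon, $k\geq$ the minimal prime factor of $\rho(F)$), and $\xp,\xg,\xw$ are unchanged. Hence $|\Rob(\Ann)|$ can increase inside the self-loop and your monotone quantity is not monotone. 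The same phenomenon falsifies your claim that ``$|\partial C(R)|$ is frozen'' during $T_2$, and it means items~\ref{p:2} and~\ref{p:4} of Lemma~\ref{lem:GoCorrectness} cannot be cited verbatim for $\h_2$/$\h_{3''}$: that lemma presupposes a clean entry into $T_2$, so its conclusions do not automatically cover executions that start with robots already in motion.

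The repair is exactly what the paper does. For $\h_2$ (and $\h_{3''}$) it splits on whether $R$ is stationary: if yes, Lemma~\ref{lem:GoCorrectness} applies; if not, the incoming pending moves were themselves produced by Procedure $\GoToC$ (invoked by $m_3$), so the \emph{arguments} of that lemma --- not its statement --- are re-run to exclude unsolvable configurations and multiplicities. For $\h_4$ it uses a metric, not a counting, argument: every robot moving while the configuration is in $T_2$ (whether scheduled by $m_2$ or pending from an earlier task) strictly decreases its distance to $\CT$, by at least $\nu$ per LCM cycle, and only finitely many pending moves can ever be injected; hence the self-loop is traversed finitely often. With these two substitutions your proof becomes essentially the paper's; as written, the finiteness step and the non-stationary coverage of $\h_2$/$\h_{3''}$ do not go through.
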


\begin{proof}

In this task $\pre_2=\predue$ holds and, consequently, all preconditions concerning tasks $T_3$, $T_4$, $\ldots$, $T_{11}$ are false. We recall that task $T_2$ is responsible for the correct removal of the robots from $\Ann$ toward $\CT$ in a configuration $R$. Hence $C(R)$ cannot change. Notice that in $R$, and during all the movements of all robots in $\Ann$, variable $\xa=\false$.  Moreover, there might be a number of robots equal to $\rho(R)$ that can move concurrently according to $m_2$ (this may occur when the processed configuration is symmetric). In particular, all robots in $\Ann$ closest to $c(R)$ and of minimal view move according to the trajectory computed by Procedure $\GoToC$. Note that at beginning of task $T_2$ the configuration could be non-stationary if the previous performed task is $T_3$.

\begin{itemize}
\item[$\h_2$:] if the configuration  $R$ is stationary, by Lemma~\ref{lem:GoCorrectness} no configuration in  $\Unew(F)$ is generated. If the configuration $R$ is non-stationary then the transition that led to $R$ was robust as generated from task $T_3$ by calling the same Procedure $\GoToC$. By similar arguments provided in the proof of Lemma~\ref{lem:GoCorrectness}, it is possible to show that unsolvable configurations cannot be generated.

\item[$\h_3$:] when all the moving robots reach their target, the configuration can be in $T_2$ again if there were more robots in $\Ann$ than the moved ones (e.g., when there are circles $C_{\downarrow}^i$ with different index $i$ inside $\Ann$). The configuration remains in $T_2$ as long as $\Ann \neq \emptyset$. Once this occurs, all the robots from $\Ann$ have reached $\CT$, and the resulting configuration $R'$ cannot be in $T_1$ as $\xc=\false$, in $T_9$ as $\xp$ remains false by the computed targets of $\GoToC$, in $T_{11}$ as $\xw$ remains false. In contrast, $R'$ could be in any class $T_3$, $T_4$, $T_6$, $T_7$, $T_8$, depending on the status of the variables. 

\item[$\h_{3'}$:] the transition to $T_3$ might be unclassified if $R$ was originally generated from $T_3$ itself by means of an unclassified transition. Otherwise, and for any other task different from $T_2$, the obtained configuration is stationary as variable $\xa$ changes its value only when all the robots in $\Ann$ reach $\CT$. The self-loop is unclassified as the set of robots involved by $m_2$ might change as well as their trajectories. However, Lemma~\ref{lem:GoCorrectness} ensures to make $\Ann$ empty eventually.  

\item[$\h_{3''}$:] 
Lemma~\ref{lem:GoCorrectness} guarantees that any configuration obtained while performing task $T_2$ has no multiplicities. This implies that move $m_2$ is collision-free.

\item[$\h_4$:] if a robot does not reach its target because of the adversary, then the configuration remains in $T_2$, since no variable changes its value and $\Ann$ is not empty ($\xa$ remains false). However the moving robot decreases its distance to $\CT$, so task $T_2$ can be performed a finite number of times.
\end{itemize}
\end{proof}

\begin{lemma}\label{lem:corr-T1}
Let $R$ be a stationary configuration in $T_1\cap  (\I \setminus \U(F))$. From $R$ the algorithm eventually leads to a 
stationary configuration in $\I\setminus \U(F)$ belonging to $T_2$, $T_3$, $T_4$, $T_5$ or $T_6$.
\end{lemma}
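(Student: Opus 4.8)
The plan is to establish the same five properties ($\h_2$, $\h_3$, $\h_{3'}$, $\h_{3''}$, $\h_4$) verified in the preceding lemmata. A configuration lies in $T_1$ exactly when $\xc$ holds while, by the definition of $P_1$ in Equation~\ref{eq:2}, every higher precondition $\pre_2,\ldots,\pre_{11}$ fails. Move $m_1$ relocates the unique innermost robot $r$ (the single element of $\partial C_{\uparrow}^{1}(R)$) radially toward $\CB$. The first thing I would establish is a \emph{locality} property: since $\delta(\CB)<\delta(C_{\uparrow}^{2}(R))$ by definition of $\CB$, the robot $r$ stays strictly inside $C_{\uparrow}^{2}(R)$ throughout the move, so it never enters $\Ann$, never reaches $\CT$, and never lies on $C(R)$. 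Consequently $m_1$ alters neither $C(R)$, nor $\Rob(\Ann)$, nor $\M(C(R))$, so the variables that read off $C(R)$ and $\Ann$ — namely $\xa,\xduno,\xddue,\xf,\xt,\xu,\xm$ — keep their $T_1$ values, and $\xg$ is constant as it depends only on $F$.

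For $\h_2$ I would treat the two situations encoded by $\xc$. If $r$ is already off-center the configuration is asymmetric and moving one robot keeps $\rho(R)=1$; in the symmetry-breaking situation $r$ occupies $c(R)$ and, by the discussion around Figure~\ref{fig:rho}.$(c)$, $\rho(R)=1$ with a rotation absorbed by the central robot. The crucial point is that once $r$ leaves $c(R)$ it is the \emph{unique} robot inside $\CB$ (all others sit at radius at least $\delta(C_{\uparrow}^{2}(R))>\delta(\CB)$); any non-trivial rotation would have to fix $r$, which is impossible, so $\rho(R')=1$ at every position with $d(r,c(R))>0$. This divides $\rho(F)$, and since the trajectory lies inside $C_{\uparrow}^{2}(R)$ — a region containing no robot but $r$ — no multiplicity is created, which also yields $\h_{3''}$. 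The direction of $m_1$ must be chosen so that this off-center robot does not, together with the outer robots, recreate a rotation: this is exactly the difficulty flagged in the description of \SB, and it is the step I expect to be the main obstacle.

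The transition analysis ($\h_3$) then becomes almost mechanical. Because $\xg,\xa,\xduno,\xddue,\xu,\xm$ are unchanged, the preconditions $\pre_7=\xa\wedge\neg\xddue\wedge\neg\xu$, $\pre_8=\xa\wedge\neg\xduno\wedge\xu$ and $\pre_{10}=\xg$ keep the false values they had in $T_1$, so no transition into $T_7,T_8$ or $T_{10}$ is possible. For $\pre_9=\neg\xm\wedge\xp$ and $\pre_{11}=\xw$ I would invoke the off-center consequence above: with $r$ alone inside $\CB$, both $R'$ and the radially projected configuration have symmetricity $1$, while $\neg\xg$ (inherited from $T_1$) forces $\rho(F)>1$; hence neither can be similar to $F$, so $\xp$ and $\xw$ are false and $T_9,T_{11}$ are excluded. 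As soon as $r$ reaches $\CB$ the distance clause of $\xc$ fails, so $\pre_2=\neg\xc$ turns true, and a top-down evaluation of the predicates places $R'$ in exactly one of $T_2,\ldots,T_6$ according to the unchanged $\xa,\xm,\xt,\xf$ (namely $T_6$ when $\xa\wedge\xm\wedge\xt$, $T_5$ when $\xf$, $T_4$ when $\xa\wedge\xm$, $T_3$ when $\xa$, and $T_2$ otherwise). While $0<d(r,c(R))<\delta(\CB)$ these same values keep $R$ in $T_1$, which is the self-loop.

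Finally, $\h_{3'}$ and $\h_4$ are short. Only $r$ moves, so when it halts on $\CB$ nothing is in motion and every outgoing transition is stationary; the self-loop is almost-stationary because the untraced part of $r$'s path is precisely the radial continuation toward $\CB$ that $r$ would recompute. For $\h_4$, non-rigidity guarantees each activation shortens $d(r,\CB)$ by at least $\nu$, so $r$ reaches $\CB$ after finitely many cycles and leaves $T_1$. In summary, once the locality property and the off-center symmetricity argument are in place, every clause reduces to the observation that $m_1$ affects only $\xc$ among the variables able to trigger a task outside $T_2,\ldots,T_6$; the one genuinely delicate point remains the choice of $r$'s radial direction in $\h_2$ so that no new rotation appears.
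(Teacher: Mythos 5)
Your proposal is correct and takes essentially the same route as the paper: it verifies the same properties $\h_2$, $\h_3$, $\h_{3'}$, $\h_{3''}$, $\h_4$ for move $m_1$, resting $\h_2$ and $\h_{3''}$ on the fact that $r$ is the unique robot strictly inside $\CB$ (hence $\rho=1$ throughout the motion), and $\h_3$ on the invariance of every variable except $\xc$; the only cosmetic difference is that you exclude $T_9$ and $T_{11}$ via the symmetricity comparison $\rho(R')=1<\rho(F)$, whereas the paper argues that $r$, being confined to $\CB$, never occupies a possible target point of $F$. Your closing worry that the radial direction of $m_1$ must be chosen carefully is a non-issue that contradicts your own (correct) observation: since every other robot lies at radius at least $\delta(C_{\uparrow}^{2}(R))>\delta(\CB)$, no direction can create a rotation, and indeed the paper simply moves $r$ toward \emph{any} point at distance $\delta(\CB)$ from $c(R)$.
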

\begin{proof}
In this task  $\xc=\true$, which means there is exactly one robot $r$ inside $\CB$ that must be moved. Robot $r$ is moved toward any point at distance $\delta(\CB)$ from $c(R)$. Hence $C(R)$ cannot change. If the robot does not reach its target, move $m_1$ is repeatedly applied to $r$ until a point on $\CB$ is reached by the robot. Then,  if $r$ does not occupy $c(r)$ its trajectory is radial. 

 \begin{itemize}
 \item[$\h_2$:] since $\xc=\true$, a single robot is in $\Int(\CB)$ and then the configuration admits symmetricity equal to one along all the movement of $r$, that is no unsolvable configurations are generated. 
\item[$\h_3$:] when $r$ reaches its target (possibly after applying move $m_1$ many times) all the variables remain unchanged except $\xc$ that becomes false. 
In particular, $\xw=\false$ as the moving robot remains confined on $\CB$, that is it has not reached a possible target point of $F$, regardless the embedding; $\neg \xm \wedge \xp$ remains false as neither robots on $C(R)$ nor robots in $\Ann$ moved and $r$ has not reached a possible target point of $F$; $\xa$ remains unchanged as robots in $\Ann$ are not moved; $\xduno$, $\xddue$ and $\xu$ remain unchanged as robots on $C(R)$ are not moved. We can then conclude that the final configuration can be only in $T_2$, $T_3$, $T_4$,  $T_5$, or $T_6$. 
\item[$\h_{3'}$:] the reached configuration is stationary as the only moving robot is $r$ and no other robot moves as all the variables remain unchanged during the movement. The self-loop is instead almost-stationary as the moving robot will be moved along the same trajectory until reaching $\CB$.
\item[$\h_{3''}$:] collisions cannot occur being $r$ the only robot inside $\CB$.
\item[$\h_4$:] the repeated application of $m_1$ eventually ends as the distance of $r$ to its target reduces.
\end{itemize}
\end{proof}
We are now ready to state the correctness of the algorithm.

\begin{theorem}[Correctness]\label{th:correctness}
Let $R$ be an initial configuration of \async robots with chirality, and $F$ be  any pattern (possibly with multiplicities) with $|F|=|R|$. Then, there exists an algorithm able to solve the Pattern Formation problem if and only if $\rho(R)$ divides $\rho(F)$.
\end{theorem}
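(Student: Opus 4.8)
The statement is a biconditional, so the plan is to treat the two implications separately. For necessity (the ``only if'' direction) I would argue by reduction to the known \fsync characterization: if some algorithm solves \pf for \async robots with chirality on the pair $(R,F)$, then by the synchronization hierarchy recalled in Section~\ref{sec:LCMmodel} the very same algorithm also solves \pf for \fsync robots, and Theorem~\ref{th:rho} then forces $\rho(R)$ to divide $\rho(F)$. This direction is short and relies entirely on the Suzuki--Yamashita result.

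For sufficiency I would exhibit the algorithm $\A$ of Sections~\ref{sec:algorithm}--\ref{sec:example} and prove that it forms $F$ from every $R\in\I$ with $\rho(R)\mid\rho(F)$; as a by-product this yields the exact characterization $\U(F)\cap\I=\{R:\rho(R)\nmid\rho(F)\}$ anticipated after Theorem~\ref{th:rho}. Correctness is reduced to the four properties $\h_1,\dots,\h_4$ of Theorem~\ref{claim:correctness}. Property $\h_1$ is immediate from Remark~\ref{rem:pre-i}, since every predicate $P_i$ is built from the preconditions of Table~\ref{tab:tasks-bis} through Equation~\ref{eq:2}. Properties $\h_2$, $\h_3$ (together with the optional $\h_{3'},\h_{3''}$) and the self-loop part of $\h_4$ are obtained by collecting the per-task lemmas proved above for $T_1,\ldots,T_{10}$ (supported by Lemma~\ref{lem:divM}, Lemma~\ref{lem:GoCorrectness}, Lemma~\ref{lem:corr-DistMin} and Corollary~\ref{cor:circle}): each lemma certifies that, inside its task, no configuration of $\U(F)$ is produced, that the outgoing transitions are exactly the edges leaving that node in the transition graph of Figure~\ref{fig:transitions2}, and that the self-loop is traversed finitely often.

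The only ingredient not already contained in the per-task lemmas, and the step I expect to be the crux, is the part of $\h_4$ concerning cycles of $G$ that are \emph{not} self-loops. Here I would follow the hitting-set strategy of Section~\ref{ssec:cycles}. Inspecting Figure~\ref{fig:transitions2}, every non-trivial simple cycle --- namely $(T_2,T_3)$, $(T_2,T_4)$, $(T_2,T_6,T_3)$ and $(T_2,T_4,T_6,T_3)$ --- passes through $T_2$ and uses only the \RS tasks $T_2,T_3,T_4,T_6$; hence $\{T_2\}$ is a hitting set and it suffices to bound the number of times $T_2$ is entered. I would exhibit the monotone measure $|\partial C(R)|$: neither $m_2$ nor $m_6$ changes it, while $m_3,m_4$ move (non-critical) robots off $C(R)$ toward $\CT$ keeping $C(R)$ unchanged and thus strictly decrease it whenever a removal completes; crucially $T_5$, the only task that increases $|\partial C(R)|$, lies on no cycle containing $T_2$. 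Since $|\partial C(R)|\le n$ is a nonnegative integer, it can decrease only finitely often, so $T_2$ is entered finitely often and all cycles (simple ones and their compositions) are traversed finitely.

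The main obstacle inside this last step is the asynchronous interplay between $T_2$ and $T_3$, whose transitions are \emph{unclassified} (the bold arrows in Figure~\ref{fig:transitions2}) because of pending moves. I must ensure that a robot stopped by the adversary and re-triggering $T_2$ never restores a robot onto $C(R)$ nor undoes a completed removal; this is exactly what the robustness of the $T_4\to T_2$ transition and the invariants of Lemma~\ref{lem:GoCorrectness} (``$\Ann$ is eventually emptied while $\rho$ keeps dividing $\rho(F)$ and no multiplicity appears'') are there to guarantee, so that $|\partial C(R)|$ is genuinely non-increasing despite outdated snapshots. Once all four properties hold, I would close the argument as in Theorem~\ref{claim:correctness}: by Remark~\ref{remark:nosink} the only sink of $G$ is $T_{11}$, so every execution started from $R\in\I$ with $\rho(R)\mid\rho(F)$ reaches $T_{11}$ in finite time with $F$ formed and then freezes, which proves solvability and, together with necessity, the stated equivalence.
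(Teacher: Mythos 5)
Your proposal is correct and follows essentially the same route as the paper's own proof: necessity via Theorem~\ref{th:rho} (through the \fsync/\async hierarchy), and sufficiency by invoking Theorem~\ref{claim:correctness}, establishing $\h_1$ via Remark~\ref{rem:pre-i}, $\h_2$, $\h_3$ and the self-loop part of $\h_4$ via the per-task lemmata, and the remaining cycles of $\h_4$ by noting that every simple cycle of Figure~\ref{fig:transitions2} passes through $T_2$ and that $|\partial C(R)|$ strictly decreases under $m_3$ and $m_4$ while no task on those cycles increases it. Your explicit remarks on $T_5$ and on the unclassified $T_2$/$T_3$ transitions only spell out details the paper leaves implicit or delegates to Lemmata~\ref{lem:corr-T3}, \ref{lem:corr-T2} and~\ref{lem:GoCorrectness}.
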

\begin{proof}
($\Longrightarrow$) This is the case in which $\rho(R)$ does not divide $\rho(F)$. By Theorem~\ref{th:rho}, $F$ is not formable from $R$.

($\Longleftarrow$) 
According to Claim~\ref{claim:correctness}, it is sufficient to show that the provided algorithm fulfills all properties $\h_1,\ldots,\h_4$.
Concerning property $\h_1$, we have already pointed out at the beginning of this section that the tasks' predicates $P_1,P_2,\ldots,P_{11}$ used by the algorithm have been defined as suggested by Equation~\ref{eq:predicates}; then, according to Remark~\ref{rem:pre-i}, $\h_1$ holds. 
By Lemmata~\ref{lem:corr-T10}-\ref{lem:corr-T1} we have that both $\h_2$ (i.e., no unsolvable configurations are created) and $\h_3$ (i.e., the transition graph is exactly that represented in Figure~\ref{fig:transitions2}) are true. In order to conclude the proof, we need to prove property $\h_4$.
By Lemmata~\ref{lem:corr-T10}-\ref{lem:corr-T1} we have that self-loops are executed a finite number of times. According to the methodology proposed in Section~\ref{ssec:cycles}, we can focus on the simple cycles contained in the transition graph shown in Figure~\ref{fig:transitions2},
they are: $(T_2,T_{3})$, $(T_2,T_{4})$, $(T_2,T_6,T_{3})$, $(T_2,T_{4},T_6,T_{3})$. 

Considering node $T_2$, which belongs to all such simple cycles, we now show it can be entered a limited number of times. In particular, concerning the nodes involved in the simple cycles, $T_2$ can be reached from $T_3$ and $T_4$ by means of moves $m_3$ and $m_4$, respectively. Actually, both moves decrease $\partial C(R)$ of at least one robot. Since none of the involved tasks in the cycles increases $\partial C(R)$, then any cycle involving $T_2$ can occur a finite number of times.
\end{proof}

\section{Conclusion}\label{sec:concl}
We have introduced a new methodology to tackle with distributed computing by mobile robots. The aim is to simplify both the design of the resolution algorithms and the writing of the required correctness proofs. 
In order to better explain the potentials of the methodology, we have considered the PF problem approached in~\cite{FYOKY15} as case study. On the one hand the resolution of PF along with the proposal of a new strategy allow to appreciate all facets arising by the new methodology. On the other hand, this work finally characterizes when PF can be solved by means of \async robots empowered with chirality.

Our new methodology opens a wide range of research for both proposing new resolution algorithms for tasks in distributed computing by mobile robots, and double-checking the correctness of existing ones by re-formulating/re-designing them, accordingly.
It can easily happen, in fact, that some very special cases that may occur while running an algorithm are instead neglected in the analysis of the correctness, mainly due to the intrinsic difficult to deal with asynchronous robots. Such scenarios are much more easy to be detected if the evolutions performed by the algorithms are related and constrained to formal logic predicates.


\bibliographystyle{splncs04}
\bibliography{../../global_references}

\end{document}